\definecolor{block-gray}{gray}{0.85}
\newtcolorbox{shadequote}{colback=block-gray,grow to right by=-10mm,grow to left by=-10mm,
boxrule=0pt,boxsep=0pt,breakable}
\newtheorem{theorem}{Theorem}%
\newtheorem{lemma}[theorem]{Lemma}
\theoremstyle{remark}
\newtheorem{remark}{Remark}
\DeclareFontFamily{U} {MnSymbolA}{}
\DeclareFontShape{U}{MnSymbolA}{m}{n}{
  <-6> MnSymbolA5
  <6-7> MnSymbolA6
  <7-8> MnSymbolA7
  <8-9> MnSymbolA8
  <9-10> MnSymbolA9
  <10-12> MnSymbolA10
  <12-> MnSymbolA12}{}
\DeclareFontShape{U}{MnSymbolA}{b}{n}{
  <-6> MnSymbolA-Bold5
  <6-7> MnSymbolA-Bold6
  <7-8> MnSymbolA-Bold7
  <8-9> MnSymbolA-Bold8
  <9-10> MnSymbolA-Bold9
  <10-12> MnSymbolA-Bold10
  <12-> MnSymbolA-Bold12}{}
\DeclareSymbolFont{MnSyA} {U} {MnSymbolA}{m}{n}
\DeclareMathSymbol{\rcurvearrowright}{\mathbin}{MnSyA}{192}
\DeclareMathSymbol{\curvearrowright}{\mathbin}{MnSyA}{184}
\DeclareMathSymbol{\clockwise}{\mathbin}{MnSyA}{255}
\DeclareMathSymbol{\counterclockwise}{\mathbin}{MnSyA}{251}
\algnewcommand{\IIf}[1]{\State\algorithmicif\ #1\ \algorithmicthen}
\algnewcommand{\EndIIf}{\unskip\ \algorithmicend\ \algorithmicif}
\newcommand{\IndState}[1][1]{\State\hspace{#1\dimexpr\algorithmicindent}}
\algnewcommand\Assert[1]{\State \textbf{assert}(#1)}%
\newcommand{\ignore}[1]{}
\DeclareMathOperator*{\argmin}{arg\,min}
\newcommand{\CC}{\mathsf{CC}}
\newenvironment{theorem-repeat}[1]{\begin{trivlist}
\item[\hspace{\labelsep}{\bf\noindent Theorem \ref{#1} }]\em }%
{\end{trivlist}}
\newcommand{\mytime}{\mathsf{time}}
\newcommand{\ALGone}{Algorithm~1\xspace}
\newcommand{\ALGthree}{Algorithm~3\xspace}
\newcommand{\ALGfour}{Algorithm~4\xspace}
\newcommand{\CCinit}{\CC_{\mathsf{init}}}
\newcommand{\CCm}{\CC_{\mathsf{overhead}}}
\newcommand{\isTokenHolder}{\mathit{isTokenHolder}}
\newcommand{\Prev}{\mathit{prev}}
\newcommand{\Next}{\mathit{next}}
\newcommand{\Cycle}{\mathit{cycle}}
\newcommand{\data}{\textsc{data}\xspace}
\newcommand{\meta}{\textsc{end}\xspace}
\newcommand{\token}{\textsc{token}\xspace}
\newcommand{\request}{\textsc{request}\xspace}
\newcommand{\DATA}{\data}
\newcommand{\END}{\meta}
\newcommand{\TOKEN}{\token}
\newcommand{\REQ}{\request}
\begin{document}

\title{Distributed Computations in Fully-Defective Networks}
\thanks{e-mails: \texttt{\{ckeren,shirco,galy\}@cs.technion.ac.il, ran.gelles@biu.ac.il}}

\author{Keren Censor-Hillel}
\affiliation{%
  \institution{Technion}
  \city{Haifa}
  \country{Israel}
}
\email{ckeren@cs.technion.ac.il}

\author{Shir Cohen}
\affiliation{%
  \institution{Technion}
  \city{Haifa}
  \country{Israel}
}
\email{.}

\author{Ran Gelles}
\affiliation{%
  \institution{%
  Bar-Ilan University}
  \city{Ramat-Gan}
  \country{Israel}
}
\email{ran.gelles@biu.ac.il}

\author{Gal Sela}
\affiliation{%
  \institution{Technion}
  \city{Haifa}
  \country{Israel}
}
\email{.}

\begin{abstract}
We address \emph{fully-defective} asynchronous networks, in which all links are subject to an unlimited number of alteration errors, implying that \emph{all} messages in the network may be completely corrupted. Despite the possible intuition that such a setting is too harsh for any reliable communication, we show how to simulate any algorithm for a noiseless setting over any fully-defective setting, given that the network is 2-edge connected. We prove that if the network is not 2-edge connected, no non-trivial computation in the fully-defective setting is possible.

The key structural property of 2-edge-connected graphs that we leverage is the existence of an oriented (non-simple) cycle that goes through all nodes [Robbins, 1939]. The core of our technical contribution is presenting a construction of such a Robbins cycle in fully-defective networks, and showing how to communicate over it despite total message corruption. These are obtained in a \emph{content-oblivious} manner, since nodes must ignore the
content of received messages.
\end{abstract}

\maketitle

\hrule 
~

\linepenalty=8000

\section{Introduction}
\label{section:intro}
Faults are a main hurdle in a large variety of distributed systems.
Faults manifest themselves in several different manners, ranging from nodes that crash due to malfunctions to environmental disruptions that affect the communication channels connecting distant nodes. In the last few decades, research has focused on developing \emph{fault-tolerant} algorithms, as nodes crashes and channel noise are utterly inevitable.
See, e.g., recent books and surveys on fault-tolerant systems~\cite{Dubrova13,KK20} and algorithms~\cite{BDM93,Raynal18}, and references within.

In this work, we consider the case of channel noise within asynchronous distributed networks, where messages  communicated between nodes are subject to corruption. 
When dealing with channel noise, some restrictions must be imposed on its power. 
Clearly, if noise can affect channels arbitrarily without any restrictions, then it could, for instance, delete all the communication and prevent any non-trivial computation over the network. 
Previous work either limited the \emph{number} of channels that may suffer (arbitrary) noise~\cite{Dolev82,SW90,Pelc92,SAA95, HitronP21broadcast, HitronP21general} or the \emph{total amount} of corruptions (usually, alterations) the channels are allowed to make altogether~\cite{HS16,CGH19,GKR19,ADHS20}. 

Throughout this work, we consider noisy channels that may arbitrarily change the \emph{content} of transmitted messages, but can neither delete nor inject messages. This is known in the literature as alteration noise. %
Yet, we do not bound the amount of noise nor the number of noisy channels in any way. That is, we ask the following question:\\[-2.25ex]
\begin{shadequote}
Can one design fault-tolerant algorithms robust to an \emph{unlimited} amount of corruption on \emph{all} communication channels?
\end{shadequote}
On its surface, the above task seems doomed.
However, we answer the question in the affirmative for the large family of 2-edge-connected networks. We further show that if the network is not 2-edge-connected, the noise can destroy any non-trivial computation. 

Towards this goal, we develop \emph{content-oblivious} algorithms, that is, algorithms that do not rely on the \emph{content} of communicated messages~\cite{CGH19}. Instead, the actions of a node depend on the specific links and the order in which messages are received. 
In particular, we devise a method that compiles \emph{any} distributed algorithm into a content-oblivious version that computes the same task over 2-edge-connected graphs.

A folklore approach (see, e.g., in~\cite{JKL15,CGH19}) is to send a message along a certain path from~$u$ to~$v$ to signify a 0 bit, and to send a message along a different path to signify a 1 bit, where the existence of two different paths is promised by the 2-edge-connectivity property.
This approach conceals many challenges. First, the edges along these two paths are also edges in paths between other nodes in the network, and so the nodes must somehow be able to associate each such ``bit'' with its correct origin, in order to be able to decode each original piece of information and avoid mixing up bits of different ones. 
Second, in order to know where to forward the message to, the nodes need to extract the sender/receiver information from these ``bit'' messages, yet those might be fully corrupted.
Third, some guarantee needs to be obtained on the order in which different 0/1 ``bits'' arrive at their destination, in order for them to faithfully represent the encoded message, a caveat on which the asynchrony of the network imposes another obstacle. 

Before elaborating on how we overcome all these issues and stating our main results, let us explain our setting and noise model in more detail.
We abstract the network as a graph $G=(V,E)$ where every node~$v\in V$  is a computing device and every edge~$e\in E$ is a noisy bi-directional communication channel. 
Once $u$ sends a message~$m$ over some link~$(u,v)$, the channel guarantees that after some arbitrary yet finite time, $v$~receives some message $m'\in\{0,1\}^+$. Note that $m'$ may or may not equal~$m$. %
In other words, the noise over the channel can corrupt the content of any transmitted~$m$ into any~$m'$, but it cannot completely delete it, nor can it inject new messages. 
We say that~$G$ is a \emph{fully-defective} network if all its channels are noisy in the above manner.

\subsection{Our contribution and techniques}
\label{subsec:contributions}

\smallskip\textit{\textbf{Intuition: Content-oblivious encoding with parallel channels.}} 
Let us begin with a simple toy example that illustrates some of our techniques.
Suppose $u$ and $v$ are directly connected by \emph{two separate noisy channels}, which we name $\data$ and $\meta$. 
The basic idea is to communicate the information over the $\data$ channel by sending, ``bit-by-bit'', a unary encoding of the original message. 
In order to communicate the end of the unary encoding, a single message is sent on the $\meta$~channel.
Note, however, that timing is crucial: if the message sent on~$\meta$  is received before all the messages sent on~$\data$ reach their destination, the receiver decodes incorrect information. 
To avoid this confusion, the receiver sends one message over~$\meta$ as an acknowledgment for each received $\data$ message. The sender waits until all its $\data$ messages are acknowledged \emph{and only then} sends the termination message on~$\meta$.
Sending the terminating $\meta$ message has an additional effect: it switches the roles of the nodes. If $u$ is the sender, then after sending the $\meta$ message it takes the role of the receiver and vice versa. We call the sender at each point the \emph{token holder}.

\smallskip\textit{\textbf{Main result.}}
Since we do not wish to assume two separate channels between any two nodes, we ask whether they can be replaced with two separate \emph{paths} between any two nodes, i.e., can we constitute reliable communication between any two neighbors in 2-edge-connected graphs?

We answer this question affirmatively and show a method that takes any asynchronous message-passing distributed algorithm~$\pi$ for a noiseless network~$G$, and simulates it  over the fully-defective~$G$, given that $G$ is 2-edge connected. 
By \emph{simulating} we mean that every node has a black-box interface to~$\pi$ through which the node can deliver messages to~$\pi$ and (asynchronously) receive messages to be communicated to some neighbor. 
The simulation guarantees that,
at any given moment,
all the nodes behave similarly to some valid  execution of~$\pi$ over the noiseless~$G$.
\begin{theorem}[main, informal]
\label{thm:main-inf}
There exists a %
simulator for any asynchronous algorithm~$\pi$ such that
executing the simulator over a 2-edge connected fully-defective network~$G$ simulates an execution of~$\pi$ over the noiseless network~$G$.
\end{theorem}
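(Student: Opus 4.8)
The plan is to build the simulator in two layers. In the \emph{setup} layer the nodes jointly construct, once and for all, a global cyclic backbone spanning the whole network; in the \emph{simulation} layer each node runs its own black-box copy of $\pi$ and uses the backbone to carry $\pi$'s messages to their destinations, so that the message-delivery interface $\pi$ expects is faithfully provided even though every channel is fully corrupting. The structural fact that makes a backbone possible is Robbins' theorem (1939): a 2-edge-connected graph admits a closed walk that visits every vertex --- a ``Robbins cycle'' $C$ --- equivalently, a strongly connected orientation in which each edge may be traversed in both directions.

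\textbf{Setup: a content-oblivious Robbins cycle.} I would construct $C$ by a distributed depth-first traversal: a single control token is pushed from an initiating node along not-yet-used incident edges, and whenever it reaches a dead end it is routed back toward the current branch along a \emph{second, edge-disjoint} path --- whose existence is exactly what 2-edge-connectivity guarantees --- so the walk actually traced is closed rather than a mere DFS tree. Each node records, for every time the token passes through it, the edge by which it entered and the edge by which it left; this is its local view of $C$. The obstacle --- which I expect to be the crux of the whole proof --- is that nodes cannot read message contents: a node learns only \emph{which} incident link delivered a message and in \emph{what order}. The handful of control decisions a DFS needs (``descend this edge'', ``this is a back edge, retreat'', ``this branch is finished'', ``the traversal is complete'') must therefore be encoded in the \emph{pattern} of exchanges rather than in any bits --- e.g.\ by how many times a message is bounced between two endpoints before a \data/\meta-style acknowledgement handshake closes the exchange, exactly as in the two-channel toy example. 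One then has to prove that this encoding stays unambiguous under asynchrony and under the fact that every edge is reused many times during the traversal, and that the process terminates with all nodes holding mutually consistent local views of $C$.

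\textbf{Simulation: running $\pi$ on the backbone.} Once $C$ is fixed, the (in-edge, out-edge) pairs recorded at the nodes turn $C$ into a virtual directed ring whose links are the original, still fully corrupting, edges of $G$. On this ring I would lift the toy-example protocol: a single \token circulates around $C$, and the node currently holding it is the unique node allowed to originate communication. When a node holds the \token it takes the next message $\pi$ wishes to send, ships it around the ring as a unary stream of \data pulses, waits for each pulse to be acknowledged (e.g.\ by looping all the way back to it) before emitting the next, and finally sends a terminating \meta pulse that both delimits the message and hands the \token onward; every node relays pulses as they arrive, so a message is carried to the $G$-neighbor for which $\pi$ intended it. The three difficulties flagged in the introduction --- associating each ``bit'' with the right message, recovering the sender/receiver of a fully corrupted message, and forcing the bits to arrive in the right order --- are neutralized here by the combination of the single-token discipline (at most one conversation is ever live, so nothing interleaves), the unary encoding, and the per-pulse acknowledgement.

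\textbf{Correctness.} Finally I would verify the guarantee in the statement by a straightforward invariant: at every point of any asynchronous schedule, the messages each node has so far delivered to its copy of $\pi$, together with the pulses currently in transit on the ring (each corresponding to one undelivered $\pi$-message), form the message set of some reachable configuration of a legal execution of $\pi$ over the noiseless $G$; combined with liveness of the \token hand-off (it never stalls, and every message $\pi$ queues is eventually delivered) this yields that at every moment the nodes' states are consistent with a valid execution of $\pi$. The conceptual weight lies entirely in the setup layer; once a single reliable pulse-with-acknowledgement primitive on the ring is available, the simulation layer is essentially bookkeeping.
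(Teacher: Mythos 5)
Your high-level plan---build a cyclic backbone through all nodes, then run a single-token \data/\meta-style protocol on it---matches the paper's architecture, and the pieces you sketch (token discipline, unary encoding, direction-as-meaning, content-oblivious DFS) are all present in the paper. But there is a genuine gap at the heart of the simulation layer: you treat the backbone as ``a virtual directed ring,'' i.e.\ as if it were a \emph{simple} cycle, whereas a Robbins cycle in a general 2-edge-connected graph is a non-simple closed walk in which a node (and even an edge-endpoint pair) can occur several times. This is not a bookkeeping detail but the central difficulty of the whole construction. When a pulse arrives at a node $u$ over an edge $(w,u)$ that $u$ has several times on $C$, $u$ must decide \emph{which occurrence} received it and hence over \emph{which} outgoing edge to forward it; with fully corrupted contents the pulse itself carries no hint. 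The paper solves this with two dedicated mechanisms: (i) every node tracks, and rotates upon each \token pulse, its estimate of which of its segments currently contains the token, so that \data/\meta/\token pulses --- which travel the cycle sequentially and one at a time --- can always be attributed to the right occurrence; and (ii) \request pulses, which \emph{cannot} be attributed (they may originate anywhere, simultaneously), are flooded to all clockwise neighbors, and the protocol is engineered so this coarse action is still safe. Your proposal has no analogue of either mechanism, and without them the ``single conversation is ever live'' invariant does not give a well-defined forwarding rule. A smaller but real error: Robbins' theorem yields an orientation where each edge is traversed in \emph{one fixed} direction, not ``in both directions'' as you write --- the whole scheme relies on each edge having a unique clockwise sense so that direction encodes pulse type.

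The setup layer also diverges from the paper in a way that is hard to assess from your sketch. You propose a single DFS that, upon dead ends, routes the token back via ``a second, edge-disjoint path''; it is unclear that this produces a closed walk hitting every vertex with a consistent edge orientation, and you give no account of how nodes agree on when the walk is complete or how to resume if the DFS gets stuck at a degree-3 node. The paper instead follows Whitney's ear decomposition: first build a \emph{simple} cycle $C_0$ rooted at a designated node (backtracking edges that revisit non-root nodes), then repeatedly grow $C_i$ by an ear $E_i$, where the crucial point is a bootstrap --- the already-constructed $C_i$ is itself used as a reliable communication medium (via the simulation layer of Section~4) to coordinate electing the next root, disseminating node IDs, and synchronizing the switch from $C_i$ to $C_{i+1}$. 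This mutual dependency between the two layers (the construction needs the simulator on partial cycles; the simulator needs a cycle) is a structural feature of the proof that your two-layer account, which presents setup as strictly preceding simulation, misses.
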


Once we establish that such a simulator even exists, a natural question is, what is the best that could be aimed for in terms of its message overhead? 
To avoid excessive clutter in the presentation, 
we delay the complete statement of our main theorem that includes its overhead, to Theorem~\ref{thm:main} at the end of this section.

\smallskip\textit{\textbf{Warm-up: Resilient computations over a simple cycle.}}
To describe our approach for proving Theorem~\ref{thm:main-inf}, we begin with the much simpler case of \emph{cycle graphs}. 
In a cycle graph, any node $u$ 
is connected to only two neighbors. 
Our goal is to simulate $u$'s communication with its two neighbors over a fully-defective cycle.
In this special case, every two neighbors have exactly two separate paths between them: the direct link, and the rest of the cycle. %
The difference from the two-channel toy example illustrated above is that the paths of each two certain neighbors intersect the paths of other neighbors and we need to coordinate between the nodes so that each message reaches its correct destination and is interpreted correctly. 

We address this difficulty by guaranteeing that only a single node is the sender (token holder) at any given time. All the other nodes are passive and only forward messages along the cycle. In this way, the sender can communicate with its neighbor using both paths of the cycle---one of them, say, the clockwise path, replaces  the $\data$ channel, and the other one, the counterclockwise path, replaces the $\meta$ channel.

In fact, $u$ can use the same method to communicate with any other node on the cycle, since all the nodes see the same sequence of clockwise and counterclockwise messages. In a sense, $u$~broadcasts information over the cycle, and all the nodes learn this information. %

Next, we design a method to change the roles such that another node may become the token holder, i.e., the sender: After forwarding the message initiated by the sender in the $\meta$ path, the nodes enter a \emph{token delivery} phase.
During this phase, a counterclockwise~$\token$ message, initiated by the previous token holder, is forwarded along the cycle. When it reaches a node, if the node does not have a message to send, then it forwards the token along by propagating it counterclockwise. Otherwise, it becomes the new token holder and initiates \emph{a clockwise} message forwarded along the entire cycle which denotes the end of the token phase, so all the nodes go back to the stage of interpreting messages as $\data$ and $\meta$.

The above method has one significant drawback. %
If it takes some time for the nodes to produce a message to send, 
then the $\token$ message will keep circulating in the cycle, causing many superfluous transmissions. 
We circumvent this situation of wasteful transmissions by introducing a request mechanism: the token transfer is performed only if some node issues a request, which is done by sending a clockwise $\request$ message. The requesting node can be far away from the current token holder, thus, each node, upon receiving a $\request$ message, propagates it clockwise. We note that several nodes might issue a request at the same time at different locations on the cycle. 
Eventually, all nodes will have sent and received a $\request$ message, and after it reaches the current token holder, it issues the counterclockwise $\token$ message described above. 

This simulator for simple cycles, in which messages are interpreted as $\DATA$, $\END$, $\TOKEN$ or $\REQ$ based only on their direction and order of transmissions, is formally given and proved in Section~\ref{sec:cycle}.

\smallskip\textit{\textbf{Main result: Resilient computations over 2-edge-connected graphs.}}
To apply our approach for simple cycles to more complex graphs, we mimic it over a (not necessarily simple) cycle that goes through all the graph nodes. 
Such a cycle needs to be chosen carefully, because of the crucial role that the direction of messages plays in our approach.
Robbins's theorem~\cite{robbins39} states that any 2-edge-connected graph~$G$ is \emph{orientable}. That is, there exists a way to orient all edges in~$G$ so that the implied directed graph is strongly connected. 
This implies that there exists a cycle that goes through all nodes, possibly with multiple occurrences of some of the nodes, where all instances of any edge along the cycle bear the same orientation. We leverage the existence of such a Robbins cycle by mimicking our approach for the simple cycle over the Robbins cycle. To this end, we must first construct a Robbins cycle, as the nodes are unaware of the topology of the network. Then, we need to communicate over the Robbins cycle. Both steps are highly non-trivial and pose many challenges, as we now describe. During the first step---the construction of a Robbins cycle---we need the nodes to start communicating over partial pieces of the cycle (which are cycles by themselves), for which we need to already use the second step. For this reason, we describe the two steps in reverse order: we begin with describing the second step of communicating over a non-simple cycle, given that each node knows its previous and next neighbors along the cycle for each of its occurrences (\Cref{sec:GeneralCycle}). Then, we show the first step of how to construct the Robbins cycle and produce this information (\Cref{section:robbins}).

\smallskip\textit{\textbf{Second step: Communicating over a non-simple cycle.}}
The input of each node for this step, as will be guaranteed by our construction for the first step, is the previous and next nodes along the cycle for each of its occurrences. These inputs are consistent with some Robbins cycle, so that, in particular, each edge in the cycle has a unique orientation, and a single orientation of the edges is considered by all nodes as the \emph{clockwise} direction.

Mimicking our approach for a simple cycle over a Robbins cycle brings along several challenges. Consider, for instance, the network~$G$ and its induced Robbins cycle depicted in Figure~\ref{fig:RobbinsCycle}.  
Suppose a clockwise message is received at node~$d$ along the edge $(c,d)$. Should this message be propagated over the edge~$(d,e)$ or over the edge~$(d,a)$, or maybe over both? 
Note that both these options are in the clockwise direction, however, they belong to different segments of the cycle. Further, note that some messages are initiated in an asynchronous manner, e.g., the $\request$ message. Thus, when the node~$d$ receives a $\request$ message from node~$c$, it is possible that the request originated at node~$a$ and should be propagated to node~$e$ or it originated at node~$e$ and should be propagated to node~$a$.

\begin{SCfigure}[0.51][ht]
    \centering
    %

\quad\quad
\begin{tikzpicture}

\def \n {4}
\def \radius {1.3cm}
\def \margin {10} 

\foreach \s in {1,...,\n}
{
  \node[draw, circle] (v\s) at ({360/\n * (\s - 1)}:\radius) {\phantom{$v_\s$}};
}
\node[draw,circle] (v5)  {\phantom {$v_5$}};

\draw[->,>=latex] (v1) -- (v2);
\draw[->,>=latex] (v3) -- (v2);
\draw[->,>=latex] (v4) -- (v1);
\draw[->,>=latex] (v4) -- (v3);

\node at (v3) {$a$};
\node at (v2) {$b$};
\node at (v5) {$c$};
\node at (v4) {$d$};
\node at (v1) {$e$};

\draw[->,>=latex] (v2) -- (v5);
\draw[->,>=latex] (v5) -- (v4);

\end{tikzpicture}
~~~~~~~\phantom{XXXXXX}~~~~~~~
\begin{tikzpicture}

\def \n {8}
\def \radius {1.35cm}
\def \margin {13} 
\def \innerfrac {0.65}

\foreach \s in {1,...,\n}
{
  \node[draw, circle] (v\s) at ({360/\n * (\s - 1)}:\radius) {\phantom{$v_{\s}$}};
  \draw[<-, >=latex] ({360/\n * (\s - 1)+\margin}:\radius) 
    arc ({360/\n * (\s - 1)+\margin}:{360/\n * (\s)-\margin}:\radius);
}

\node at (v1) {$c$};
\node at (v2) {$b$};
\node at (v3) {$a$};
\node at (v4) {$d$};
\node at (v5) {$c$};
\node at (v6) {$b$};
\node at (v7) {$e$};
\node at (v8) {$d$};

\end{tikzpicture}
    \caption{(a) A 2-edge-connected graph~$G$ with a Robbins orientation and (b) the resulting Robbins cycle with multiple occurrences per node. The arrows denote the clockwise direction of the cycle. 
    \\ \\ }
    \label{fig:RobbinsCycle}
\end{SCfigure}
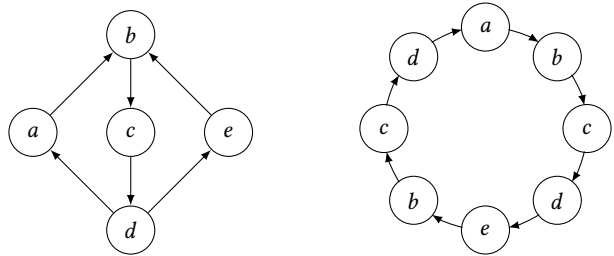

We cope with these issues using two separate mechanisms. The first mechanism makes sure that the $\token$, $\data$, and $\meta$ messages are propagated correctly along the Robbins cycle. This mechanism consists of two main ingredients.
First, we guarantee that these three message types are forwarded in a sequential manner, in the sense that the token holder issues the next message among them only after receiving the previous one from the other direction of the cycle.
Second, we assure that at any given moment, each node~$u$  knows ``where the token is'', that is, on which segment of the cycle (i.e., between which two occurrences of~$u$) the token resides. 
Since the token holder is the only node to initiate the above three message types, knowing the relative position of the token holder resolves the above and allows each node to track each message along the Robbins cycle. Indeed, each such message must first arrive from the segment in which the token holder resides and then be propagated by~$u$ to the next segment in the respective direction of the cycle. %
We prove that since at any given moment only one message travels through the cycle, there can be no confusion at~$u$ regarding what the message type is, which one of $u$'s occurrences has received a message and where a message should be forwarded to.

The second mechanism we employ is for $\REQ$ messages. 
These have no pre-specified origin, and they can be initiated by any node and even by multiple nodes at the same time. The mechanism for these messages is as follows. Whenever a node receives a $\request$ message or when a node wishes to initiate one, it sends a clockwise message to \emph{all} of its clockwise neighbors along the Robbins cycle at the same time. Then, the node waits to receive a $\request$ message from each of its counterclockwise neighbors \emph{and only then} it continues with executing the cycle algorithm described above. We prove that this guarantees that {all nodes} send and receive a request message \emph{regardless} of their position(s) on the Robbins cycle.

\smallskip\textit{\textbf{First step: A content-oblivious construction of a Robbins cycle.}}
Our Robbins cycle construction follows an \emph{ear-decomposition} technique 
by Whitney~\cite{Whitney32}, 
claiming that 
any 2-edge-connected graph $G$ can be decomposed into edge-disjoint parts, 
\(
G=C_0 \cup E_0 \cup E_1 \cup \cdots \cup E_k,
\)
where $C_0$ is a simple cycle, and for any $0 \leq i \leq k$, $E_i$ is an \emph{ear}---a simple path or cycle whose endpoints belong to $C_0 \cup E_0 \cup \cdots \cup E_{i-1}$. 
Following Whitney's work, we iteratively decompose a 2-edge-connected graph~$G$ into some $C_0,E_0,\ldots,E_k$, part by part, and combine them into a Robbins cycle.  
The main obstacle we face is that our construction must be content-oblivious and 
cannot rely on the content of messages sent by the nodes. 

The first stage of our construction is performing a DFS-like search starting from a specified \emph{root} node.
The DFS search progresses by sending a message (\emph{a DFS-token}) sequentially, i.e., each node propagates this message to one of its unexplored adjacent edges.
This DFS-token message propagates through the network until it reaches the root node again. 
At this stage, the path the DFS-token has taken defines a cycle~$C_0$.  

The key challenge in this stage is that the DFS-token might reach some node~$u$ twice before reaching the root. This might cause the DFS to ``get stuck'', e.g., if $deg(u)=3$. We overcome this pitfall by insisting on  $C_0$ being a simple cycle that starts and closes at the root. If some $u\ne root$ receives the DFS-token for the second time, it sends that message back on the same edge on which it was received. This has the effect of ``backtracking'' that edge so it is excluded from the constructed cycle. Nodes that backtrack all their adjacent edges go back to their initial state and are added to the Robbins cycle at a later step.

Once $C_0$ is established, the nodes on it switch to the second stage, in which they use our resilient communication approach of the above second step, in order to coordinate exploring further ears. One node on~$C_0$ that has adjacent edges that do not belong to~$C_0$ gets selected to initiate another DFS-like search, which again propagates in~$G$ until reaching a node on~$C_0$, possibly different from the initiator. The path the DFS-token takes defines the ear~$E_0$. 
Then, the nodes on~$C_0$ and~$E_0$ jointly coordinate to form a new non-simple cycle~$C_1$ that includes all the edges in~$C_0$ and~$E_0$. %
The nodes on~$C_1$ switch to communicate over this cycle using the above resilient communication of the second step. 
The nodes iterate this process, until a Robbins cycle is formed. A crucial aspect of these iterations of adding ears is that much coordination is required among the nodes for switching in a timely manner from communication on $C_i$ to communication on $C_{i+1}$. The technical specification of this mechanism is given in \Cref{section:robbins}. 

We emphasize that the nodes do not know $|V|$, and hence they do not know when a Robbins cycle is already formed, i.e., when each node already appears on the current~$C_i$ at least once. 
Instead, they keep adding edges to the constructed cycle, until no node has an adjacent edge that is not in~$C_i$, which is a state they can detect. At that point, the construction ends.

\smallskip\textit{\textbf{Putting it all together.}}
With the above two steps, our result can now be formally stated. Given any 2-edge-connected fully-defective network~$G$ and an asynchronous algorithm~$\pi$ designed to work on the noiseless~$G$, 
we show how to compute $\pi$ over the fully defective~$G$ by first constructing a Robbins cycle~$C$ on~$G$ using a resilient content-oblivious algorithm, and then simulating $\pi$ over the Robbins cycle~$C$ in a resilient content-oblivious manner. 
\begin{theorem}[main]
\label{thm:main}
There exists a %
simulator for any asynchronous algorithm~$\pi$, such that
executing the simulator over a 2-edge connected fully-defective network~$G$ simulates an execution of~$\pi$ over the noiseless network~$G$.

The simulator has a pre-processing phase that construct a Robbins cycle~$C$ (which depends only on~$G$) and an online phase that simulates the communication of~$\pi$ over~$C$. 
The pre-processing step communicates $\CCinit=|C|^{O(1)}$ bits. In the online phase, any message~$m$ communicated by $\pi$ is simulated by communicating $\CCm(m) = O(|C|\cdot |m|+|C|\log |V|)$ bits. 
\end{theorem}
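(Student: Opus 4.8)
The plan is to \emph{compose} the two main components of the paper: the content-oblivious Robbins-cycle construction of \Cref{section:robbins} serves as the pre-processing phase, and the resilient non-simple-cycle simulator of \Cref{sec:GeneralCycle} serves as the online phase. Concretely, all nodes first run the construction of \Cref{section:robbins} from the designated root; since that construction is deterministic (fixed root and fixed tie-breaking in its DFS-like explorations), its output cycle $C$ depends only on $G$, and by its correctness every node terminates this phase holding, for each of its occurrences on $C$, the identities of its predecessor and successor along $C$ together with a globally consistent clockwise orientation --- exactly the input assumed by \Cref{sec:GeneralCycle}. The bound $\CCinit=|C|^{O(1)}$ is inherited from the complexity analysis of that construction (which itself already invokes the \Cref{sec:GeneralCycle} protocol on the intermediate cycles $C_i$ while adding ears one at a time). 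Once the phase ends, every node switches to the online phase and runs the simulator of \Cref{sec:GeneralCycle} on $C$.

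For correctness of the online phase, each node $u$ runs a black-box instance of $\pi$. When that instance wishes to send a message $m$ to a neighbor $w$ of $u$ in $G$, node $u$ first becomes the token holder via the $\request$ mechanism of \Cref{sec:GeneralCycle}, and then broadcasts over $C$, using the clockwise/counterclockwise ($\data$/$\meta$) encoding of \Cref{sec:GeneralCycle}, the string consisting of an $O(\log|V|)$-bit header naming the ordered pair $(u,w)$ followed by $m$ itself. By the guarantees of \Cref{sec:GeneralCycle}, every node receives precisely this string, in order, despite arbitrary corruption of the underlying transmissions; node $w$ feeds $m$ to its instance of $\pi$ on the link $(u,w)$, while every other node merely updates its bookkeeping of where the token currently resides. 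The invariants established in \Cref{sec:GeneralCycle} --- a unique token holder that emits its $\token$, $\data$, $\meta$ messages strictly sequentially, and each node knowing on which segment of $C$ the token lies --- rule out any ambiguity about a message's type, about which occurrence of a node received it, and about where it must be forwarded; the request flood, which notifies every occurrence's clockwise neighbors and then waits for a $\request$ from every counterclockwise neighbor, guarantees liveness even when several instances of $\pi$ have messages pending. Hence the simulated links behave as reliable FIFO channels, and a routine indistinguishability argument shows that, at every moment, the collection of deliver/send events observed at the nodes is consistent with some legal asynchronous execution of $\pi$ on the noiseless $G$.

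For the online overhead, simulating one message $m$ of $\pi$ costs: $O(|C|)$ bits to bring the token into place (the $\token$ message traverses $C$ once, and the $\request$ flood touches each of the at most $|C|$ cycle-edges $O(1)$ times with $O(1)$-bit messages); $O(|C|\cdot(|m|+\log|V|))$ bits to broadcast the header-and-$m$ string, since each of its $O(|m|+\log|V|)$ bits triggers $O(1)$ messages of $O(1)$ bits each, every one crossing at most $|C|$ edges of $C$; and $O(|C|)$ bits for the terminating $\meta$ message. Summing yields $\CCm(m)=O(|C|\cdot|m|+|C|\log|V|)$, as claimed.

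The step I expect to be the main obstacle is the correctness of the \emph{composition} across the phase boundary, rather than of either component in isolation. One must check that the pre-processing phase terminates at \emph{every} node and that nodes transition into the online phase consistently --- no node begins simulating $\pi$ while a neighbor is still exploring an ear --- and that the token-position invariant of \Cref{sec:GeneralCycle} holds the instant the online phase begins; one must also verify that the header bits identifying the sender and receiver of each $\pi$-message are decoded correctly, so that messages are delivered to the right node and on the right link. Everything beyond this is bookkeeping layered on the two results already proved.
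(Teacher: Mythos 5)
Your proposal is correct and follows the paper's own structure: Theorem~\ref{thm:main} is obtained by composing the Robbins-cycle construction of \Cref{section:robbins} (Theorem~\ref{thm:ED}, with the $O(n^8\log n)=|C|^{O(1)}$ pulse bound from \Cref{sec:robbins-cycle-length}) as pre-processing with the cycle simulator of \Cref{sec:GeneralCycle} (\Cref{thm:RobbinsCycle,lemma:g-cost-binary}) as the online phase, and your complexity accounting and correctness argument match the paper's. The one point you raise that the paper leaves implicit --- that the ``arbitrary'' choices in the DFS and in $\Pi_{\textsf{NextRoot}}$ must be made by a fixed deterministic rule for $C$ to depend only on $G$ --- is a legitimate (if minor) clarification, not a divergence in approach.
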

We note that, in the worst case, $|C|=O(|V|^3)$; see Section~\ref{sec:robbins-cycle-length} for a detailed  discussion.
We do not strive to optimize the polynomial overhead of our schemes, as their mere existence is the focus of this paper. Nevertheless, unary encoding as explained above imposes an exponential overhead in the length of the message. In 
\Cref{sec:binary}
we offer a binary encoding method that reduces the communication complexity to the polynomial terms stated above.

\smallskip\textit{\textbf{Impossibility result.}}
We complement the above result and show that if $G$ is not 2-edge-connected, then there is no way to conduct non-trivial computations over a fully-defective~$G$. 
To this end, we prove the following impossibility for \emph{two-party computation} over a fully-defective channel. The impossibility for a non 2-edge-connected $G$ follows since it contains a bridge, and we can reduce the two sides of the bridge to the two-party case.

\begin{theorem}
\label{thm:main-inf-cycle} 
Fix a non-constant function~$f(x,y)$.
No two-party deterministic algorithm that gives output or terminates
can compute~$f$ over a fully-defective channel.
\end{theorem}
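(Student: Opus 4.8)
The plan is to show that any candidate deterministic two-party protocol admits an input pair together with an adversarial behaviour — a choice of corruptions and a delivery schedule, both permitted by the fully-defective model since it forbids only deletions and injections, not reordering, delay, or content alteration — under which one party either halts with the wrong value or never halts at all. Since $f$ is non-constant it differs on some inputs, say $f(a_1,b_1)\neq f(a_2,b_2)$; if $f(a_1,b_1)=f(a_1,b_2)$ then $f(a_1,b_2)\neq f(a_2,b_2)$, so fixing the second coordinate and varying the first gives distinct outputs, and otherwise fixing the first and varying the second does. Using the symmetry between the two parties and the bidirectional channel, I may therefore assume there is an input $x^\ast$ for one party — call it Alice — and two inputs $y_0,y_1$ for the other — Bob — with $f(x^\ast,y_0)\neq f(x^\ast,y_1)$.

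First I would fix a reference execution $\rho$: run the protocol on inputs $(x^\ast,y_0)$ with no corruption (equivalently, with the adversary that always delivers the symbol $1$) under any fair schedule. Since the protocol always terminates and is correct, $\rho$ is finite and Alice outputs $f(x^\ast,y_0)$; record Alice's local view $V$ in $\rho$ up to the step at which she halts, that is, the finite sequence of her own send-events and receive-events together with the contents she receives. Then I would build the ``bad'' execution $\gamma$ on inputs $(x^\ast,y_1)$ whose adversary aims to make Alice's local view equal to $V$; since Alice is deterministic, this would force her to behave exactly as in $\rho$ and hence to output $f(x^\ast,y_0)$. Concretely, the adversary runs Alice step by step, forwards each message she sends to Bob (corrupting its content, say to $1$) and keeps feeding Bob the messages Alice has already sent so as to keep him running; whenever Alice is due, according to $V$, to receive her $i$-th message, the adversary takes the $i$-th message Bob has produced so far, corrupts it to the content $V$ prescribes, and delivers it. After Alice halts, the adversary delivers every message still in flight (Alice ignores hers; Bob may keep running), so $\gamma$ is a legal — possibly infinite — execution in which every sent message is delivered within finite time.

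The contradiction then comes from a dichotomy on how Bob, who has input $y_1$ and is being fed this stream of corrupted messages, behaves. Either the adversary succeeds in supplying, on time, all of the finitely many receive-events prescribed by $V$ — in which case Alice's view is exactly $V$, she halts and outputs $f(x^\ast,y_0)\neq f(x^\ast,y_1)$, contradicting correctness on the actual input pair $(x^\ast,y_1)$ — or at some moment Alice is blocked awaiting her next message while Bob has produced none and is himself blocked awaiting a message from Alice, or has halted; then $\gamma$ is a deadlocked, non-terminating execution, contradicting the hypothesis that the algorithm always gives output or terminates. In both branches the protocol fails to compute $f$, which proves the theorem.

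The step I expect to require the most care is exactly this dichotomy: one must argue that these are the only two ways the construction of $\gamma$ can proceed, ruling out a run that continues forever without ever jamming Alice's replay of $V$ yet without Alice reaching her halting step. This is impossible because $V$ is finite, so only finitely many of Bob's messages are ever needed and the $i$-th of them is awaited at a fixed position of Alice's deterministic run; hence the construction can stop only by completing $V$ or by the deadlock described above. A secondary point worth spelling out is that $\gamma$ really conforms to the channel model — messages are only corrupted, never created or dropped, and the ``deliver everything eventually'' rule keeps all delays finite — and that, had non-constancy instead forced us to fix Bob's input, the identical argument applies with the roles of the two parties interchanged.
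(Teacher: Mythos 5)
Your proof is correct and rests on the same central observation as the paper's: under an adversary that corrupts every message to a fixed symbol, the party with the fixed input behaves identically across the two input pairs, so it either gives the same (hence wrong) output or, if the other party fails to supply enough messages, the execution deadlocks without ever producing an output. The paper presents this directly by characterizing the fixed party's protocol as a count-indexed sequence of actions (fixing Bob's input and varying Alice's), whereas you present it as an explicit view-replay adversary with the roles of the two parties swapped; the dichotomy and the contradiction are the same.
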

The theorem requires the nodes to either terminate or irrevocably give an output. Note that the above theorem differs from the famous \emph{two generals coordinated-attack} impossibility~\cite{Gray78}, since our noise model does not allow deleting messages. 
See \Cref{sec:impossibility} for complete details.

\subsection{Related work}
\label{sec:relatedWork}

There are two common ways to deal with channel corruptions. One is by adding redundancy, i.e., coding the information, an approach that is known in the literature as \emph{Interactive Coding}.
The other is by diverting the communication so it would not pass through corrupted edges, which are known as \emph{Byzantine edges}. 

We review some related work in these areas, but we stress that neither approach can be used in fully-defective networks: Interactive coding must assume some bound on the errors, either per channel or globally, while solutions for networks with Byzantine edges must assume a bound on the number of noisy channels.

Interactive coding was initiated by the seminal work of Schulman~\cite{schulman92,schulman93,RS94}, see \cite{gelles17} for a recent survey on this field. 
In this setting, communication channels either suffer  from stochastic noise~\cite{RS94,GMS14,BEGH17,GK19,ABEGH19} or from some bounded amount of adversarial noise. 
E.g., if limiting the overhead of the coding scheme to be linear, \cite{GMS14,HS16,JKL15,LV15,GKR19} develop schemes resilient to up to a fraction $O(1/|E|)$ of the total communication. Without any restriction on the overhead, schemes can cope with noise up to a fraction~$O(1/|V|)$ of the total communication, and such a fraction is shown to be maximal~\cite{JKL15}---otherwise, the adversarial noise could completely corrupt all the outgoing communication of the node that communicates the least.
The above works assume synchronous networks. 
Censor-Hillel, Gelles, and Haeupler~\cite{CGH19} developed the first coding scheme for \emph{asynchronous} networks that suffer from up to a fraction~$O(1/|V|)$ of adversarial noise. 
Communication with an unbounded (yet, \emph{finite}) amount of noise was examined in \cite{DMSY15,ADHS18,GI19} for the two-party case and in
\cite{ADHS20} for the multiparty case.
In a work by Efrmenko, Haramaty, and Kalai~\cite{EHK20}, the noise model is similar to the one we consider here in the sense that it can corrupt \emph{the content} of messages but not their existence. However, the amount of bit-corruptions in~\cite{EHK20} (measured as the edit distance between sent and received messages) is bounded to a constant fraction out of the entire communication. Furthermore, their work considers only two parties.

Networks with Byzantine edges do not restrict the amount of noise per link, and even allow insertion/deletion errors, but 
allow only a bounded number of links to be noisy. 
In asynchronous settings, %
Fisher, Lynch, and Paterson~\cite{FLP85} exclude the existence of consensus algorithms when a single node may crash, or equivalently, when all the links connected to some single node may crash. 
In synchronous networks, certain tasks are also impossible with arbitrary link failures~\cite{Gray78,SBK09}.
On the other hand,
Santoro and Widmayer~\cite{SW90} considered distributed function evaluation when (a large number of) links suffer either corruptions, insertions, deletions, or their combination. In a sense, the synchrony guarantee allows simpler solutions, e.g., encoding information via the time in which messages are sent.
Pelc~\cite{Pelc92} shows that if the number of Byzantine links is bounded by~$f$, robust communication is achievable only over graphs whose edge-connectivity is more than~$2f$. 
This is also implied by the work of Dolev~\cite{Dolev82}. Additional works~\cite{PT86,gong1998byzantine,SCY98,Dasgupta98,biely03} consider the case of mixed node and link failures.

Recent work by Hitron and Parter~\cite{HitronP21broadcast,HitronP21general} gives a compiler that turns any algorithm in the noise-free setting into an algorithm that works correctly even if the adversary controls $f$~edges in a $(2f+1)$-edge-connected network. The above is for the synchronous Congest setting. Their approach is to construct a family of low-congestion cycle-covers (see also~\cite{ParterY19,PY19}), which are structures in which for every edge~$(u,v)$, there are at least $f+1$ cycles that contain no adversarial edges. We stress that low-congestion cycle-covers do not seem to be helpful for our setting: Even if we were promised only \emph{two} cycles that share a \emph{single} edge, it is not clear how to communicate over them in a way that distinguishes one from the other.

\section{Preliminaries}\label{section:preliminaries}

\smallskip\textit{\textbf{Notations.}}
We use $a\|b$ or $a\cdot b$ for the concatenation of $a$ and~$b$. For a positive integer $k\in \mathbb{N}$ and a string~$b$, we let $b^k=b\cdot b \cdot \cdots \cdot b$ denote $b$ concatenated to itself for $k$ times; $b^0=\epsilon$ is the empty string.
For a string $b$ and an integer $0\le i\le |b|-1$, we let $b_i$ denote the $i$-th bit of~$b$, i.e., $b=b_0b_1\cdots b_{|b|-1}$.

\smallskip\textit{\textbf{Networks and protocols.}}
A protocol~$\pi$ over an undirected network~$G=(V,E)$ with $n=|V|$ nodes is an asynchronous event-driven distributed algorithm, in which nodes conduct some computation by sending messages to their neighbors in~$G$ (for simplicity, we assume only deterministic algorithms in this paper). Upon the reception of a message,  $\pi$ instructs the recipient node what message(s) to send next, as a function of the node's input and all the messages it has received so far. 
Specifically, each node~$v$ begins with a private input~$x_v$ (which may be empty), and knowledge of the IDs of its neighboring nodes, $N(v) = \{u \mid (u,v)\in E\}$ 
(we can remove this assumption, see Remark~\ref{rem:KT0}). 
According to the input to~$v$, $\pi$ generates messages to send to zero or more of $v$'s neighbors (possibly different messages to different neighbors). Afterwards, the protocol behaves in an event-driven manner, i.e., nodes act only upon receiving messages: 
whenever a node~$v$ receives a message, it performs some computation and produces messages designated to zero or more of its neighbors. We impose no assumption on the computation time of~$\pi$ except that it is finite.
We additionally assume a preselection of one designated node (which will function as a root node in our Robbins cycle construction), and assume that every node knows whether it is the designated node.

Communicating a message over some link of $G$ takes arbitrary positive finite time. Channels are \emph{not} assumed to be FIFO.
Incoming messages are kept in an incoming buffer until processed by the node.

The \emph{protocol's transcript} $\tau$ of a given execution, is the sequence of messages sent and received during the execution. Each item in~$\tau$ indicates the message sent or received, the sending or receiving node and the link on which the message was communicated. 
Events that happen in different nodes at the same time   
are assumed to be ordered in some arbitrary order. 
The \emph{local transcript} $\tau_v$ of a node~$v$,
is the ordered sequence of messages sent and received by~$v$. Note that $\tau_v$ can be derived from~$\tau$ as the sub-sequence in which $v$ is the sending or receiving node.

We say that $\pi$ \emph{gives an output} if every node eventually writes an output to its write-only output register. This action is irrevocable. If needed, the node may remain active and send and receive messages after giving an output; that is, we do not require termination, but our result also applies to protocols that terminate.
We say that the protocol has reached \emph{quiescence} at some time, if no message is still in transit and from that time on, no new messages are sent over the network.

\smallskip\textit{\textbf{Fully-defective networks and noise-resilient simulations.}}
We work in networks with noisy channels exposed to alteration noise, which can corrupt the \emph{content} of any message communicated over any channel. 
That is, once a message $m\in\{0,1\}^+$ is sent over some link, the received message may be any $ m'\in\{0,1\}^+$. 
However, the noise \emph{cannot} completely delete a message 
nor can it inject a message on a link in which no $m$ was sent. We stress that, except for inserting and deleting messages, the noise has no restrictions at all. In particular, it can apply to all channels and corrupt all messages in a given execution. 
We call networks that suffer noise as specified above  \emph{fully-defective} networks.
Equivalently, one can think about such a network as one in which nodes communicate only by means of sending pulses to their neighbors, which could be the case, for instance, when the nodes have very basic communication hardware.  

A \emph{noise-resilient simulator} 
designed for a noiseless network $G=(V,E)$
is a protocol~$\hat \pi$ which is given as an input an asynchronous black-box interface to some~$\pi$.
When $\hat \pi$ is executed on a fully-defective network~$G$, it produces for each node~$v\in V$ a 
string~$\hat\tau_v$, such that there exists some execution of~$\pi$ over the noiseless network~$G$ that generates a transcript~$\tau$, for which $\tau_v = \hat\tau_v$ for each node~$v$.
We allow a simulator to perform some pre-processing before simulating~$\pi$. We define $\CCinit$ to be the communication complexity in bits of the simulator during the pre-processing, and $\CCm(m)$ to be the communication complexity for simulating the delivery of a message~$m$. Note that $\CC$ accounts only for the length of \emph{sent} messages, even if later their content is corrupted by the noise.

\smallskip\textit{\textbf{Distributed representation of cycles.}}
A (directed) cycle can be represented in a distributed network in two manners: \emph{locally} and \emph{globally}.
A local representation of some cycle~$C$ means that every node on~$C$ knows its two neighbors on the cycle along with their respective direction, clockwise or counterclockwise, usually held in the local variables $\Next$ and $\Prev$, respectively. 
In case $C$~is not a simple cycle, then every node knows its clockwise and counterclockwise neighbors for each of its occurrences on~$C$.
This information is consistent across all nodes in the sense that an outside observer who follows the neighbors and directions of each node would see a consistent directed cycle.

A global representation of a directed cycle means that every node $v\in C$ holds the string $C=(v_1, v_2, \ldots)$ of the IDs of the nodes on~$C$ in their clockwise order.

\section{Simulating computations over a fully-defective simple cycle}
\label{sec:cycle}

As discussed in Section~\ref{section:intro}, we can establish a resilient connection between two nodes connected by two separate links, sending \emph{content-less} messages between them, which we will call \emph{pulses} throughout this paper. 
Our goal is to implement this idea for any two nodes in a 2-edge-connected graph, since in such a graph any two nodes are connected by two separate \emph{paths}. 
As a stepping stone, in this section we consider the special case of \emph{simple cycles}.

\begin{theorem}[A simulator for a simple cycle]
\label{thm:simpleCycle}

There exists a noise-resilient simulator for any asynchronous protocol~$\pi$ and any fully-defective simple cycle~$G$ in which each node knows its clockwise and counterclockwise neighbors.
The simulator features  $\CCinit=0$ and 
$\CCm(m) = O(|V|\cdot |m|+|V|\log |V|)$ pulses.
\end{theorem}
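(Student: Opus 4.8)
The plan is to realize over the fully-defective cycle the token-based mechanism sketched in \Cref{section:intro}. Each node runs a small state machine holding a phase in $\{\mathsf{data},\mathsf{end},\mathsf{token},\mathsf{request}\}$, a flag $\isTokenHolder$, and its given pointers $\Prev,\Next$ to its counterclockwise and clockwise neighbors; the designated node starts as the (unique) token holder. A node assigns to each received pulse one of the semantics $\DATA,\END,\TOKEN,\REQ$ as a deterministic function of the link on which it arrived, the current phase, and how many pulses have already been seen in that phase, and then it either (i) feeds the decoded $\pi$-event to its local black box for $\pi$, but only when it is the intended recipient, (ii) updates its phase, or (iii) forwards a single pulse in the prescribed direction. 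The token holder drives a message exactly as in the toy example: clockwise pulses play the role of the \data channel and a counterclockwise pulse the role of the \meta channel (with the binary encoding of \Cref{sec:binary} used in place of the unary one), after which the nodes enter the $\token$ phase; a node that wants to become the sender injects a clockwise $\REQ$, which every node propagates clockwise (with the parallel forward-and-wait rule) until it reaches the current token holder and triggers a counterclockwise $\token$ pulse, which is forwarded counterclockwise until absorbed by a node with a pending message, which then opens the next $\data$ phase.

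The heart of the correctness argument is a safety invariant, proved by induction over the global sequence of send and receive events: at every moment exactly one node regards itself as the token holder (or the token is in transit inside a $\token$ phase), the phases and the locally-maintained estimate of the token's position are mutually consistent across all nodes, and, within a $\data$/$\end$ phase, the $\DATA$ and $\END$ pulses of the current message are received by every node in one common order. The last point is what rules out corruption- and asynchrony-induced confusion, and it rests on the acknowledgment discipline: in a cycle a pulse emitted by the token holder traverses all $|V|$ links and returns to it, so this return serves as the acknowledgment, and the token holder emits its next pulse only after the return; consequently no two such pulses coexist, the unique sequence of clockwise/counterclockwise pulses is seen in the same order by everyone, and the intended recipient therefore decodes precisely the message that $\pi$ instructed the token holder to send. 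Since at any time at most one node injects a message into its copy of $\pi$, the local outputs $\hat\tau_v$ are jointly the local transcripts of a \emph{sequential} execution of $\pi$ over the noiseless $G$---which is a legal asynchronous execution---so $\tau_v=\hat\tau_v$ for every $v$, and outputs and quiescence of $\pi$ are reflected faithfully; this is exactly what is required of a noise-resilient simulator.

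For liveness I would use two ingredients. First, every channel delivers within a finite time, so every $\DATA$, $\END$, $\TOKEN$, or $\REQ$ pulse completes its traversal and every phase ends in finite time. Second, whenever the local copy of $\pi$ at some node produces a message to deliver, that node issues a clockwise $\REQ$; by the propagation rule the request cannot be lost and eventually reaches the current token holder, which then starts the $\token$ phase, and the token is forwarded until it lands on a node with a pending message. A short additional argument handles several nodes requesting concurrently---each circulating request is eventually absorbed by the current or next token holder, and a node keeps (re-)requesting until it is served---so the scheme does not livelock, and by fairness every message that $\pi$ wishes to send is eventually delivered.

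Finally, the complexity: there is no pre-processing, so $\CCinit=0$. To deliver a message $m$, the token holder first announces which of its two cycle-neighbors is the recipient (its $O(\log|V|)$-bit ID) and then the bits of $m$, each piece transmitted with the binary encoding of \Cref{sec:binary} that spends $O(1)$ pulse-traversals per conveyed bit; a single traversal costs $|V|$ pulses, so the data itself costs $O(|V|\cdot|m|)$ pulses and the recipient announcement $O(|V|\log|V|)$ pulses, while the $\REQ$/$\TOKEN$/end-of-token overhead preceding a message adds only $O(|V|)$ pulses, charged to that message. Summing yields $\CCm(m)=O(|V|\cdot|m|+|V|\log|V|)$ pulses. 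I expect the main obstacle to be the safety invariant: showing that under full corruption and asynchrony no node ever misclassifies a pulse or reorders a decoded message is where the one-pulse-in-flight acknowledgment discipline and the globally consistent phase/token-position tracking must be combined, and where the inductive bookkeeping is by far the most delicate part.
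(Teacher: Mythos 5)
Your proposal matches the paper's approach essentially exactly: both realize the same four-semantics pulse protocol (clockwise $\DATA/\REQ$, counterclockwise $\END/\TOKEN$) with the token-holder/epoch structure, both use the binary encoding of \Cref{sec:binary} to get the stated $\CCm$ bound, and your proposed safety invariant (single token holder, globally consistent pulse order, progress) is exactly the three-part induction in the paper's \Cref{lemma:cycle}, with the same observation that the return of a pulse to the token holder serves as the acknowledgment that keeps the cycle ``clean.'' Two small caveats that do not affect the plan's correctness: in the paper's \emph{unary} version (\Cref{alg:cycle compiler data phase}) the token holder actually fires all $d$ $\DATA$ pulses at once and only waits for them before $\END$, so ``one pulse in flight at a time'' is the discipline of the \emph{binary} variant (\Cref{alg:cycle compiler data phase-binary}) you cite and of the Robbins-cycle generalization, not of the unary simple-cycle code; and your mention of ``globally consistent token-position tracking'' anticipates the Robbins-cycle segment bookkeeping but is not needed on a simple cycle, where each node has a single occurrence and directions alone disambiguate.
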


Let $G$
be a simple cycle on $V=\{v_i\}_{0\le i \le n-1}$ with $E=\{ (v_i,v_{i+1})\}_{0 \le i \le n-1}$, where indices are taken mod~$n$. The main idea is to imitate the two-channel idea described in \Cref{section:intro} above over the cycle. 
That is, suppose $v_i$ wishes to send a message to its neighbor~$v_{i+1}$. We can think of the link $(v_i,v_{i+1})$ as the $\data$ channel, and on the \emph{path} $v_i,v_{i-1},v_{i-2},\ldots, v_{i+1}$ as the $\meta$ channel. For this to work, all the nodes beside $v_i$ and $v_{i+1}$ need to simply forward each pulse they receive along the same direction. 
However, the above description supports only a single fixed sender and a single fixed receiver. Thus, we need a method that allows different nodes to become the sender. For this we use a token mechanism, where only a single node holds the token at any given time.

Our simulator can be split into two separate phases per message transfer: the first one is the \emph{token phase} which handles transferring the token between the nodes, and the second one is the \emph{data phase} that handles communication between the current token-holder and the rest of the nodes. 

The token phase works as follows. At the starting point, there exists only a single token holder.
During the token phase, pulses carry one out of two possible meanings: either they are a $\REQ$ pulse or a $\TOKEN$ pulse. 
The meaning of a pulse is dictated by the direction in which the pulse progresses along the cycle: $\REQ$ is a clockwise pulse while $\TOKEN$ is a  counterclockwise pulse. 
A node that wishes to obtain the token issues a $\REQ$ pulse. 
Every node that receives such a $\REQ$ pulse, propagates it in the same direction, unless it has already sent a $\REQ$ pulse previously in this phase, so eventually every node sends and receives \emph{a single} $\REQ$ pulse.

Upon receiving a $\request$ pulse, the current (single) token holder releases the token by sending a counterclockwise $\TOKEN$ pulse. 
This pulse propagates along the cycle until it reaches one of the nodes which requested the token. A node that receives the $\TOKEN$ pulse and wishes to become a token holder does not propagate the $\TOKEN$ pulse but instead sets itself as the  new token holder.
Then, the new token holder switches to its data phase and begins sending clockwise pulses, which are interpreted as $\DATA$ pulses. 
The first of these pulses propagates throughout the entire cycle and informs all the other nodes that the token phase has completed. This first pulse cannot be confused with a $\REQ$ pulse since we guarantee that every node sends and receives exactly a single $\REQ$ pulse in each token phase. In other words, the second clockwise pulse received during a token phase must be a $\DATA$ pulse, which triggers its recipient to switch to its data phase.

In the data phase, the token holder delivers its message via a unary coding. That is, it sends a number of clockwise $\DATA$ pulses that equals the length of the unary encoding of the information. Each node other than the token holder forwards each received $\DATA$ pulse clockwise, so these pulses propagate along the cycle until they reach the token holder back from the other side of the cycle. Then, the token holder sends a single counterclockwise $\END$ pulse that signals the end of the message and the end of the data phase. Note that once the token holder receives the $\END$ pulse from the other direction, all nodes know that the data phase is over, and are back in the token phase. Note also that due to the asynchrony, nodes that already moved to the next token phase might send a $\REQ$ pulse before the $\END$ pulse arrives at the token holder. Our design promises that these $\REQ$ pulses are not confused with pulses of the current data phase: $\END$ pulses are sent in the other direction, and as for $\DATA$ pulses---the token holder does not proceed to sending a $\REQ$ pulse before it receives the $\END$ pulse of the data phase, so $\REQ$ pulses of the new token phase can only reach nodes that have already received the $\END$ pulse for this phase and therefore do not interpret them as additional $\DATA$ pulses.

A \emph{phase} is a local concept, in the sense that each node runs a specific data or token phase in any given time, and different nodes might be in different phases in a certain time. We denote each token phase and its subsequent data phase an {\em epoch}. An epoch is a local concept too, viewed by each node according to the phase it is currently running. Different nodes might be in different epochs in a certain time: some nodes might already send a $\REQ$ pulse in the new epoch while others have still not received an $\END$ pulse for the previous epoch.

\subsection{Formal description}
We now formally describe our simulator over fully-defective simple cycles, where each node is given the identities of its clockwise and counterclockwise neighbors.
Our simulator receives as an input an asynchronous protocol~$\pi$ for noiseless communication channels. Messages to be sent are generated by~$\pi$, and any message received by a node in our simulator is delivered and processed by~$\pi$. Our simulator thus treats~$\pi$ as an asynchronous black box that interfaces with the simulator by sending and receiving messages, internally at each node. We stress that $\pi$'s actions take finite arbitrary time unknown to and independent of the simulator algorithm. 

Our simulator appears in \Cref{alg:cycle compiler token phase,alg:cycle compiler data phase}. 
All nodes begin executing the token phase (\Cref{alg:cycle compiler token phase}).
Each node $u$ has an internal $\isTokenHolder_u$ variable that indicates whether it is the token holder.
Moreover, each node $u$ keeps a %
queue~$Q_u$ of messages generated by~$\pi$, which should be broadcast over the cycle.
Messages in~$Q_u$ are of the form $(m, u, v)$, where $m$ is a message that $\pi$~ instructs~$u$ to send to~$v$.
At the onset, $\isTokenHolder_u$ is $\mathsf{True}$ for a single node, and each $Q_u$ is empty. 
When $\pi$ gives an output, the respective node gives the same output in the simulator but keeps executing the communication algorithm over the cycle. If in a certain time all the queues $\{Q_u\}$ are empty and remain empty, then the simulator stops sending messages and reaches quiescence.

The simulator is content-oblivious, and as such it communicates by sending pulses (content-less messages).
Note that in our algorithms we write next to each pulse its meaning ($\DATA,\END,\REQ,\TOKEN$), however, this is only for the analysis; the nodes assign this meaning according to their current state and the clockwise/counterclockwise direction of the pulse, and not by the pulse content, which is ignored.

\renewcommand\thealgorithm{\arabic{algorithm}(a)}   %

\begin{algorithm}[htp]
\caption{A simulator for simple cycles: token phase (node $u$)}\label{alg:cycle compiler token phase}
\begin{algorithmic}[1]
\Statex \textbf{Init:} 
A single node has $\isTokenHolder = \mathsf{True}$. Node~$u$ holds a (possibly empty) input $x_u$ for~$\pi$. %
\Statex \textbf{Handling messages sent by $\pi$:}
During the execution of the algorithm, node $u$ enqueues to $Q_u$ any new message $\pi$ asks $u$ to send, in the form (message, source, destination). The actions of~$\pi$ occur in parallel to the execution of this algorithm.

\Statex

\State \textbf{wait until} $Q_u$ is  not empty or a clockwise $\REQ$ pulse is received
\label{line:cycle token phase - wait for REQ or new Qu message}

\State send a clockwise $\REQ$ pulse\label{line:cycle token phase - send REQ}
    
\IIf {%
no clockwise $\REQ$ pulse was received}
    wait until receiving a clockwise $\REQ$ pulse 
    \label{line:cycle token phase - wait for REQ after seding one}
\EndIIf\label{line:token-phase-end-REQ-clause}
\If {$\isTokenHolder_u$}\label{line:cycle token phase - release token}
    \State $\isTokenHolder_u \gets \mathsf{False}$ \label{line:cycle-tokenFalse} 
    \State send a counterclockwise $\TOKEN$ pulse \label{line:cycle token phase - send TOKEN}
\EndIf\label{line:cycle token phase - release token-end if}

\State \textbf{wait until} receiving a pulse \label{token-phase-wait-after-REQ}
\If {the pulse is a counterclockwise $\TOKEN$ pulse} \Comment{Else, the pulse is a clockwise $\DATA$ pulse}
    \label{token-phase-wait-after-REQ-2}
        \If {$Q_u$ is not empty}
            \State $\isTokenHolder_u \gets \mathsf{True}$ \label{token-phase-set-tokenholder}
        \Else
            \State forward the counterclockwise $\TOKEN$ pulse 
            \State \textbf{go to} \Cref{token-phase-wait-after-REQ}
        \EndIf

    \EndIf\label{line:c-end}
\State continue with \Cref{alg:cycle compiler data phase} 
\label{line:cycle token phase:end}
\end{algorithmic}
\end{algorithm}

\renewcommand\thealgorithm{\the\numexpr\value{algorithm}-1}
\renewcommand\thealgorithm{\the\numexpr\value{algorithm}-1(b)}   %

\begin{algorithm}[htp]
\caption{A simulator for simple cycles: data phase (node $u$)}\label{alg:cycle compiler data phase}
\begin{algorithmic}[1]

\setcounter{ALG@line}{\getrefnumber{line:cycle token phase:end}}

\If{$\isTokenHolder_u$} %
    \State dequeue a message from $Q_u$, denote it by $(m,u,v)$ and let $1^d$ be its unary encoding \label{line:cycle data phase - dequeue}
    \State send $d$ clockwise $\DATA$ pulses \label{line:cycle data phase - initiate data}
	\State \textbf{wait until} receiving $d$  clockwise $\DATA$ pulses 
	\State send a counterclockwise $\END$ pulse\label{line:cycle data phase - initiate end}
	\State \textbf{wait until} a counterclockwise $\END$ pulse is received\label{line:cycle data phase - wait for end}
\Else %
	\State forward any received clockwise $\DATA$ pulse \textbf{until} receiving a counterclockwise $\END$ pulse \Statex \Comment{Including the $\DATA$ pulse received during the preceding token phase} \label{line:cycle data phase - forward DATA}
	\State let $count$ be the number of received clockwise $\DATA$ pulses
	\label{line:cycle data phase - record message}
	\State decode $1^{count}$ as the unary encoding of $(m',u',v')$
	\label{line:cycle data phase - decode}
	\IIf{$u = v'$} 
	    deliver $m'$ to $\pi$ (as if received from $u'$) \label{line:cycle data phase - deliver}
    \EndIIf\label{line:cycle data phase - complete process message}
	\State forward the counterclockwise $\END$ pulse
	\label{line:cycle data phase - forward END}
\EndIf
\State continue with Algorithm~\ref{alg:cycle compiler token phase} 
\label{line:cycle data phase:end epoch}
\end{algorithmic}
\end{algorithm}

\renewcommand\thealgorithm{\the\numexpr\value{algorithm}-1}

\subsection{Analysis}
\label{section:analysis-cycle}

Let us set some notation for the analysis of \ALGone.
Let `1' indicate a pulse sent clockwise, and let `0' indicate a pulse sent counterclockwise. Recall that an \emph{epoch} is the execution of consecutive token and data phases. We say that a node has completed its $k$-th epoch once it has executed \Cref{line:cycle data phase:end epoch} 
for the $k$-th time.
Let $T_k$ be the $k$-th node to have set its $\isTokenHolder_u$ to $\mathsf{True}$ in \Cref{token-phase-set-tokenholder}, 
whereas $T_0$ is the node whose $\isTokenHolder_u$ variable is initialized to $\mathsf{True}$.
(We will show that $T_k$ sends, in its $k$-th epoch, the $k$-th simulated message in the system.)
Let $s_{k}$, for $k\geq1$, be the time in which $T_k$ sets its $\isTokenHolder \gets \mathsf{True}$ ($s_k=\infty$ if $T_k$ is undefined). 
Finally, let $t_k$ be the time in which $T_k$ completes its $k$-th epoch ($t_k=\infty$ if $T_k$ is undefined or never ends the $k$-th epoch). We let $s_0=t_0=0$.  
Let $[u \curvearrowright v]$ denote the clockwise path from~$u$ to~$v$ along the cycle including both ends, and similarly let $[u \rcurvearrowright v]$  denote the counterclockwise path from $u$ to $v$. 
In the special case of identical endpoints, $[u \curvearrowright u]$ denotes the path through the whole cycle.
To exclude an endpoint, we use a round bracket in place of a square bracket, e.g. $[u \curvearrowright v)$ denotes the clockwise path excluding $v$; the path can be empty, i.e., $(u \curvearrowright v)$ for $u,v$ neighbors.

Our analysis is based on the following technical lemma, which provides us with three important properties satisfied by  \ALGone in every epoch: 
(1) \textbf{progress,} which says that as long as there is a message to send, the next epoch will eventually start and complete; 
(2) \textbf{single token holder,} which says that at most a single node holds the token at any moment (there is no such node during the time in which the token is being passed), and $T_k$ is the only one to hold it during the data phase of the $k$-th epoch; and 
(3) \textbf{global consistency,} which says that in any given epoch~$k$, exactly one message is being communicated---sent by~$T_k$ and received by all other nodes, and the pattern of pulses every node sends has a distinct structure.
We now formalize these ideas as follows.
\begin{lemma}
\label{lemma:cycle}
Consider an execution of \ALGone
and consider any ${k \ge 1}$, 
for which ${t_{k-1}<\infty}$.
If from time $t_{k-1}$ and forward, all queues $\{Q_v\}_{v\in V}$ are always empty, then $t_{k}=\infty$.
Otherwise,
the following hold:

\begin{enumerate}[label = (\arabic*)]

\item \label{Progress} \textbf{Progress:} 
All nodes eventually complete their $k$-th epoch. 
In particular, $t_{k}<\infty$, and 
at time~$t_k$, all nodes have already processed the $\END$~pulse (of epoch~$k$) but have not yet passed \Cref{token-phase-wait-after-REQ} in epoch~$k+1$
(they are either waiting in Lines \ref{line:cycle token phase - wait for REQ or new Qu message} or \ref{line:cycle token phase - wait for REQ after seding one} for a $\REQ$ pulse, or waiting in \Cref{token-phase-wait-after-REQ} for either a $\DATA$ or a $\TOKEN$ pulse).

\item \label{Single token holder}\textbf{Single token holder:}
It holds that $t_{k-1} < s_k < t_k$.
At each moment in~$(t_{k-1},t_k)$, there is at most a single node for which  $\isTokenHolder=\mathsf{True}$.
More specifically, within this time frame, the token is passed as follows: the node $T_{k-1}$ releases the token at some time in~$[t_{k-1},s_{k})$ and the node~$T_k$ is the next node that gains the token at time $s_k$. The node $T_k$ (solely) holds the token in~$[s_k,t_k]$.

\item \label{Global consistency} \textbf{Global consistency:} 
There exist integers $d_1,\dots, d_k > 0$ and for any $u\in V$ there are $b^u_1,\dots, b^u_k \in \{0,1\}$, such that when the node~$u$ completes its $k$-th epoch, its sent transcript (the overall pulses sent so far by~$u$) is $P_{u,k} \triangleq 10^{b^u_1}1^{d_1}0\cdot10^{b^u_2}1^{d_2}0\dots10^{b^u_k}1^{d_k}0$.

In addition, the message each node decodes and processes (\Crefrange{line:cycle data phase - record message}{line:cycle data phase - complete process message})
in its $k$-epoch is the unary decoding of $1^{d_k}$, which is the message sent by $T_k$ (\Cref{line:cycle data phase - dequeue}) in its $k$-th epoch.
\end{enumerate}
\end{lemma}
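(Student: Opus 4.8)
The plan is to prove all three properties simultaneously by induction on the epoch number~$k$, since they are tightly coupled: progress in epoch~$k$ relies on the single-token-holder property holding up to time~$t_k$, which in turn relies on the pulse-pattern regularity (global consistency) so that no node misinterprets a $\REQ$ as a $\DATA$ pulse or vice versa. The base case is handled by the initialization: at time $t_0=s_0=0$ there is a single node $T_0$ with $\isTokenHolder=\mathsf{True}$, every $Q_v$ is empty, and every sent transcript is the empty string $P_{u,0}=\epsilon$. For the inductive step, I assume the lemma holds for epoch $k-1$ (so in particular $t_{k-1}<\infty$ and at time $t_{k-1}$ all nodes are parked in Lines~\ref{line:cycle token phase - wait for REQ or new Qu message}, \ref{line:cycle token phase - wait for REQ after seding one}, or \ref{token-phase-wait-after-REQ} of epoch~$k$ with their transcripts equal to $P_{u,k-1}$), and I prove it for epoch~$k$.

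First I would handle the trivial alternative: if all queues stay empty from $t_{k-1}$ on, then no node ever passes the \textbf{wait until} of Line~\ref{line:cycle token phase - wait for REQ or new Qu message} (no $Q_u$ becomes nonempty, and no $\REQ$ pulse is ever sent because the only way to send one is to pass that line), so no node sets $\isTokenHolder$ in Line~\ref{token-phase-set-tokenholder}, hence $T_k$ is undefined and $t_k=\infty$. Otherwise, some queue becomes nonempty at some finite time; I would then trace the token phase of epoch~$k$ carefully. The key structural observation is that because every node forwards a clockwise $\REQ$ pulse at most once per epoch and only after (or simultaneously with) sending its own, exactly one $\REQ$ pulse is sent and received by every node, and these propagate all the way around the cycle; in particular the $\REQ$ pulse reaches $T_{k-1}$ (which by induction is the unique token holder entering epoch~$k$), triggering it to set $\isTokenHolder\gets\mathsf{False}$ and release a counterclockwise $\TOKEN$ pulse. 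That $\TOKEN$ pulse travels counterclockwise until it meets the first node with a nonempty queue — this node is $T_k$, it sets $\isTokenHolder\gets\mathsf{True}$ at time $s_k$, and I must argue $T_k$ is well-defined (some queue is nonempty, and a node with a nonempty queue that receives the $\TOKEN$ keeps it) and that the token is held by nobody in the interval between release and $s_k$. This establishes property~\ref{Single token holder}, including $t_{k-1}<s_k<t_k$ once we know $t_k<\infty$.

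Next I would establish the data phase and global consistency. Once $T_k$ holds the token it dequeues $(m,u,v)$, sets $d_k$ to the length of its unary encoding, and sends $d_k$ clockwise $\DATA$ pulses; the first such pulse is the \emph{second} clockwise pulse any node sees in epoch~$k$ (the first being its own $\REQ$), so every node correctly switches from token phase to data phase and forwards $\DATA$ pulses, counting them. These pulses circle back to $T_k$, which then emits a counterclockwise $\END$ pulse that propagates fully around; upon forwarding it each node $u\ne T_k$ has counted exactly $d_k$ pulses, decodes $1^{d_k}$ as $(m,u_0,v_0)$ and delivers $m$ to $\pi$ iff it is the destination. Concatenating with the inductive transcript $P_{u,k-1}$, node~$u$'s sent transcript becomes $P_{u,k-1}\cdot 1\,0^{b^u_k}\,1^{d_k}\,0$ where $b^u_k\in\{0,1\}$ records whether $u$ forwarded the $\TOKEN$ pulse (it did iff it is strictly between $T_{k-1}$ counterclockwise and $T_k$), which is exactly the claimed $P_{u,k}$; for $T_k$ itself the $\TOKEN$ pulse it sent in Line~\ref{line:cycle token phase - send TOKEN} only if $T_k=T_{k-1}$, but one checks the bookkeeping still yields a $0$-or-$1$ block, so the formula holds uniformly with an appropriate $b^{T_k}_k$. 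Finally, property~\ref{Progress} follows by observing that each \textbf{wait until} in epoch~$k$ is eventually satisfied — channels deliver in finite time, $\pi$ acts in finite time, and the single-token-holder structure guarantees the awaited pulse is actually sent — so every node reaches Line~\ref{line:cycle data phase:end epoch}, giving $t_k<\infty$; and the moment a node finishes forwarding the $\END$ pulse it is back in Line~\ref{line:cycle token phase - wait for REQ or new Qu message} of epoch~$k+1$ not yet past Line~\ref{token-phase-wait-after-REQ}, which is the stated parking configuration.

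The main obstacle I anticipate is the asynchrony bookkeeping around epoch boundaries: a fast node may already be sending its $\REQ$ pulse for epoch~$k+1$ while a slow node is still processing the $\END$ pulse of epoch~$k$, and I must show no pulse of epoch~$k+1$ is ever mistaken for a $\DATA$ pulse of epoch~$k$ (or vice versa). The resolution — which I would spell out carefully — is directional and ordinal: $\END$ pulses travel counterclockwise while the premature $\REQ$ pulses travel clockwise, and $T_k$ does not begin epoch~$k+1$ (hence does not emit anything new clockwise) until it has received the $\END$ pulse, so any clockwise $\REQ$ pulse of epoch~$k+1$ can only reach nodes that have already forwarded the epoch-$k$ $\END$ pulse and thus left the $\DATA$-counting loop. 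Pinning down that this invariant is maintained around the cycle, using the ``second clockwise pulse is $\DATA$'' guarantee together with the $t_k$ parking configuration from property~\ref{Progress}, is the delicate part of the argument; everything else is routine propagation-along-the-cycle reasoning.
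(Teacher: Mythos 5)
Your proposal is correct and follows essentially the same route as the paper's proof: induction on the epoch number~$k$, proving progress, single-token-holder, and global consistency simultaneously by tracing the propagation of $\REQ$, $\TOKEN$, $\DATA$, and $\END$ pulses around the cycle, and resolving the epoch-boundary asynchrony via the directional argument that $T_k$ releases no clockwise pulses until it has received the $\END$, so any premature epoch-$(k{+}1)$ $\REQ$ can only catch up to nodes that have already left the $\DATA$-counting loop. The one minor imprecision worth flagging is your statement that at time $t_{k-1}$ all transcripts equal $P_{u,k-1}$ — fast nodes parked in Line~\ref{line:cycle token phase - wait for REQ after seding one} or~\ref{token-phase-wait-after-REQ} have already sent their epoch-$k$ $\REQ$ and so have transcript $P_{u,k-1}\cdot 1$; the correct invariant (and the one the paper uses) is that $P_{u,k}$ is the transcript at the per-node moment when $u$ completes its own $k$-th epoch, not at the global time $t_k$.
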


\begin{proof}%
We prove the statement by induction on the epoch number~$k$.
We start with proving the base case, $k=1$. The proof for the general case is very similar. 
The analysis follows the progress of the protocol and shows that each pulse sent with a certain meaning (i.e., $\DATA$, $\END$, $\TOKEN$, $\REQ$) is correctly interpreted by its recipient.

\textbf{Base Case, $k=1$.}
Note that $t_0=0$, hence, $t_{k-1}<\infty$.
All the nodes begin by executing \Cref{alg:cycle compiler token phase}, with a single node $T_0$ having $\isTokenHolder=\mathsf{True}$. 
While all nodes have empty queues ~$Q_v$, they all wait in \Cref{line:cycle token phase - wait for REQ or new Qu message} and thus, if the queues remain empty indefinitely, we have~$t_1=\infty$.

Otherwise, at some time there is at least one node~$u$ that enqueues to~$Q_u$ a message to %
be simulated.
Each such~$u$ sends a $\REQ$ pulse in \Cref{line:cycle token phase - send REQ} and waits to receive a $\REQ$ pulse in \Cref{line:cycle token phase - wait for REQ after seding one}, unless it has already received such a pulse in \Cref{line:cycle token phase - wait for REQ or new Qu message}. 
As there is at least one such node, at least one $\REQ$ pulse is sent. The rest of the nodes first wait to receive a $\REQ$ pulse and then forward it. It follows that all nodes eventually receive and send a single $\REQ$~pulse.
Let us denote by $\tilde P_u$ the partial transcript of a node $u$ at the ``current'' time (which evolves with the proof), then $\forall u\in V$, we have $\tilde P_u =1$ after sending the $\REQ$. To prove Property~(3), we keep track of the partial transcript $\tilde P_u$, recording the pulses sent by each node.

After sending and receiving a $\REQ$ pulse, any node $u\ne T_0$ waits to receive another pulse (\Cref{token-phase-wait-after-REQ}).
The node~$T_0$ sets $\isTokenHolder_{T_0}=\mathsf{False}$, sends a counterclockwise $\TOKEN$ pulse (\Cref{line:cycle token phase - send TOKEN}) and then waits for another pulse like all other nodes. 
The $\TOKEN$ pulse triggered by $T_0$ propagates counterclockwise until it reaches a node~$v$ with a non-empty~$Q_v$, which must exist. The node~$v$ subsequently sets $\isTokenHolder_{v}$ to $\mathsf{True}$ (\Cref{token-phase-set-tokenholder}). 
Thus, by the above definitions, we get that $T_1=v$ 
and $s_1$~is the time when $v$~executes \Cref{token-phase-set-tokenholder}.
Note that~$T_1$ might get the $\TOKEN$ pulse before getting a $\REQ$ pulse, in which case it delays its actions until a $\REQ$ pulse is received. This has no effect on the proof.

In case $T_1 \ne T_0$, at time $s_1$, all the nodes on $[T_0 \rcurvearrowright T_1)$ have sent a $\TOKEN$ pulse and are now waiting for a $\DATA$ pulse (\Cref{token-phase-wait-after-REQ}) that would switch them to their data phase of epoch $k=1$.  
The nodes on 
$(T_0 \curvearrowright T_1)$ could be in two possible stages: either they are still waiting for a $\REQ$ pulse (Lines~\ref{line:cycle token phase - wait for REQ or new Qu message} or~\ref{line:cycle token phase - wait for REQ after seding one}) as described above, or they are waiting in \Cref{token-phase-wait-after-REQ}. 
Hence at time $s_1$, every node $u \in [T_0 \rcurvearrowright T_1)$ has $\tilde P_u =10$, while every node $u \in (T_0 \curvearrowright T_1]$ has $\tilde P_u =\epsilon$ if it has not yet sent a $\REQ$ pulse, or $\tilde P_u =1$ otherwise. Eventually, perhaps at a different time per node, each node~$u$ thus reaches the partial transcript $\tilde P_u=10^{b_1^u}$ with $b_1^u\in\{0,1\}$ being the indicator of whether $u$ has sent a $\TOKEN$ pulse (namely, whether $u \in [T_0 \rcurvearrowright T_1)$).

In the special case where $T_1=T_0$, at time $s_1$, the token has just reached back at $T_1$; all nodes have sent a $\TOKEN$ pulse, and all nodes but $T_1$ are now waiting for a $\DATA$ pulse (\Cref{token-phase-wait-after-REQ}) that would switch them to their data phase of epoch $k=1$.
Hence at time $s_1$, every node~$u$ has the partial transcript $\tilde P_u=10^{b_1^u}$ with $b_1^u=1$ indicating that $u$ has sent a $\TOKEN$ pulse.

When the node~$T_1$ switches to the data phase (\Cref{alg:cycle compiler data phase}), its queue $Q_{T_1}$ is non-empty and so it sends $d\ge 1$ clockwise $\DATA$ pulses (\Cref{line:cycle data phase - initiate data}). We define $d_1=d$.
These $\DATA$ pulses propagate clockwise through all nodes, after the first received $\DATA$~pulse in each node but $T_1$ triggers it to switch to its data phase, after it has previously received a $\REQ$ pulse.
Note that each such node must have received a $\REQ$ pulse before it receives the first $\DATA$~pulse. This is because its counterclockwise neighbor, who sends the $\DATA$ pulse, moves to the data phase only after it has sent a clockwise $\REQ$ pulse.

Once a node $u \neq T_1$ is in its data phase of epoch~$k=1$, it records each received $\DATA$ pulse. The node  propagates the pulse clockwise and eventually the pulse arrives back at $T_1$. Thus, since $T_1$ sends $d_1$ $\DATA$ pulses, after propagating them,
each node~$u$ has $\tilde P_u = 10^{b_1^u}1^{d_1}$.
Once the $d_1$~clockwise pulses reach back at~$T_1$, and only then, it issues a counterclockwise $\END$~pulse (\Cref{line:cycle data phase - initiate end}).
$T_1$ does not generate nor does it propagate any additional pulses before receiving the propagated $\END$ from the other side of the cycle.
This implies that any $u\ne T_1$ receives exactly $d_1$ clockwise $\DATA$ pulses followed by a counterclockwise $\END$ pulse.
Upon receiving the $\END$ pulse, $u$ processes the message $1^{d_1}$ (\Crefrange{line:cycle data phase - record message}{line:cycle data phase - complete process message}) and forwards the $\END$ pulse (\Cref{line:cycle data phase - forward END}). It then completes its $k$-th data phase and its $k$-th epoch, with $\tilde P_u = 10^{b_1^u}1^{d_1}0$.
The node $T_1$ also has $\tilde P_{T_1} = 10^{b_1^{T_1}}1^{d_1}0$ when it receives the $\END$ pulse and switches to the next token phase, at time $t_1$. At that time, all the other nodes have already processed the $\END$~pulse.
This proves the first part of Property~(1).

Next, we need to prove that at time~$t_1$, none of the nodes has passed \Cref{token-phase-wait-after-REQ}. 
In order for a node to pass \Cref{token-phase-wait-after-REQ}, it must be the case that \emph{after} the node has switched to the token phase, it has received a $\REQ$ pulse followed by one additional pulse (in any direction). We argue this cannot happen.
Indeed, at the time where some node $v$ receives the $\END$ pulse and switches to its second token phase,  only the nodes in ${(T_1 \rcurvearrowright v]}$ have received the $\END$ pulse and only these nodes
have switched to their (second) token phase. In their token phase, they may or may not have sent a clockwise $\REQ$ pulse by this time. 
Thus, only nodes in $[T_1 \rcurvearrowright v)$ might have \emph{received} a $\REQ$ pulse. 
However, it is impossible that they received an additional pulse by time~$t_1$, as we next show. 
Each node in $(T_1 \rcurvearrowright v)$ that has received a $\REQ$ pulse is waiting to receive another pulse (\Cref{token-phase-wait-after-REQ})
and is not generating any pulse.
The node $T_1$, if receiving a $\REQ$ pulse, does not process it and does not send a $\REQ$ pulse before receiving an $\END$ pulse in \Cref{line:cycle data phase - wait for end}. 
It also never sends a pulse in the counterclockwise direction before receiving its $\END$ pulse back.
Furthermore, each node in $(T_1 \curvearrowright v)$ (for $v \ne T_1$) is still executing \Cref{line:cycle data phase - forward DATA}, 
so it only forwards pulses and never generates pulses. 
Finally, $v$ has just started its token phase and is waiting to receive a~$\REQ$.
We conclude that no additional pulse (beyond the $\REQ$ pulse, if sent) can be received by the nodes in $(T_1 \rcurvearrowright v)$. This holds for any $v$ at the time it transitions to its second token phase. It thus holds for all nodes at time~$t_1$, when the $\END$ pulse eventually reaches back at~$T_1$.

Next we prove Property~(2) based on the above description of the first epoch. At the onset (at time~$t_0$), $T_0$ is the only node with $\isTokenHolder_{T_0}=\mathsf{True}$. As mentioned above, $T_0$ sets $\isTokenHolder_{T_0}=\mathsf{False}$ and sends a $\TOKEN$ pulse during its token phase. The propagated $\TOKEN$ pulse is the one that triggers $T_1$ to set $\isTokenHolder_{T_1}=\mathsf{True}$ later, at time~$s_1$. Thus, it is clear that $s_1>t_0=0$, and that $T_0$ releases the token before $s_1$ and $T_1$ becomes a token holder at~$s_1$. 
Later, at time $t_1$, the node $T_1$ completes the first epoch, hence, $t_1>s_1$. The node $T_1$ does not set $\isTokenHolder=\mathsf{False}$ during the time frame $[s_1,t_1]$, and it remains to show that it is the only token holder throughout this time frame. 

It is clear that no node in~$(T_0 \rcurvearrowright T_1)$ has set itself as a token holder as otherwise, that node would have been the node we indicate as~$T_1$. 
After time $s_1$, no more $\TOKEN$ pulses are sent in the first token phase. Further, $T_1$ sends a clockwise $\DATA$ pulse that transitions all other nodes into their data phase. This implies that no node besides~$T_1$ can execute \Cref{token-phase-set-tokenholder} and set $\isTokenHolder=\mathsf{True}$ in this token phase. 
As for the second token phase, each node that reaches it before $t_1$ does not pass \Cref{token-phase-wait-after-REQ} before time~$t_1$, as we have shown above, thus in particular, it does not receive a $\TOKEN$ pulse and does not reach \Cref{token-phase-set-tokenholder}.  

Finally, we prove Property~(3), that is, that all nodes reach a global consistency regarding the sent message of the first epoch. This follows from the above analysis: As we argued, at the time some node~$u$ completes its first epoch, it holds that $\tilde P_u = 10^{b_1^u}1^{d_1}0$. The part $1^{d_1}$ corresponds to the $d_1$ $\DATA$ pulses initiated by~$T_1$, which form the encoding of the message communicated in this epoch. 
This completes the proof of Property~(3).

\medskip

\textbf{Induction Step.}
To complete the proof, we need to prove the induction step. Most of the above proof holds as is for $k>1$, if we replace $s_1,t_1,T_1$ with $s_k,t_k,T_k$, etc. 

Fix $k>1$ with $t_{k-1}<\infty$ (otherwise, the lemma holds vacuously). We use the induction hypothesis on epoch~${k-1}$. We are allowed to do so since $t_{k-1}<\infty$, which implies that at or after time~$t_{k-2}$ there is at least one non-empty~$Q_v$ and the three properties of the lemma apply to epoch~$k-1$. 
The differences between proving the base case and the step are as follows:

\smallskip

In the case where all nodes have an empty~$Q_v$, it is immediate in the base case that no node ever passes 
\Cref{line:cycle token phase - wait for REQ or new Qu message}; we prove the same happens here. However, all the induction hypothesis gives us is that at $t_{k-1}$ all nodes are waiting either in \Cref{line:cycle token phase - wait for REQ or new Qu message} or~\ref{line:cycle token phase - wait for REQ after seding one},
or~\ref{token-phase-wait-after-REQ}. 
Clearly, nodes cannot be in \Cref{line:cycle token phase - wait for REQ after seding one} since their queue is empty.
We now prove they cannot be in \Cref{token-phase-wait-after-REQ} as well. 

Assume towards a contradiction that $v$ is the first to 
pass \Cref{line:cycle token phase - wait for REQ or new Qu message}
in its $k$-th epoch.
Let $\tilde t$ be the time when $v$ receives the $\END$ pulse of its epoch number~$k-1$. After time~$\tilde t$, the node $v$ completes its epoch and transitions to its $k$-th token phase. Since $Q_v$ is empty, $v$ gets to \Cref{line:cycle token phase - wait for REQ or new Qu message} and awaits there for a $\REQ$~pulse. 
Because $v$ eventually reaches \Cref{token-phase-wait-after-REQ}, it must have received a clockwise pulse from its neighbor~$u$, which caused $v$ to pass \Cref{line:cycle token phase - wait for REQ or new Qu message}.

For $v \ne T_{k-1}$, recall that at time~$\tilde t$, the nodes $[T_{k-1}\curvearrowright v)$ are still in their data phase after receiving $d_{k-1}$ $\DATA$ pulses. Recall also that they do not generate new pulses but only propagate pulses, and they do not propagate any additional clockwise pulses because $T_{k-1}$ does not generate any clockwise pulses until the counterclockwise $\END$ reaches it. 
The above-mentioned neighbor $u$ belongs to $[T_{k-1}\curvearrowright v)$, hence, it does not propagate further clockwise pulses from time~$\tilde t$ until $u$ gets the $\END$ pulse.
In case $v=T_{k-1}$, $u$ gets the $\END$ pulse before time~$\tilde t$.

In any case, after $u$ gets the $\END$ pulse, $u$ transitions to its $k$-th token phase and reaches \Cref{line:cycle token phase - wait for REQ or new Qu message}. Therefore, if $u$ did send a $\REQ$ pulse that causes $v$ to pass \Cref{line:cycle token phase - wait for REQ or new Qu message}, then $u$ would have also passed~\Cref{line:cycle token phase - wait for REQ or new Qu message} prior to sending this $\REQ$~pulse, in contradiction to our assumption that $v$ is the first node to pass \Cref{line:cycle token phase - wait for REQ or new Qu message}.

\smallskip

If some node has a non-empty~$Q_v$, in the base case we had that all nodes begin the token phase at the same time~$t_0$, and then send a $\REQ$ pulse if their queue is non empty or if they receive a $\REQ$ pulse. 
When considering the induction step at time $t_{k-1}$, some nodes may have already started their $k$-th epoch, and have already sent a $\REQ$ pulse before time~$t_{k-1}$, as given by Property~(1) of the induction hypothesis. 
The behavior from this point on remains the same as described above for the base case.

\smallskip

For proving the global consistency property in the induction step, let $P_{u,k-1}$ be the transcript of~$u$ at the end of its epoch~$k-1$. By Property~(3) of the induction hypothesis, we know that there exist $d_1,\ldots,d_{k-1}$ and $b^u_1,\ldots, b^u_{k-1}$ for any $u\in V$ such that $P_{u,k-1} \triangleq 10^{b^u_1}1^{d_1}0\cdot10^{b^u_2}1^{d_2}0\dots10^{b^u_{k-1}}1^{d_{k-1}}0$ for any $u\in V$.
Furthermore, the above analysis shows that there exist an integer $d_k>0$ and an indicator $b^u_k\in\{0,1\}$ per $u$,
such that the pulses sent by node~$u$ during its $k$-th epoch can be described by 
$10^{b_k^u}1^{d_k}0$, where $1^{d_k}$ signifies the $\DATA$ pulses sent by $T_k$, which encodes the communicated message of this epoch. This gives Property~(3).
\end{proof}

Next, 
we show that \Cref{lemma:cycle} implies the correctness of the simulator (\Cref{thm:cycle-correctness}). We then analyze its overhead (\Cref{overhead-lemma}). Finally, we discuss in Section~\ref{sec:binary} %
ways to improve the obtained complexity (\Cref{lem:overhead-binary}).
Together, these three prove \Cref{thm:simpleCycle}.

\begin{theorem}
\label{thm:cycle-correctness}
Let $G=(V,E)$ be a cycle.
For any asynchronous protocol~$\pi$, let $\hat \pi$ be the protocol defined by  \ALGone
with the input~$\pi$. 
Then, executing $\hat \pi$ on the fully-defective~$G$ simulates an execution of $\pi$ on the noiseless network~$G$.
\end{theorem}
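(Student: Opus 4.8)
The plan is to derive correctness directly from the three properties of \Cref{lemma:cycle}, applied epoch by epoch. First I would observe that the simulator's message-passing activity decomposes cleanly into epochs: by the progress property, as long as some queue $Q_v$ is eventually non-empty, each epoch $k$ starts and all nodes complete it (at the finite time $t_k$), and the simulator reaches quiescence precisely when all queues are and remain empty. So it suffices to exhibit a single execution of $\pi$ over the noiseless $G$ whose transcript $\tau$ restricts to the local transcripts $\hat\tau_v$ produced by $\hat\pi$. Here I must be careful about what $\hat\tau_v$ means: it is the sequence of \emph{$\pi$-messages} that node $v$ delivers to (and receives from) its black-box copy of $\pi$, not the pulses. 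The pulse-level transcript $P_{u,k}$ from \Cref{lemma:cycle}\ref{Global consistency} is only the vehicle; the content layer is the unary-encoded $(m,u,v)$ triples.

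Next I would construct the target noiseless execution of $\pi$ explicitly. Using \Cref{lemma:cycle}\ref{Global consistency}, in epoch $k$ exactly one $\pi$-message is communicated: the triple $(m_k, T_k, v_k)$ that $T_k$ dequeued from $Q_{T_k}$ in \Cref{line:cycle data phase - dequeue}, encoded as $1^{d_k}$, and every node $u\ne T_k$ decodes $1^{d_k}$ to the same triple (\Crefrange{line:cycle data phase - record message}{line:cycle data phase - complete process message}) and delivers $m_k$ to $\pi$ iff it is the designated recipient $v_k$. I would schedule the noiseless execution so that $\pi$'s internal events at each node happen in exactly the order the black box performed them in $\hat\pi$ — this is legitimate because $\pi$ is asynchronous and event-driven, and the simulator only ever \emph{delays} $\pi$'s messages (it never fabricates, drops, reorders within a node, or alters them). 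Concretely, message $m_k$ is ``sent'' by $T_k$ in the noiseless execution at the (logical) moment $\pi$ enqueued it into $Q_{T_k}$, and ``received'' by $v_k$ at the moment $\hat\pi$ delivered it in \Cref{line:cycle data phase - deliver}; since channels in the noiseless model also only guarantee finite arbitrary delay, this schedule is a valid execution of $\pi$. Then by induction on $k$, the state of each $\pi$-instance after epoch $k$ in $\hat\pi$ equals its state at the corresponding point of the noiseless execution, and hence $\tau_v=\hat\tau_v$ for every $v$, including outputs (which $\hat\pi$ copies verbatim from $\pi$).

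I would also need to check two boundary issues. First, \emph{liveness of delivery}: every $\pi$-message that $\pi$ generates is eventually delivered to its recipient. This follows because a generated message is enqueued into some $Q_v$, which then becomes non-empty, so by the progress property a later epoch eventually carries it — but I must argue the queue is actually drained in FIFO-consistent fashion and that $T_k$ is well-defined for every $k$ up to the number of outstanding messages; this is exactly what \Cref{lemma:cycle}\ref{Single token holder} and \ref{Progress} give (the token keeps circulating to whoever has a pending message). Second, \emph{no spurious delivery}: a node delivers $m'$ to $\pi$ only when $u=v'$ in \Cref{line:cycle data phase - deliver}, and by global consistency the decoded $(m',u',v')$ is genuinely the triple $(m_k,T_k,v_k)$, so no message is delivered to the wrong node or delivered twice. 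The main obstacle I anticipate is the bookkeeping of the asynchrony argument — precisely formalizing ``$\hat\pi$ only delays $\pi$'s events'' into a valid noiseless schedule and showing the per-node $\pi$-state equality is preserved across the epoch boundary, including the case $T_k=T_{k-1}$ and the case where $\pi$ produces several messages at once (which get enqueued and spread over several epochs). Everything else is a direct translation of \Cref{lemma:cycle}.
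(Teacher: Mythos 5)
Your proposal matches the paper's proof in essence: both construct the noiseless execution $\mathcal{E}_\pi$ by a scheduler that delays each $\pi$-message until the moment the simulator delivers it in \Cref{line:cycle data phase - deliver}, and both argue by induction that the transcripts coincide, using \Cref{lemma:cycle} to justify that epoch~$k$ carries exactly $T_k$'s dequeued message to its intended recipient. The one place you flag as ``the main obstacle'' — handling several $\pi$-messages generated at once, and cutting the per-node state cleanly across epoch boundaries — is resolved in the paper by doing the induction at the granularity of individual transcript \emph{events} $\tau_j$ (one send or one receive at a time), rather than per epoch as you sketch; this sidesteps any need to pick a consistent cut within an epoch and makes the scheduler-validity and prefix claims immediate to state and preserve. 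Apart from that refinement, the two arguments are the same, including the treatment of quiescence and the observation that outputs are valid because $\tau$ is at every moment a prefix of a genuine noiseless transcript.
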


\begin{proof} 
Let $\mathcal{E}_{\hat\pi}$ be an execution of $\hat \pi$ over the fully-defective cycle~$G$. 
We derive a transcript $\tau$ from $\mathcal{E}_{\hat\pi}$, and claim that $\tau$ corresponds to a valid transcript of some execution of~$\pi$ in the noiseless network~$G$, which we denote $\mathcal{E}_\pi$. 
The reader should distinguish between the simulated~$\pi$ which is the black-box interface used as an input of
\ALGone,
and the protocol $\pi$~that generates the execution $\mathcal{E}_\pi$ on the noiseless network~$G$. 
In order to avoid confusion, we will use the term \emph{simulated $\pi$} to denote the former and refer to $\mathcal{E}_\pi$ when discussing the latter.

Let us specify the structure of the transcript~$\tau$. We can think about it as an ordered sequence of events $\tau=\tau_1 \tau_2 \cdots$, where $\tau_i$ is either the event that some node $u$ sent a message $m$ to~$v$, i.e., $\tau_i=(\langle\mathsf{send}\rangle,u,v,m)$ or the event that some node $u$ received a message $m$ from $v$, i.e., $\tau_i=(\langle\mathsf{receive}\rangle,u,v,m)$.

To derive $\tau$ from $\mathcal{E}_{\hat\pi}$, we follow the execution of $\hat \pi$ as the time evolves.
We add a \emph{send} event
every time the simulated~$\pi$ instructs node~$u$ to send a new message~$m$. Specifically, when node~$u$ enqueues $M=(m,u,v)$ to~$Q_u$,  
we add the event $(\langle\mathsf{send}\rangle,u,v,m)$ to~$\tau$.
Additionally, every time some node~$v$  delivers the message $M=(m,u,v)$ to the simulated~$\pi$ (\Cref{line:cycle data phase - deliver}),
we add the event
$(\langle\mathsf{receive}\rangle,v,u,m)$ to~$\tau$.
Recall that $\pi$ generates messages to~$u$ sequentially and $\tau$ maintains this order; events that happen at the same time in different nodes are ordered  arbitrarily in~$\tau$.

Given $\mathcal{E}_{\hat\pi}$ and its derived~$\tau$, we prove that there exists an execution $\mathcal{E}_\pi$ of~$\pi$ on the noiseless network~$G$ that produces these exact same events in the same order, i.e., such that $\tau$ is exactly the transcript of~$\mathcal{E}_\pi$.

The execution $\mathcal{E}_\pi$ is obtained by executing~$\pi$ over~$G$ with the following scheduler that imitates the behavior of~$\mathcal{E}_{\hat\pi}$.
Every time a node sends a message in~$\mathcal{E}_\pi$, the message is delayed at the channel and delivered only at the time the respective message is received in~$\mathcal{E}_{\hat\pi}$. That is, our scheduler ``follows'' the execution of $\mathcal{E}_{\hat\pi}$, and delays each message until the time its corresponding message is delivered in $\mathcal{E}_{\hat\pi}$. 
Specifically, whenever a \emph{receive} event is registered in~$\tau$ (in~$\mathcal{E}_{\hat\pi}$), we deliver the corresponding message in~$\mathcal{E}_\pi$.
The scheduler also controls the execution time of all nodes, which enables it to control the timing of the send events $\pi$ initiates in $\mathcal{E}_\pi$ so they correspond to the same order of send events in~$\mathcal{E}_{\hat\pi}$. 
We now show that the above defines a valid scheduler, and that the resulting $\mathcal{E}_\pi$ has the transcript~$\tau$.

Define $\mytime(j)$ to be the time in $\mathcal{E}_{\hat\pi}$ when the event $\tau_j$ is registered (note, for multiple events that occur at the same time, we let $\mytime(j)$ refer only to the events up to~$\tau_j$). 
We prove the following statement by induction on~$j$:
(1) The scheduler is valid: whenever instructed to deliver a message~$m$, this $m$ was issued to the channel and hasn't been delivered yet.
(2) The transcript~$\tau$ derived from $\mathcal{E}_{\hat\pi}$ is a prefix of the transcript of~$\mathcal{E}_\pi$.

For the base case, $\mytime(0)$, the transcript $\tau$ is empty. It is clear that the scheduler is (vacuously) valid, and that $\tau$ is a prefix of the transcript of~$\mathcal{E}_\pi$.

We proceed with the induction step. Assume that the induction statement holds at~$\mytime(j-1)$, that is,
at this point in time, the events $\tau_1\cdots \tau_{j-1}$ are a prefix of the events in~$\mathcal{E}_\pi$, and all the actions of the scheduler so far are valid.
Now consider the next event recorded to~$\tau$. There are two options here, either it is a send event or a receive event. 

In the first case, let 
$\tau_j= (\langle\mathsf{send}\rangle,u,v,m)$. 
Consider $\mathcal{E}_\pi$ right after the event $\tau_{j-1}$, i.e., at $\mytime(j-1)$. 
Every node~$u$ in~$\mathcal{E}_\pi$ has exactly the same state as the simulated $u$ in the simulated~$\pi$, which follows from the induction hypothesis. Therefore, if the simulated $\pi$ instructs $u$ to send the message $m$ to $v$ (which triggers $\tau_j$ in~$\mathcal{E}_{\hat\pi}$),  the same (eventually) happens at node~$u$ in~$\mathcal{E}_\pi$. The scheduler delays all other nodes until the same message is sent in~$\mathcal{E}_\pi$,
and the claim thus holds after event $\tau_j$, i.e., at $\mytime(j)$ as well.

The other case is when the $j$-th event is a receive event, say, $\tau_{j}=(\langle\mathsf{receive}\rangle,u,v,m)$.
Consider the node~$u$ that executes \Cref{line:cycle data phase - deliver} which corresponds to this event.
By Property~(3) of \Cref{lemma:cycle}, this message is sent by the message sender of that epoch,~$v$. Denote this epoch by~$k$. This means that at the beginning of epoch $k$ the message~$m$ appeared in~$Q_v$ and was dequeued by $v$ at the beginning of the data phase of epoch $k$; note that dequeued messages are never enqueued back to~$Q_v$. This means that at some point in time before~$\mytime(j)$, node~$v$ enqueued this message to~$Q_v$ and it was never dequeued before the $k$-th epoch; let $\tau_i$ with $i<j$ be the corresponding event of enqueuing~$m$ to~$Q_v$.
Now consider $\mathcal{E}_\pi$. By the induction hypothesis we know that up till event $\tau_{j-1}$ at~$\mytime(j-1)$, the transcript~$\tau$ describes the execution~$\mathcal{E}_\pi$, thus, the message $m$ was sent at $\mytime(i)$ and is currently being delayed by the channel (since the first and only delivery of~$m$ in $\mathcal{E}_{\hat\pi}$ occurs at~$\mytime(j)$). 
The scheduler then instructs the channel to deliver this message, which is a valid action as this message was already issued to the channel and never delivered before.
This completes the inductive proof.

\medskip

The above proves that at any point in time the execution $\mathcal{E}_{\hat\pi}$ over the fully-defective~$G$ simulates a prefix of a valid execution of $\pi$ over the noiseless~$G$. It remains to show liveness, namely, that the prefix keeps growing. 
This follows from Properties~(1) and~(3) of \Cref{lemma:cycle}: the simulation makes progress as long as some $Q_v$ is non-empty or eventually becomes non-empty. 
Progress means that all nodes begin and complete their next epoch. In each epoch one message (from some~$Q_v$) is being delivered to its destination.
If the simulated~$\pi$ of some node gives an output, the same node will give the same output in~$\mathcal{E}_{\hat\pi}$. 
If we consider the point in time where all nodes have given output, then all these outputs are valid since $\tau$ at that time is a prefix of the execution $\mathcal{E}_{\pi}$, which also gives the same outputs, by definition.

The only situation where the simulation could reach quiescence is when all the queues~$Q_v$ are empty and remain empty indefinitely. 
But $\tau$ up to that time, as argued above, is a transcript of some~$\mathcal{E}_\pi$ on the noiseless~$G$, where no messages are currently delayed by any channel, and no new messages are going to be sent since the nodes in~$\mathcal{E}_\pi$ are at the same state as in the simulated~$\pi$. Thus, $\mathcal{E}_\pi$ has reached quiescence as well. 
\end{proof}

Let us point out a couple of additional remarks about our simulation.

\begin{remark}\textbf{FIFO:}
The scheduler derived from our simulator maintains FIFO: Consider an execution~$\mathcal{E}_{\hat\pi}$ of the simulator~$\hat\pi$. If multiple messages from~$u$ to~$v$ exist in the simulation, they are enqueued and communicated by their order. These enqueues translate in~$\mathcal{E}_\pi$ 
to messages sent over the same link. However, the scheduler for $\mathcal{E}_\pi$ will deliver them according to their order in~$\mathcal{E}_{\hat\pi}$'s transcript, which is their order in~$Q_u$, that maintains a FIFO property. This strengthens our result, that is, the simulation works correctly both with or without FIFO assumptions for the simulated protocol.
\end{remark}

\begin{remark}\textbf{No starvation:}
Our proof shows that as long as some~$u$ has a message to send, then \emph{some message} will be sent during the next epoch.
Since the $\TOKEN$ pulse travels counterclockwise sequentially in the cycle, there can be at most $n-1$ epochs until $u$ becomes the token holder. 
Thus, our simulator actually satisfies the stronger notion of \emph{no starvation}. 
\end{remark}

\begin{remark}\label{rem:broadcast}
\textbf{Broadcast:}
We note that by design, our simulator offers an additional \emph{broadcast} operation. That is, a node can send a message whose destination is all other nodes. 
To provide this functionality, we utilize the fact that every message arrives at all nodes, regardless of its original destination.
To broadcast a message~$m$, a node simply fixes its destination to be~$*$. 
Each node that decodes a message 
delivers it to~$\pi$ if its destination is either that node (as before) or~$*$.
We will use this feature in our Robbins cycle construction in \Cref{section:robbins}.
\end{remark}

\begin{lemma}\label{overhead-lemma} The overhead of simulating a single message $m$ in \ALGone is $\CCm(m)=|V|^{O(1)}\cdot 2^{|m|}$.
\end{lemma}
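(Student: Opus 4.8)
The plan is to obtain the bound by reading off the total pulse count of the single epoch that carries the message directly from the transcript shape guaranteed by \Cref{lemma:cycle}. By property~(3) of \Cref{lemma:cycle}, when a node $u$ finishes its $k$-th epoch its sent transcript is $P_{u,k}=10^{b^u_1}1^{d_1}0\cdots10^{b^u_k}1^{d_k}0$, so the pulses $u$ contributes \emph{during} epoch~$k$ are exactly the block $10^{b^u_k}1^{d_k}0$, of length at most $d_k+3$: one clockwise $\REQ$ pulse, at most one $\TOKEN$ pulse (since $b^u_k\in\{0,1\}$), the $d_k$ $\DATA$ pulses that $T_k$ originates and every other node forwards, and one $\END$ pulse. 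Summing over all $|V|$ nodes, the $k$-th simulated message triggers at most $|V|(d_k+3)=O(|V|\cdot d_k)$ pulses overall.

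It then remains to bound $d_k$ in terms of the message length. The string $1^{d_k}$ is the unary encoding of the $k$-th message $m$ dequeued in \Cref{line:cycle data phase - dequeue}; taking the natural convention that $d$ is the integer whose binary representation is $1\cdot m$ (the leading $1$ makes the map injective, so the decoding in \Cref{line:cycle data phase - decode} is well defined), one gets $2^{|m|}\le d_k\le 2^{|m|+1}-1$, i.e.\ $d_k=\Theta(2^{|m|})$. Plugging this in, the epoch that transmits $m$ uses $O(|V|\cdot 2^{|m|})$ pulses; since a pulse is a content-less message realizable with a single bit and $\CC$ counts only the lengths of sent messages, this is also the bit-complexity up to a constant factor, yielding $\CCm(m)=O(|V|\cdot 2^{|m|})=|V|^{O(1)}\cdot 2^{|m|}$.

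Most of the work is already carried by \Cref{lemma:cycle}, so the argument is essentially bookkeeping; the one point I would be careful about — the only mild obstacle — is deciding which pulses to charge to $m$. Because of asynchrony the epoch boundaries are not synchronized across nodes (a node may emit the $\REQ$ pulse of epoch~$k$ while still finishing the data phase of epoch~$k-1$), but property~(3) of \Cref{lemma:cycle} pins down $P_{u,k}$ unambiguously for every $u$, so the block attributed to epoch~$k$ is exactly $10^{b^u_k}1^{d_k}0$ regardless of the scheduling and no pulse is double-counted. This makes the per-message bound well defined and consistent with the pulse count stated in \Cref{thm:simpleCycle}. (If one instead wanted the cumulative cost of all epochs up to the one delivering $m$, it would be $O(|V|\sum_{j\le k}d_j)$, but this depends on the other messages, so the per-epoch bound is the natural statement.)
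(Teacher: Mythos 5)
Your approach matches the paper's: use Property~(3) of \Cref{lemma:cycle} to read off that each node sends exactly the block $10^{b^u_k}1^{d_k}0$ during epoch~$k$, multiply by $|V|$, and bound $d_k$ via the unary encoding. However, there is a real slip in bounding $d_k$. In \Cref{line:cycle data phase - dequeue}, what gets unary-encoded is the full item $M=(m,u,v)$ dequeued from~$Q_u$, which includes the $O(\log|V|)$-bit source and destination identifiers, not just the payload~$m$. Consequently $d_k = 2^{\Theta(|m|+\log|V|)}$, not $\Theta(2^{|m|})$, and the per-epoch pulse count is $\Theta(|V|\cdot 2^{|m|+\Theta(\log|V|)})$ rather than $O(|V|\cdot 2^{|m|})$ as you claim. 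Your intermediate bound $O(|V|\cdot 2^{|m|})$ is therefore false; it only \emph{looks} consistent because you immediately relax to $|V|^{O(1)}\cdot 2^{|m|}$, which happens to absorb the missing $2^{O(\log|V|)}=|V|^{O(1)}$ factor. The paper makes this explicit by writing $|M|=|m|+O(\log n)$ and then $\CCm(m)=O(n\cdot 2^{|m|+O(\log n)})=n^{O(1)}\cdot 2^{|m|}$, which is both correct and accounts for the polynomial factor in the stated bound; you should do the same. Everything else in your write-up, including the remark on charging pulses to epochs via the transcript structure, is fine and matches the paper's reasoning.
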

\begin{proof}
Let $n=|V|$ be the length of the cycle. Suppose the message $M=(m,u,v)$ is dequeued in some epoch and is being communicated (i.e., $m$ is being communicated by the simulated~$\pi$ over the link~$(u,v)$). We can write $|M|=|m|+O(\log n)$.
Communicating~$M$ over the cycle results in the following pulses sent by \emph{each} node during this epoch (Property~(3) of Lemma~\ref{lemma:cycle}):
a single $\REQ$ pulse, at most a single $\TOKEN$ pulses, $2^{|M|}$ $\DATA$ pulses and a single $\END$ pulse. 
Since there are $n$~nodes, where each node sends at most $3+2^{|M|}$ pulses,  we conclude that $\CCm(m)=O(n\cdot2^{|m|+O(\log n)})=n^{O(1)}\cdot 2^{|m|}$.
\end{proof}

\subsection{Reducing the communication via binary encoding}
\label{sec:binary}

Encoding each message via a unary encoding leads to a pulse overhead that is exponential in the message size: $\CCm(m)=\text{poly}(n, 2^{|m|})$, with $n=|V|$.
We now 
show how to send messages over a simple cycle via a binary encoding of the message. 
This binary encoding leads to a much improved communication complexity of $\CCm(m)=O(n\cdot|m|+ n\log n)$.

Let $M=(m,u,v)\in \{0,1\}^*$ be the message that the token holder~$u$ wishes to send.
The idea is to encode the bits of~$M$ so that a clockwise pulse denotes the bit~$1$, and a counterclockwise pulse denotes the bit~$0$; we denote these as $\DATA(1)$ and $\DATA(0)$, respectively.
Since the order of the bits is important, the token holder sends the next bit only after receiving the previous bit from the other direction of the cycle. (As an optimization, all the pulses of consecutive same-bit sequences may be sent concurrently, and then the token holder should wait to receive all the pulses of the same direction before sending pulses in the other direction. For clarity of the presentation, we do not delve into the details.)
However, now that counterclockwise pulses signify a 0 data bit, 
the challenge is that we need a different way to indicate the end of transmitting~$M$, that is, we need a way to encode an~$\END$ pulse. 

We overcome this challenge by having the nodes agree on a fixed parameter~$L$.
In order to communicate that $M$'s transmission has completed (replacing the $\END$ pulse), 
the token holder sends $L\ge2$~consecutive counterclockwise pulses. 
In addition, the bitstring~$M$ is padded with a $1$ after every $L-1$ consecutive $0$s (when read from its first symbol and onward).
An additional trailing $1$ is sent after $pad(M)$ and guarantees that, even if $M$ has ended with a~$0$ or a sequence of~$0$s, then $pad(M)\cdot1\cdot0^L$ has $L$ consecutive~$0$s only at its suffix.   
Furthermore, we add a preceding $1$ before~$pad(M)$: Recall that the token holder must initiate the sending protocol with a clockwise $\DATA$~pulse, as 
otherwise, the sender's first %
counterclockwise
pulse might be mistaken for a $\TOKEN$~pulse in those nodes that have not yet forwarded a $\TOKEN$ pulse and are still in the token phase.
To summarize, in order to communicate the message~$M$,  the token holder communicates pulses according to the encoded message $Z=1\cdot pad(M)\cdot1\cdot0^L$.

The revised data phase algorithm is given in Algorithm~\ref{alg:cycle compiler data phase-binary}.
\begin{algorithm}[ht]
\caption{Data phase: Broadcasting a message, binary version (node $u$)}\label{alg:cycle compiler data phase-binary}
\begin{algorithmic}[1]
\Statex An integral parameter $L\ge 2$ is agreed upon all nodes.
\Statex
\If{$\isTokenHolder_u$} %
    \State dequeue a message from $Q_u$ and denote it as $M=(m,u,v)$ \label{line:BinaryData:dequeue}
    \State \parbox[t]{\columnwidth-\algorithmicindent-\algorithmicindent}{let $pad(M)$ be the string obtained from $M$ by inserting a~$1$ after every $L-1$ consecutive $0$s of~$M$\strut}
    \State $Z \gets 1\cdot pad(M)\cdot 1\cdot 0^L$  
    \label{line:Binary:setZ}
    \For {$j=0$ to $|Z|-1$}
        \If {$Z_j = 1$} 
            \State send a clockwise $\DATA(1)$ pulse 
            \State wait until a clockwise pulse is received
        \Else %
            \State send a counterclockwise $\DATA(0)$ pulse 
            \State wait until a counterclockwise pulse is received
        \EndIf
    \EndFor
\Else %
    \Repeat 
    \State forward every incoming pulse along the cycle, according to its original direction
    \State record each clockwise pulse as a~$1$ and each counterclockwise pulse as a~$0$
    \Until{$L$ consecutive $0$s have been recorded}
	\State let $Z$ be the recorded string. Parse $Z= 1\cdot P\cdot1\cdot0^L$
	\State 
	\parbox[t]{\columnwidth-\algorithmicindent-\algorithmicindent}{%
	let $M'\gets pad^{-1}(P)$ be the string obtained by removing any~$1$ that appears after a sequence of $L-1$ consecutive~$0$s. Parse $M'=(m',u',v')$}
	{\label{line:binary:decode}}
	\IIf{$u = v'$} 
	    deliver $m'$ to $\pi$ (as if received from $u'$) 
	    \label{line:BinaryData:deliverMsg}
	\EndIIf
\EndIf
\State continue with Algorithm~\ref{alg:cycle compiler token phase} 
\label{line:BinaryData:End}
\end{algorithmic}
\end{algorithm}
We argue that replacing Algorithm~\ref{alg:cycle compiler data phase} with Algorithm~\ref{alg:cycle compiler data phase-binary} does not change the premise of Theorem~\ref{thm:cycle-correctness}. 
For the rest of this section, we change \Cref{line:cycle token phase:end} in \Cref{alg:cycle compiler token phase} to say ``continue with \Cref{alg:cycle compiler data phase-binary}''.

\smallskip

We show that an execution of Algorithm~\ref{alg:cycle compiler token phase} along with Algorithm~\ref{alg:cycle compiler data phase-binary} satisfies a \emph{Global consistency} property similar to the one of Lemma~\ref{lemma:cycle}. 
One can easily verify that the \emph{Progress} property and the \emph{Single token holder} property hold as well, with the same proof as before.

\begin{lemma}
\label{lem:binary:adaptation of lemma 3}
Consider the following modification of the \textbf{Global consistency} property: 
\begin{quote}
There exist strings $z_1,\dots, z_k$, where $z_i=1m_i10^L$ for some $m_i \in\{0,1\}^+$, and for any $u\in V$ there are $b^u_1,\dots, b^u_k \in \{0,1\}$, such that when the node~$u$ completes its $k$-th epoch, its sent transcript (the overall pulses sent so far by~$u$) is $P_{u,k} \triangleq
10^{b^u_1}z_1 \cdot10^{b^u_2}z_2\cdot \cdots \cdot10^{b^u_k}z_k$.

In addition, the message each node decodes and processes (\Cref{line:binary:decode}  in \Cref{alg:cycle compiler data phase-binary}) in epoch~$k$ is exactly the message  $pad^{-1}(m_k)$ sent by~$T_k$.
\end{quote}

Then the statement of Lemma~\ref{lemma:cycle} holds for the simulator given by Algorithms~\ref{alg:cycle compiler token phase} and~\ref{alg:cycle compiler data phase-binary}.
\end{lemma}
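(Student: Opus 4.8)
The plan is to re-run the induction of \Cref{lemma:cycle} essentially verbatim, tracking the revised encoding string $Z = 1\cdot pad(M)\cdot 1\cdot 0^L$ in place of the old unary $1^d$ and the $\END$ pulse. The only genuinely new ingredient is to verify that, given the padding, the decoder in a non-token-holder node stops at exactly the right point --- that is, that the first occurrence of $L$ consecutive counterclockwise pulses (0s) in the stream it forwards during epoch~$k$ coincides with the trailing $0^L$ block that $T_k$ deliberately sends to terminate the data phase, and never earlier. Since \emph{Progress} and \emph{Single token holder} depend only on the directional/sequential structure of the pulses and not on their count, and the new data phase is still ``token holder emits one pulse, waits for it to come back, repeats, then the rest of the cycle are pure forwarders,'' those two properties carry over with no change; I would state this explicitly and then concentrate on \emph{Global consistency}.

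The key step is the padding invariant. First I would observe that, by construction of $pad(\cdot)$, the string $pad(M)$ contains no run of $L-1$ consecutive 0s that is not immediately followed by a 1; hence $pad(M)$ contains no run of $L$ consecutive 0s at all. Appending the trailing $1$ before $0^L$ ensures that $pad(M)\cdot 1$ also contains no run of $L$ zeros except possibly created by the boundary, which it is not since it ends in $1$. And the leading $1$ means $Z$ starts with $1$, so the first $L$ zeros in $Z$ occur only as its suffix $0^L$. Therefore, reading the directional pulse stream $Z$ sequentially, the condition ``$L$ consecutive 0s recorded'' in \Cref{alg:cycle compiler data phase-binary} fires for the first time exactly when the $0^L$ suffix has been received in full, and never before. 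Crucially this must be argued at the level of the \emph{received} stream of each node, not just the string $Z$ as emitted by $T_k$: because the token holder sends bit $j+1$ only after the pulse for bit $j$ returns from the other side, and all intermediate nodes merely forward in the original direction, every node receives the pulses of epoch~$k$ in exactly the order $Z_0, Z_1,\dots, Z_{|Z|-1}$, with no interleaving from a neighboring epoch --- this is precisely the sequential-issue argument already used in \Cref{lemma:cycle} for $\DATA$ then $\END$, and I would cite it. The leading $1$ of $Z$ also plays the same role the first $\DATA$ pulse played in the unary case: it is clockwise, so it cannot be mistaken for a $\TOKEN$ pulse by a node still in its token phase, which is exactly why the sequential-issue reasoning still applies at the token/data boundary.

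With the invariant in hand, the induction follows the template of \Cref{lemma:cycle}: at time $s_k$ each node~$u$ has partial transcript $\tilde P_u = 10^{b_k^u}$ (prefixed by $P_{u,k-1}$) where $b_k^u$ indicates whether $u$ forwarded a $\TOKEN$ pulse this epoch --- unchanged from before --- then $T_k$ emits the pulses of $Z_k := 1\cdot pad(M_k)\cdot 1\cdot 0^L$ one at a time, each returning before the next is sent, so every node ends the $k$-th epoch with $\tilde P_u = 10^{b_k^u}z_k$ where $z_k = 1\,m_k\,1\,0^L$ and $m_k = pad(M_k)$; concatenating gives $P_{u,k} = 10^{b_1^u}z_1\cdots 10^{b_k^u}z_k$ as claimed. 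For decoding: each non-token-holder records the stream, stops exactly at the suffix $0^L$ by the invariant, parses $Z = 1\cdot P\cdot 1\cdot 0^L$ recovering $P = pad(M_k)$, computes $pad^{-1}(P) = M_k = (m_k', u_k', v_k')$, and delivers iff it is the destination --- so the message processed is the one $T_k$ dequeued, matching the statement. The remaining clauses of \Cref{lemma:cycle}, Progress items and the ``no node passes \Cref{token-phase-wait-after-REQ} before $t_k$'' argument, go through verbatim since they only use that $T_k$ generates no counterclockwise pulse until its terminating block returns and generates no clockwise pulse after it, which still holds. The main obstacle, as flagged, is getting the padding/stopping invariant airtight, in particular handling the edge cases where $M$ itself ends in a long run of 0s (handled by the explicit trailing $1$) and confirming no spurious $0^L$ run straddles the internal $1$-insertions of $pad(M)$; everything else is bookkeeping parallel to the already-proven lemma.
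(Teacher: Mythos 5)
Your proposal is correct and follows essentially the same route as the paper's proof: both isolate the padding invariant (leading $1$, internal $pad$ insertions, and trailing $1$ guarantee that $0^L$ appears only as the suffix of $Z$), both invoke the sequential one-pulse-at-a-time emission by $T_k$ to argue that every node sees the bits of $Z$ in order without interleaving, and both conclude that each node's epoch-$k$ sent block is $10^{b_k^u}z_k$, with Progress and Single Token Holder carried over unchanged. The only difference is presentational—you spell out the no-$L$-run argument a bit more explicitly than the paper does—but the substance is identical.
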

In order to avoid excessive repetition, we sketch below only the differences from the proof of Lemma~\ref{lemma:cycle} that stem from replacing Algorithm~\ref{alg:cycle compiler data phase} with Algorithm~\ref{alg:cycle compiler data phase-binary}.
\begin{proof}
Recall from the proof of Lemma~\ref{lemma:cycle}, that $T_k$~gains the token during its $k$-th token phase (at time~$s_k$) since its~$Q_{T_k}$ is non-empty and a $\TOKEN$ pulse arrives from its counterclockwise neighbor. The node $T_k$ then switches to its data phase (\Cref{alg:cycle compiler data phase-binary}).

The node $T_k$ dequeues a message~$M$ from its queue~$Q_{T_k}$ and sets $z_k= 1\cdot pad(M) \cdot 10^L$ in \Cref{line:Binary:setZ}. Thus, its first pulse is a clockwise $\DATA(1)$ pulse, which causes every other node $u$ to switch to its data phase and execute the code with $\isTokenHolder_u = \mathsf{False}$, similarly to the case in the proof of \Cref{lemma:cycle}, upon receiving the first~$\DATA$.

Note that $T_k$ transmits bits sequentially and proceeds to the next bit only after the previous pulse is received from its other side of the cycle. That is, once the first $\DATA(1)$ pulse arrives back at $T_k$, it continues to communicating $pad(M)\cdot10^L$, bit after bit.

The padding $pad(M)$ and the trailing $1$ following it guarantee that there exists only a single substring of $L$ consecutive~0s in~$z_k$, which resides at the suffix of~$z_k$.
It follows that all other nodes receive the string~$z_k$: they record the message communicated by~$T_k$ bit by bit, until they see $L$ consecutive 0s. This sequence appears only at the suffix of~$z_k$ and signifies its termination.
We can thus deduce that the message~$Z$ recorded by each node has the structure $Z=1 \cdot P \cdot  1 0^L$ so each node continues to extracting the part $P$ (whose length is unknown beforehand) and decodes $M'=pad^{-1}(P)$ to obtain the correct message~$M'=M$ communicated by~$T_k$ in \Cref{line:binary:decode}. If the node is the recipient of~$M$ it delivers it to its simulated~$\pi$ (\Cref{line:BinaryData:deliverMsg}).
Each node then completes its $k$-th epoch, and transitions to its token phase~$k+1$ (\Cref{alg:cycle compiler token phase}).

Since every received pulse is forwarded along the same direction it was received,  during the $k$-th epoch each node $u$ transmits exactly the sequence of pulses described by $10^{b^u_k}z_k$, and thus its overall sent transcript is  $P_{u,k} = P_{u,k-1} \cdot 10^{b^u_k}z_k$, which has the correct structure using the induction hypothesis.
The rest of the proof follows the one of Lemma~\ref{lemma:cycle} as is.
\end{proof}

\begin{lemma}\label{lem:overhead-binary} The overhead of simulating a single message~$m$ by \Cref{alg:cycle compiler token phase} and \Cref{alg:cycle compiler data phase-binary} over the simple cycle~$G$ is $\CCm(m)=O(n\cdot |m|+n\log n)$.
\end{lemma}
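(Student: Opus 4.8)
The plan is to count the pulses that \Cref{alg:cycle compiler data phase-binary} (together with \Cref{alg:cycle compiler token phase}) causes each node to send during the single epoch in which the message $m$ is communicated, and then multiply by $n=|V|$. First I would bound the length of the encoded string $Z$. Writing $M=(m,u,v)$, the two node identities contribute $O(\log n)$ bits, so $|M|=|m|+O(\log n)$. Fixing $L$ to a constant (say $L=2$; recall the algorithm only requires $L\ge 2$), the map $pad(\cdot)$ inserts at most one extra $1$ per $L-1$ consecutive $0$s, hence $|pad(M)|\le |M|\cdot\tfrac{L}{L-1}=O(|M|)$, and therefore $|Z| = \lvert 1\cdot pad(M)\cdot 1\cdot 0^L\rvert = O(|M|) = O(|m|+\log n)$.

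Next I would invoke the \textbf{Global consistency} property of \Cref{lem:binary:adaptation of lemma 3}: in the epoch $k$ in which $m$ is sent, every node $u$ sends exactly the pulse pattern $10^{b^u_k}z_k$, where $z_k = 1\cdot pad(M)\cdot 1\cdot 0^L = Z$ and $b^u_k\in\{0,1\}$. Thus each node sends $1+b^u_k+|Z|\le 2+|Z| = O(|Z|)$ pulses during this epoch — the leading $1$ being its $\REQ$ pulse, the optional $0$ being its (at most one) $\TOKEN$ pulse, and the remaining $|Z|$ pulses being the forwarded data and termination pulses. Summing over all $n$ nodes yields $\CCm(m) = O(n\cdot|Z|) = O(n(|m|+\log n)) = O(n|m|+n\log n)$ pulses, as claimed. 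Since a pulse is a content-less single-bit message, the same bound holds verbatim when $\CC$ is measured in bits rather than pulses.

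The only mildly delicate point is the padding bound, which relies on $L$ being a fixed constant rather than growing with $n$ or $|m|$; I would make this explicit by pinning $L=2$ for the purposes of the complexity analysis. Everything else is an immediate consequence of \Cref{lem:binary:adaptation of lemma 3}, which already did the structural work of determining exactly which pulses each node emits; in particular, no separate accounting of the $\REQ$ and $\TOKEN$ pulses is needed, as they are already subsumed in the transcript fragment $10^{b^u_k}z_k$. I do not anticipate any genuine obstacle here — the lemma is essentially a bookkeeping corollary of the global-consistency structure established earlier.
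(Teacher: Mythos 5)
Your proof is correct and follows essentially the same route as the paper's: invoke the global-consistency property of \Cref{lem:binary:adaptation of lemma 3} to pin down the per-node sent transcript as $10^{b^u_k}z_k$, bound $|z_k|=O(|m|+\log n)$ using $|M|=|m|+O(\log n)$ and the $\tfrac{L}{L-1}$ padding overhead for constant $L$, then multiply by $n$. The paper itself enumerates the $\REQ$/$\TOKEN$/$\DATA$ pulse counts a bit more explicitly, but the accounting is identical.
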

\begin{proof}
Let $n=|V|$ be the length of the cycle~$G$. Suppose the message $M=(m,u,v)$ is dequeued in some epoch and being communicated (i.e., $m$ is being communicated by the simulated~$\pi$ over the link~$(u,v)$). 
Communicating~$M$ over the cycle results in the following pulses sent by \emph{each} node during this epoch (Property~(3) of Lemma~\ref{lem:binary:adaptation of lemma 3}):
a single $\REQ$ pulse, at most a single $\TOKEN$ pulses, and at most~$2+L+\big(1+\frac{1}{L-1}\big)|M|$ $\DATA$ pulses
(a preceding and trailing~$1$s, $L$ trailing 0s, and $|pad(M)|\leq\big(1+\frac{1}{L-1}\big)|M|$ ``content'' pulses). 
Since $L \ge2$ is a constant, each of the $n$~nodes sends~$O(|M|)=O(|m|+\log n)$ pulses.
We conclude that $\CCm(m)=n\cdot O(|M|) = O(n\cdot |m|+ n\log n)$.
\end{proof}

\section{Simulating computations over fully-defective 2-edge connected networks}
\label{sec:GeneralCycle}
In this section we show how to perform resilient computations over any 2-edge-connected fully-defective network, given a Robbins cycle.

\begin{theorem}[A simulator for a Robbins cycle]
\label{thm:RobbinsCycle}
Let $C$ be a Robbins cycle (over~$G$) and let each node know its clockwise and counterclockwise neighbors for each of its occurrences on~$C$.
There exists a noise-resilient simulator 
over the fully-defective~$G$ for any asynchronous protocol~$\pi$.
The simulator features $\CCinit=0$ and
$\CCm(m) = O(|C|\cdot |m|+|C|\log |V|)$ pulses. 
\end{theorem}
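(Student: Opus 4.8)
The approach is to mimic the simple-cycle simulator of Section~\ref{sec:cycle} over the Robbins cycle~$C$, replacing each ``direction along the cycle'' with the appropriate oriented edge, and to handle the two difficulties caused by non-simplicity: (i) a node~$u$ now appears several times on~$C$, so when a pulse arrives at~$u$ along one clockwise edge, $u$ must decide which of its (possibly several) clockwise outgoing edges to forward it on; and (ii) the $\REQ$ pulses have no fixed origin and may be initiated concurrently at several occurrences. As described in the introduction, the $\DATA$, $\END$, $\TOKEN$ pulses are handled by the first mechanism: these three types are issued strictly sequentially by the unique token holder (it issues the next one only after the previous one returns from the other direction), and every node locally tracks ``which segment of~$C$ the token currently resides in'' (a segment being delimited by two consecutive occurrences of~$u$ on~$C$). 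Since at any moment only one such pulse is in transit and the token holder is the unique initiator, a node that receives a clockwise pulse knows it must have come from the segment containing the token and forwards it to the next segment in the clockwise direction; this removes all ambiguity. The $\REQ$ pulses are handled by the second mechanism: whenever $u$ receives or wishes to initiate a $\REQ$, it sends a clockwise pulse to \emph{all} its clockwise neighbors on~$C$ simultaneously, then waits to receive a $\REQ$ from \emph{each} of its counterclockwise neighbors before proceeding; this floods the request over all edges of~$C$ and guarantees every node (every occurrence) eventually sends and receives a request.

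First I would set up notation for the local representation: each node~$u$ with occurrences $u^{(1)},\dots,u^{(\ell_u)}$ on~$C$ holds $\mathit{next}$ and $\mathit{prev}$ per occurrence, and the segments of~$u$ are the maximal arcs of~$C$ strictly between consecutive occurrences. I would then state and prove the analogue of Lemma~\ref{lemma:cycle} for~$C$: namely that the simulator proceeds in epochs, each epoch consists of a token phase followed by a data phase, in each epoch exactly one message travels around~$C$ (initiated by $T_k$ and delivered to all nodes, using the broadcast feature of Remark~\ref{rem:broadcast}), there is at most one token holder at any time, and every node makes progress as long as some queue is nonempty. The proof would follow the induction on~$k$ from Lemma~\ref{lemma:cycle} almost verbatim; the new content is exactly the two claims that (a) the token-segment bookkeeping is consistent and correctly identifies the forwarding direction for $\DATA/\END/\TOKEN$ pulses, and (b) the flooding mechanism makes every node receive and send precisely one $\REQ$ per token phase without confusing it with a $\DATA$ pulse (the same argument as in the simple-cycle case: the token holder does not send a clockwise $\DATA$ pulse until its $\END$ has returned, and the ``second clockwise pulse of a token phase is a $\DATA$ pulse'' invariant persists). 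Once this lemma is in place, correctness of the simulation (the existence of a matching execution $\mathcal{E}_\pi$ of~$\pi$ on the noiseless~$G$) follows by exactly the scheduler-construction argument of Theorem~\ref{thm:cycle-correctness}, since that proof used only the three properties of the cycle lemma and not the simplicity of the cycle.

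For the overhead, I would argue as in Lemma~\ref{lem:overhead-binary}: using the binary encoding of a message $M=(m,u,v)$ with $|M| = |m| + O(\log|V|)$, communicating~$M$ costs each \emph{edge} of~$C$ at most $O(|M|)$ $\DATA$ pulses plus at most one $\TOKEN$ and (in the flooding scheme) one $\REQ$ pulse, and $|C|$ counts edges, giving $\CC_{\mathsf{overhead}}(m) = O(|C|\cdot|m| + |C|\log|V|)$; and $\CC_{\mathsf{init}}=0$ because the local representation of~$C$ is assumed given as input, with no pre-processing communication. The main obstacle is step~(a): proving rigorously that the per-node ``where is the token'' tracking stays synchronized across all occurrences and all nodes despite asynchrony — i.e., that when a $\DATA$, $\END$, or $\TOKEN$ pulse arrives at an occurrence $u^{(j)}$, node~$u$ can unambiguously tell that this is the pulse it is expecting and on which outgoing edge to relay it. The key fact making this work is that these three pulse types are issued one at a time by the unique token holder, so the system-wide ``state'' of the non-$\REQ$ traffic is a single pulse at a known relative position; I would formalize this via an invariant on the partial transcript of each edge (analogous to the $P_{u,k}$ of Lemma~\ref{lemma:cycle}, but indexed by edges of~$C$ rather than nodes) and carry it through the epoch induction. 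The $\REQ$ flooding is comparatively routine: a simple argument that the set of occurrences that have forwarded a $\REQ$ is ``closed under clockwise neighbors'' and hence, by connectivity of~$C$, eventually all of them.
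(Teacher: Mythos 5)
Your proposal matches the paper's proof essentially point for point: the paper likewise builds Algorithm~3 by treating each occurrence of a node on~$C$ as a virtual node on a simple cycle, uses sequential single-pulse issuance of $\DATA/\END/\TOKEN$ together with per-node token-segment tracking (via a \textsc{RotateEdges()} rotation applied upon each received $\TOKEN$ pulse) to resolve forwarding ambiguity, floods $\REQ$ pulses to all clockwise neighbors and waits on all counterclockwise neighbors, proves an exact analogue of Lemma~\ref{lemma:cycle} (Lemma~\ref{lemma:cycle-general}) by the same epoch induction, and then re-uses the scheduler construction of Theorem~\ref{thm:cycle-correctness} and the binary-encoding overhead bound verbatim. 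The only cosmetic difference is that you phrase the transcript invariant per edge of~$C$ whereas the paper phrases it per node-occurrence (and your slogan ``the second clockwise pulse of a token phase is a $\DATA$ pulse'' should be read per-occurrence, since a node with $k_u$ occurrences first collects $k_u$ $\REQ$ pulses); these are equivalent bookkeeping choices and do not change the argument.
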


Let $G$ be a 2-edge-connected graph,
and assume the nodes are given a Robbins cycle~$C$, namely, a directed cycle that passes through each vertex at least once, and that does not use any edge in both of its directions. (We stress that this assumption is later removed by showing how to construct the Robbins cycle from scratch, in \Cref{section:robbins}.) 
As a node $u$ may appear in $C$ more than once, we denote by~$k_u$ the number of its occurrences on~$C$. 
The initial knowledge of each node $u$ about $C$ is the value of $k_u$ and its clockwise and counterclockwise neighbors along the cycle. That is, every node~$u$ knows the nodes $\Prev_{u,i}$ and $\Next_{u,i}$ for every $0\leq i\leq k_u-1$, such that the nodes along the cycle~$C$ correspond to the $\Prev$ and $\Next$ variables of all nodes in a consistent manner. We refer to the sequence of nodes  between two successive occurrences of $u$ on $C$ (including the ending occurrence of~$u$) as a \emph{segment} $(u \xrightarrow{} \cdots \xrightarrow{} u]$. %

The high level approach for the algorithm is built upon \ALGone of the simple cycle, with pulses forwarded across the Robbins cycle~$C$. 
By way of mimicking the protocol for the simple cycle, $u$ views each of its occurrences along $C$ as a different node along the cycle. Accordingly, when a node~$u$ is the token holder, it has exactly one occurrence on~$C$ which is associated with holding the token, and when we refer to a token holder in this section, we refer to that precise occurrence. When any node~$u$ forwards a pulse in some direction, it forwards it to the node along~$C$ that follows its occurrence that received the pulse. There are several challenges in this generalization.

\smallskip\textit{\textbf{Challenge 1: Edge repetition along~$C$.}}
Perhaps the main challenge is for $u$ to keep track of its occurrences and distinguish between them: it could be that multiple occurrences of $u$ have the same incoming edge. 
Still, this node
needs to be able to associate each pulse it receives with its appropriate segment, even when pulses that belong to different segments arrive from the same neighbor.

For instance, consider the node~$d$ in~\Cref{fig:RobbinsCycle}, and suppose it has just started its data phase and received a clockwise $\DATA$ pulse from node~$c$. This data pulse could have originated at node~$e$ 
and should be forwarded to node~$a$, or it could have originated at node~$a$ 
and should be forwarded to node~$e$. 

To avoid this type of confusion,
each node~$u$ tracks throughout the execution in which of its segments the token is located.
Specifically, the node~$u$ maintains the invariant that $\Prev_{u,i}$ and $\Next_{u,i}$ reflect the previous and next nodes of its occurrence number~$i$, for $0\leq i\leq k_u-1$, in the specific rotation of the cycle that \textbf{starts} with the token segment considered by $u$, i.e., the token always resides within segment 0, (locally) for all nodes.
To achieve this, node $u$ applies a local rotation function
upon receiving information about the token holder, namely, upon receiving a $\TOKEN$ pulse which we show that can be traced correctly to a specific segment.

\smallskip
\textit{\textbf{Challenge 2: Distinguishing between different $\DATA$ pulses.}}
Another challenge that arises is how to distinguish between different $\DATA$ pulses. Recall that in the simulator for the simple cycle, the $d$ $\DATA$ pulses are forwarded concurrently, in the sense that the token holder issues all $d$ $\DATA$ pulses and only then waits to receive them. However, once an edge appears more than once in $C$, its endpoint $u$ needs to tell apart the case in which it receives two different $\DATA$ pulses on that edge from the case in which it receives the same $\DATA$ pulse on that edge but from different segments. This is crucial because $d$ is not known in advance (and in fact the value of $d$ is the exact piece of information that needs to be learned).

We overcome this challenge by making sure that the $\DATA$ pulses get forwarded in a sequential manner as follows. The node-occurrence that is the token holder does not issue all $d$ $\DATA$ pulses, but rather waits to receive $\DATA$ pulse number~$\ell$ from its counterclockwise neighbor before issuing $\DATA$ pulse $\ell+1$, for $1\leq \ell \leq d-1$.
A node $u$ that receives the $\DATA$ pulse for the $i$-{th} time since the last reception of a counterclockwise $\END$ pulse, forwards it to~$\Next_{u,i-1}$ (where the index is taken mod~$k_u$).

\smallskip
\textit{\textbf{Challenge 3: $\REQ$ pulses have no guaranteed structure.}}
While our approach for overcoming Challenges 1 and 2 allows the nodes to have consistent rotations of the cycle and the token segments for streamlining the $\DATA$, $\END$, and $\TOKEN$ pulses, it is insufficient for handling $\REQ$ pulses. The reason for this is that each of the other three types of pulses traverses the cycle sequentially (or partially traverses in case of a $\TOKEN$ pulse), but $\REQ$ pulses could be initiated by different nodes, so that a node that receives a $\REQ$ pulse does not have any particular promise about its origin and hence cannot tell which neighbor to forward this pulse to.

We remedy this uncertainty by having each node disseminate $\REQ$ pulses to all of its clockwise neighbors, regardless of their origin (which is not known to the node). We show that in the case of $\REQ$ pulses, this coarse action satisfies the conditions that are needed in order for the simulator to work correctly, despite its somewhat more aggressive and unstructured nature.

\subsection{Formal description}
The main idea of the simulator, as mentioned above, is to let each node mimic \ALGone while simulating each one of its occurrences on~$C$ as if it were a separate  node on a simple cycle. Nevertheless, some actions are performed by the node and apply for all its occurrences. We expand on this shortly.

In particular, 
each node~$u$ has the internal variables $\isTokenHolder_u$ and~$Q_u$, for holding the token and queuing its simulated messages. These will serve all its occurrences.
Recall that a segment $(u\to\cdots\to u]$ is a sub-path of the cycle~$C$ between two consecutive occurrences of~$u$. 
Each node $u$ holds the variables $\Prev_{u,i}$ and $\Next_{u,i}$ that reflect the previous and next nodes of its occurrence number $i$, for $0\leq i\leq k_u-1$, see Figure~\ref{fig:cycle-rotation}.

\begin{figure}[htb]
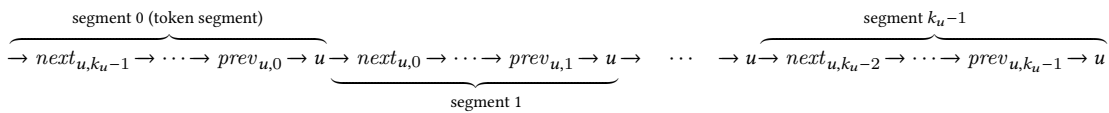

\noindent
\scalebox{0.9}{%
\noindent
\begin{minipage}{\linewidth}  %
\hspace{-1ex}
 \begin{equation*}\!\!
 \overset
    {\text{segment 0 (token segment)}}
    {\overbrace{\xrightarrow{} \Next_{u,k_u-1} \xrightarrow{} \cdots \xrightarrow{} \Prev_{u,0} \xrightarrow{} u}}
 \underset
    {\text{segment 1}}
    {\underbrace{\xrightarrow{} \Next_{u,0} \xrightarrow{} \cdots \xrightarrow{} \Prev_{u,1} \xrightarrow{} u} \xrightarrow{} }
\quad
\cdots
\quad 
 \overset
    {\text{ segment } k_u-1}
    {\xrightarrow{} u 
     \overbrace
        { \xrightarrow{} \Next_{u,k_u-2} \xrightarrow{} \cdots \xrightarrow{} \Prev_{u,k_u-1} \xrightarrow{} u}
    }
 \end{equation*}%
\end{minipage}
}
\caption{The segments of the rotation of~$C$ that starts with the token segment, as seen by a specific node~$u$. The token resides in one of the node-occurrences or links of the token segment.}
\label{fig:cycle-rotation}
\end{figure}

Further, each node~$u$ tracks throughout the execution in which of its segments the token is located and calls this its \emph{token segment} (segment~0). 
The node $u$ applies a local rotation function $\textsc{RotateEdges()}$ upon receiving information about the token holder, namely, upon receiving a $\TOKEN$ pulse, which maintains this invariant.
The procedure $\textsc{RotateEdges()}$ is formally defined as follows.
\smallskip
\begin{mdframed}
\textbf{$\textsc{RotateEdges()}$ for node $u$:} 
Update each $\Prev_{u,i}$ to the previous value of $\Prev_{u,i-1}$ and each $\Next_{u,i}$ to the previous value of $\Next_{u,i-1}$, where indices are taken mod~$k_u$.
\end{mdframed}

\medskip
\noindent
The pseudo-code of our simulator appears in Algorithms \ref{alg:general compiler token phase} and \ref{alg:general compiler data phase} below. 
We are now ready to prove its correctness and analyze its communication complexity.

\renewcommand\thealgorithm{\the\numexpr\value{algorithm}-1}
\renewcommand\thealgorithm{\the\numexpr\value{algorithm}-1(a)}

\begin{algorithm}[htp]
\caption{A simulator for fully-defective networks given a Robbins cycle: token phase (node $u$)}\label{alg:general compiler token phase}
\begin{algorithmic}[1]

\Statex \textbf{Init:} 
A single node has $\isTokenHolder = \mathsf{True}$, associated with one specific occurrence. Node~$u$ holds variables $\Next$ and $\Prev$ for each one of its occurrences, so that these variables (globally) form a Robbins cycle. The first segment (of~$u$ on the cycle) contains the node-occurrence associated with the token. 
Node~$u$ holds a (possibly empty) input $x_u$ for~$\pi$. %
\Statex \textbf{Handling messages sent by $\pi$:}
During the execution of the algorithm, node $u$ enqueues to $Q_u$ any new message $\pi$ asks $u$ to send, in the form (message, source, destination). The actions of~$\pi$ occur in parallel to the execution of this algorithm. 
\Statex

\State \textbf{wait until} $Q_u$ is  not empty or a clockwise $\REQ$ pulse is received from some $\Prev_{u,i}$
\label{line:general token phase:wait on Qu or REQ}
\State send a $\REQ$ pulse to $\Next_{u,i}$ for all $0\leq i \leq k_u-1$\label{line:general token:send reqs}
\State \textbf{wait until} a $\REQ$ pulse is received on each $\Prev_{u,i}$ for all $0\leq i \leq k_u-1$\label{line:general-wait-req1}
\Statex\Comment{Including $\REQ$ pulses received in \Cref{line:general token phase:wait on Qu or REQ}, if any}

\If {$\isTokenHolder_u$}\label{line:g-cycle token phase - release token}
    \State $\isTokenHolder_u \gets \mathsf{False}$ \label{line:g-tokenFalse}
    \State send a counterclockwise $\TOKEN$ pulse to $\Prev_{u,0}$\label{line:g-cycle token phase - send TOKEN}
\EndIf

\Statex
\State \textbf{wait until} receiving a pulse
\label{g-token-phase-wait-after-REQ}
\Statex \Comment{Or process any \emph{second} pulse received in \Cref{line:general-wait-req1} from $\Prev_{u,i}$ for some $0\leq i\leq k_u-1$}
\label{g-token-phase-wait-after-REQ1}
\If {the pulse is a counterclockwise $\TOKEN$ pulse} \Comment{Else, the pulse is a clockwise $\DATA$ pulse}
\label{g-token-phase-wait-after-REQ2}
        \State \textsc{RotateEdges()} \label{line:g-rotate}
        \If {$Q_u$ is not empty}
            \State $\isTokenHolder_u \gets \mathsf{True}$ \label{g-token-phase-set-tokenholder}
        \Else
            \State forward the counterclockwise $\TOKEN$ pulse to $\Prev_{u,0}$ \label{line:g-forward-token}
            \State \textbf{go to} \Cref{g-token-phase-wait-after-REQ}
        \EndIf
    \EndIf\label{line-g-end}
\State continue with \Cref{alg:general compiler data phase} 
\label{line:general cycle token phase:end}
\end{algorithmic}
\end{algorithm}

\renewcommand\thealgorithm{\the\numexpr\value{algorithm}-2}
\renewcommand\thealgorithm{\the\numexpr\value{algorithm}-2(b)}   %

\begin{algorithm}[htp]
\caption{A simulator for fully-defective networks given a Robbins cycle: data phase (node $u$)}\label{alg:general compiler data phase}
\begin{algorithmic}[1]

\setcounter{ALG@line}{\getrefnumber{line:general cycle token phase:end}}

\If{$\isTokenHolder_u$} %
\label{line:g-istoken}
    \State dequeue a message from~$Q_u$, denote it by $(m,u,v)$ and let $1^d$ be its unary encoding \label{line:general data phase - dequeue}
    \For{$d$ times}
        \For{$i$ from $0$ to $k_u-1$}
            \State send a clockwise $\DATA$ pulse to $\Next_{u,i}$\label{line:g-cycle data phase - initiate data} \State \textbf{wait until} receiving a clockwise $\DATA$ pulse from $\Prev_{u,(i+1)\mod k_u}$
        \EndFor
    \EndFor

    \For{$i$ from $k_u-1$ to $0$ } 	\State send a counterclockwise $\END$ pulse to $\Prev_{u,(i+1)\mod k_u}$
	    \State \textbf{wait until} a counterclockwise $\END$ pulse is received from $\Next_{u,i}$\label{line:g-cycle data phase - wait for end}
	\EndFor \label{line:g-endSend}
\Else %
    \Repeat{}\label{line:g-notToken}
            \For{$i$ from $0$ to $k_u-1$}
                \State \textbf{wait until} receiving a clockwise $\DATA$ pulse from $\Prev_{u,i}$ 
                \label{line:general data phase:wait for data:non-TH}
                \Statex \Comment{Including the DATA pulse received in the preceding token phase, if exists}
                \State forward the clockwise $\DATA$ pulse to $\Next_{u,i}$
            \EndFor
    \Until{receiving a counterclockwise $\END$ pulse}\label{line:g-getEnd}

	\State let $count$ be the total number of clockwise $\DATA$ pulses received by $u$ divided by $k_u$ \label{line:g-count}
	\State decode $1^{count}$ as the unary encoding of $(m',u',v')$
	\label{line:general data phase - decode}
	\IIf{$u = v'$} 
	    deliver $m'$ to $\pi$ (as if received from $u'$) \label{line:g-cycle data phase - deliver}
    \EndIIf \label{line:g-endDecode}
    
    \For{$i$ from $k_u-1$ to $0$} \label{line:g-start forward end}
	    \State \textbf{wait until} a counterclockwise $\END$ pulse is received from $\Next_{u,i}$
	    \Statex \Comment{for $i=k_u-1$ this is already received in \Cref{line:g-getEnd}}
    	\State forward the counterclockwise $\END$ pulse to $\Prev_{u,i}$
	\EndFor \label{line:g-end-notToken}
\EndIf
\State continue with Algorithm~\ref{alg:general compiler token phase} 
\end{algorithmic}
\end{algorithm}

\renewcommand\thealgorithm{\the\numexpr\value{algorithm}-1}

\subsection{Analysis}
Similar to the analysis in \Cref{section:analysis-cycle}, we begin by proving the technical \Cref{lemma:cycle-general} that specifies the behavior of the simulation and replaces Lemma~\ref{lemma:cycle}. This technical lemma is then used to argue the correctness of our simulation over a Robbins cycle (\Cref{thm:main-general}). We prove the complexity of our simulation
in \Cref{lemma:g-cost,lemma:g-cost-binary}. Together, these prove \Cref{thm:RobbinsCycle}.

For the analysis, we use the same notations as in \ALGone, up to the following modification. Since we have to be careful and distinguish between the different occurrences of a node on $C$, we let $T_k$ denote the following: Consider the $k$-th node to have set its $\isTokenHolder_u$ to $\mathsf{True}$ in \Cref{g-token-phase-set-tokenholder}. $T_k$ is the \emph{node-occurrence} of this node that has received a $\TOKEN$ and subsequently set $\isTokenHolder_u$ to $\mathsf{True}$.

\begin{lemma}
\label{lemma:cycle-general}
Consider an execution of 
\ALGthree
and consider any ${k \ge 1}$, 
for which ${t_{k-1}<\infty}$.
If from time~$t_{k-1}$ and forward, all queues $\{Q_v\}_{v\in V}$ are always empty, then $t_{k}=\infty$.
Otherwise,
the following hold:

\begin{enumerate}[label = (\arabic*)]

\item \label{g-Progress} \textbf{Progress:} 
All nodes eventually complete their $k$-th epoch. 
In particular, $t_{k}<\infty$, and 
at time~$t_k$, all nodes have already processed the $\END$~pulse (of epoch~$k$) but have not yet passed  \Cref{g-token-phase-wait-after-REQ1} in epoch~$k+1$
(they are either waiting in Lines
\ref{line:general token phase:wait on Qu or REQ} or \ref{line:general-wait-req1} 
for a $\REQ$ pulse, or waiting in  \Cref{g-token-phase-wait-after-REQ1} for either a $\DATA$ or a $\TOKEN$ pulse).

\item \label{g-Single token holder}\textbf{Single Token Holder:}
It holds that $t_{k-1} < s_k < t_k$.
At each moment in~$(t_{k-1},t_k)$, there is at most a single node $u$ for which  $\isTokenHolder_u=\mathsf{True}$ and $u$ associates this with a single occurrence on $C$.
More specifically, within this time frame, the token is passed as follows: the node-occurrence $T_{k-1}$ releases the token at some time in~$(t_{k-1},s_{k})$ and the node-occurrence~$T_k$ is the next node that gains the token at time $s_k$. The node-occurrence $T_k$ (solely) holds the token in~$[s_k,t_k]$.

\item \label{g-Global consistency} \textbf{Global consistency:} 
There exist integers $d_1,\dots, d_k > 0$ and for any $u\in V$ there are $b^u_1,\dots, b^u_k \in \{0,1\}$, such that when the node~$u$ completes its $k$-th epoch, the sent transcript by each of its occurrences is $P_{u,k} \triangleq 10^{b^u_1}1^{d_1}0\cdot10^{b^u_2}1^{d_2}0\dots10^{b^u_k}1^{d_k}0$.

In addition, the message each node decodes and processes (\Cref{line:general data phase - decode}) at its $k$-epoch is the unary decoding of $1^{d_k}$, which is the message sent by $T_k$ (\Cref{line:general data phase - dequeue}). 
\end{enumerate}
\end{lemma}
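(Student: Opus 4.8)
The plan is to mimic the proof of \Cref{lemma:cycle} by induction on the epoch number~$k$, tracking the progress of the single pulse that travels the Robbins cycle~$C$ at each point in time, while paying careful attention to the three novel difficulties listed as Challenges 1--3 in the preceding discussion. The key conceptual device is the following invariant, which I would establish and maintain throughout the induction: at the start of epoch~$k$, every node~$u$ has its variables $\{\Prev_{u,i},\Next_{u,i}\}_{0\le i\le k_u-1}$ arranged so that segment~$0$ (its \emph{token segment}) is the one containing $T_{k-1}$, and moreover this local view is globally consistent, i.e., an outside observer following these pointers traces exactly~$C$ with a distinguished starting segment. Under this invariant, the flow of $\DATA$, $\END$, and $\TOKEN$ pulses is forced to be sequential: each such pulse, when it arrives at an occurrence of~$u$, must be the ``next'' one expected along the rotation starting at the token segment, so $u$ can unambiguously determine which occurrence received it and forward it correctly via the appropriate $\Next_{u,i}$ or $\Prev_{u,i}$. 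The sequential-forwarding modifications in \ALGthree (waiting for $\DATA$ pulse number~$\ell$ before issuing~$\ell+1$; cycling $i$ from $0$ to $k_u-1$ for each of the $d$ rounds; the matching counterclockwise $\END$ loop) are precisely what make this work, so a large part of the proof is verifying that, given the invariant, no node can ever be confused about edge repetition (Challenge~1) or about whether two $\DATA$ pulses on the same physical edge are distinct pulses or the same pulse on different segments (Challenge~2).

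For the $\REQ$ pulses (Challenge~3) I would argue separately, as in the informal description. Here the claim is a pure reachability/termination statement: if at least one node-occurrence initiates a $\REQ$ (either because its queue became non-empty or because it received a $\REQ$), then by the rule of \Cref{line:general token:send reqs}--\Cref{line:general-wait-req1}---each node, upon first receiving or wishing to initiate a $\REQ$, sends a $\REQ$ to \emph{all} its clockwise neighbors $\{\Next_{u,i}\}$ and then waits for a $\REQ$ on \emph{all} its counterclockwise neighbors $\{\Prev_{u,i}\}$---every node eventually sends exactly one batch of $\REQ$ pulses (one per occurrence) and receives exactly one $\REQ$ per incoming segment-edge. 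The crucial point is that this ``flood along the oriented cycle'' saturates: since $C$ visits every vertex and is strongly oriented, the set of edges that have carried a $\REQ$ is nonempty, closed under ``out-neighbor along $C$'', and hence is all of $C$. I would also note that the asynchrony caveat from the simple-cycle proof reappears here: nodes that have already moved to the next epoch's token phase may emit their $\REQ$ batch before the current epoch's $\END$ pulse completes its counterclockwise traversal, and I must re-verify (exactly as in the base-case argument of \Cref{lemma:cycle}) that such early $\REQ$ pulses cannot be mistaken for $\DATA$ pulses of the current epoch---because $\END$ travels counterclockwise and $T_k$ issues no new clockwise pulse until it receives its $\END$ back, so a premature $\REQ$ can only reach nodes that have already processed the $\END$ and left the data phase.

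The structure of the writeup would therefore be: (i) base case $k=1$, following \Cref{lemma:cycle}'s base case almost verbatim but with each ``pulse'' replaced by ``the $i$-th copy of the pulse along the token-rooted rotation'', and with \textsc{RotateEdges()} invoked when the $\TOKEN$ pulse is forwarded/absorbed so that the token-segment invariant is re-established for epoch~$2$; (ii) verification of Property~\ref{g-Progress} by showing the $\DATA$-loop and the $\END$-loop each terminate and that at time~$t_k$ no node has passed \Cref{g-token-phase-wait-after-REQ1} in epoch $k+1$ (the same ``no second pulse yet'' argument, localized to occurrences); (iii) Property~\ref{g-Single token holder}, using that $T_k$ is the unique occurrence to execute \Cref{g-token-phase-set-tokenholder} because the single $\TOKEN$ pulse is absorbed exactly once and the ensuing first $\DATA$ pulse pushes everyone else into the data phase; (iv) Property~\ref{g-Global consistency}, reading off $P_{u,k}$ from the fact that each occurrence forwards exactly the pulse sequence $10^{b^u_k}1^{d_k}0$ during epoch~$k$; and (v) the induction step, which is almost entirely a matter of saying ``as in the base case, with $s_1,t_1,T_1$ replaced by $s_k,t_k,T_k$, and invoking the induction hypothesis for the token-segment invariant and for the partial transcripts accumulated up to $t_{k-1}$.''

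I expect the main obstacle to be item~(i)/(iii) combined with the \textsc{RotateEdges()} bookkeeping: one must show that when a $\TOKEN$ pulse arrives at an occurrence of~$u$, that occurrence is \emph{uniquely} determined (so the rotation is well-defined and agrees with the new global token segment), and that after every node applies \textsc{RotateEdges()}, the collection of local views is still globally consistent with~$C$ having its token segment shifted to where $T_k$ sits. This is the linchpin that licenses treating $\DATA$/$\END$/$\TOKEN$ forwarding in epoch~$k+1$ by the same sequential argument as in epoch~$k$; without it, Challenges~1 and~2 could resurface. Establishing it cleanly---ideally by formulating the token-segment invariant once and proving it is preserved by each of the three ``structured'' pulse types and by \textsc{RotateEdges()}---is where the real work lies; the rest is careful but routine transcription of the simple-cycle argument.
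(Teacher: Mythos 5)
Your plan is correct and takes essentially the same approach as the paper: both rest on the token-segment invariant maintained by \textsc{RotateEdges()} (the paper phrases it as ``for every node $u$, the node-occurrence currently associated with the token is always in segment $0$ of $u$''), both treat $\REQ$ pulses by the flood-to-all-clockwise-neighbors argument you describe, and both reduce the three properties to the simple-cycle case by treating each node-occurrence as a separate node of Algorithm~1. The only difference is presentational: the paper establishes an occurrence-by-occurrence correspondence between the lines of Algorithm~3 and Algorithm~1 and then invokes the proof of Lemma~\ref{lemma:cycle} wholesale, whereas you propose to re-run the induction in place with the invariant carried along explicitly.
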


\begin{proof}
In essence, we wish to follow the line of proof of \Cref{lemma:cycle}. The high-level observation is that in the general case, every \emph{occurrence} of a node on $C$ behaves as in the case of the simple cycle, rather than every node behaving that way. There are a few subtle exceptions to this, which do not harm the proof but are rather essential for allowing it to go through.
We elaborate as follows.

\smallskip\textit{\textbf{Token Phase.}} 
For the token phase, if a node~$u$ has a non-empty queue~$Q_u$,
then it reaches \Cref{line:general token:send reqs} and sends a $\REQ$ pulse to \emph{each} of its clockwise neighbors, $\Next_{u,i}$ for all $0\leq i\leq k_u-1$, and waits in 
\Cref{line:general-wait-req1} to receive a $\REQ$ pulse from \emph{each} of its counterclockwise neighbors, $\Prev_{u,i}$ for all $0\leq i\leq k_u-1$. This is equivalent to saying that every occurrence of~$u$ on~$C$ sends a single clockwise $\REQ$ pulse and 
waits to receive a single counterclockwise $\REQ$ pulse. Thus, Lines~\ref{line:general token phase:wait on Qu or REQ}--\ref{line:general-wait-req1} are equivalent to 
Lines~\ref{line:cycle token phase - wait for REQ or new Qu message}--\ref{line:cycle token phase - wait for REQ after seding one} 
of \ALGone for each occurrence of~$u$.

Similarly, any node~$u$ that receives a $\REQ$ pulse from $\Prev_{u,j}$ for some $0\leq j\leq k_u-1$ in \Cref{line:general token phase:wait on Qu or REQ}, forwards it to each of 
its neighbors $\Next_{u,i}$ for all $0\leq i\leq k_u-1$,
and waits in \Cref{line:general-wait-req1} to receive a $\REQ$ pulse from each of its neighbors $\Prev_{u,i}$ for all $0\leq i\leq k_u-1$ for 
which $i\neq j$. 
For occurrence $j$ of~$u$, this is equivalent to Lines~\ref{line:cycle token phase - wait for REQ or new Qu message}--\ref{line:cycle token phase - wait for REQ after seding one} of \ALGone. 
For other occurrences of~$u$, this is slightly different, as they first forward the $\REQ$ pulse and only then wait to receive it. 
However, this still satisfies that if some $\REQ$ pulse is sent in an epoch, then each occurrence of every node sends and receives exactly one $\REQ$ pulse in that epoch, which is all that is needed for the proof of \Cref{lemma:cycle}. 
Notice that in \Cref{line:general-wait-req1} a node may receive a second clockwise pulse from $\Prev_{u,i}$ for some single $0\leq i\leq k_u-1$, in which case this is a $\DATA$ pulse that is processed in \Cref{g-token-phase-wait-after-REQ1}.

Consider now the node~$u$ which has $\isTokenHolder_u$ set to $\mathsf{True}$. In Lines~\ref{line:g-cycle token phase - release token}-\ref{line:g-cycle token phase - send TOKEN}, this node sends a counterclockwise $\TOKEN$ pulse to $\Prev_{u,0}$. This corresponds to having only occurrence 0 of $u$ on $C$ send a $\TOKEN$ pulse, which is the same as Lines~\ref{line:cycle token phase - release token}--\ref{line:cycle token phase - send TOKEN} in \ALGone and indeed the proof of \Cref{lemma:cycle} needs that only a single $\TOKEN$ pulse traverses the cycle.

It remains to show that each occurrence of a node~$u$ on~$C$ that receives a $\TOKEN$ pulse forwards it to its counterclockwise neighbor in $C$ if the queue~$Q_u$ is empty, or sets $\isTokenHolder_u$ to $\mathsf{True}$ otherwise, and that there is exactly one occurrence of one node among those with a non-empty queue which receives a $\TOKEN$ pulse. For this, we need to show that each node correctly keeps track of its token segment. We rely on the local $\textsc{RotateEdges()}$ procedure by showing that the following invariant holds in any time throughout the execution: 
Let $v*$ be the node-occurrence of $v$ that is associated with $v$ having $\isTokenHolder_v$ set to $\mathsf{True}$, or the node-occurrence that most recently received a $\TOKEN$ pulse if no such $v$ exists. Then for every node $u$, the occurrence $v*$ is located in segment $0$ of $u$ on $C$ (i.e., $v*$ is located in  $[\Next_{u,k_u-1},\dots,\Prev_{u,0},u]$).
This invariant is assumed to hold at the onset of the execution. Since the invariant holds, once a $\TOKEN$ pulse reaches a node $u$ in \Cref{g-token-phase-wait-after-REQ2}, it must reach it from $\Next_{u,k_u-1}$. Then, $u$ invokes $\textsc{RotateEdges()}$ in \Cref{line:g-rotate} before setting $\isTokenHolder_u$ to $\mathsf{True}$ in \Cref{g-token-phase-set-tokenholder} or forwarding the $\TOKEN$ pulse in \Cref{line:g-forward-token}, depending on whether its queue $Q_u$ is empty. In either case, the invocation of $\textsc{RotateEdges()}$ guarantees the invariant is maintained. 

Now, consider the case where a node $v$ sets its $\isTokenHolder_v$ to $\mathsf{True}$. 
For the node-occurrence~$v*$,  Lines~\ref{g-token-phase-wait-after-REQ}-\ref{line-g-end} correspond to Lines~\ref{token-phase-wait-after-REQ}-\ref{line:c-end} in \ALGone, that is, $v*$ receives a $\TOKEN$ pulse and switches to its data phase.
For the other occurrences of~$v$ this is slightly different, as the node~$v$ along with all its occurrences, switches to its data phase once $v*$ obtains the token and node~$v$ executes \Cref{line:general cycle token phase:end} after completing \Cref{g-token-phase-set-tokenholder}. 
This is fine, since all the occurrences at this point have sent and received a $\REQ$ pulse, and can switch to the data phase. Indeed, some of $v$'s node-occurrences might have already forwarded a $\TOKEN$ pulse before and remained in the token phase (e.g., if $Q_v$ was empty once the token had reached them); these occurrences switch to the data phase ``late'' in comparison to \ALGone. Additionally, some of $v$'s node-occurrences might have not received any pulse in this token phase and they switch ``early'' compared to respective node in \ALGone (i.e., before they receive the first $\DATA$ pulse).
Nevertheless, they are all in the data phase when they need to send and receive $\DATA$ pulses. This essentially corresponds to \Crefrange{token-phase-wait-after-REQ}{line:cycle token phase:end} in \ALGone. 
The same reasoning applies to every other node: 
once one of its occurrences receives the $\DATA$ pulse originated at $v*$ and switches to the data phase, then all its occurrences do so at the same time, but they all wait for the first $\DATA$ pulse to arrive (\Cref{line:general data phase:wait for data:non-TH}) and thus behave similarly to \ALGone, despite the ``early'' transition to the data phase.

\smallskip\textit{\textbf{Data Phase.} For the data phase, if node $u$ has its $\isTokenHolder_u$ set to $\mathsf{True}$, then in \Cref{line:g-istoken} it dequeues a message from $Q_u$ and denotes by $1^d$ its unary encoding. Next, in Lines~\ref{line:general data phase - dequeue}-\ref{line:g-endSend}, for each of the $d$ pulses of $\DATA$ that need to be forwarded, each occurrence of $u$ according to their order on $C$ receives and sends the clockwise $\DATA$ pulse and then receives and sends the counterclockwise $\END$ pulse. This corresponds to Lines~\ref{line:cycle data phase - dequeue}-\ref{line:cycle data phase - wait for end}} in \ALGone, with two subtleties. 

The main subtlety is that \Cref{line:cycle data phase - dequeue} is invoked only once by~$u$, which corresponds to \Cref{line:cycle data phase - dequeue} in \ALGone being invoked only by occurrence $0$ of $u$ in its rotation of $C$. %
This is essential, as otherwise if each occurrence of $u$ initiates $d$ separate $\DATA$ pulses then clearly there will be too many in the system and the message will not be correctly decoded. We emphasize that the queue $Q_u$ is a single queue used by all occurrences of $u$, and hence once a message is dequeued from $Q_u$, other occurrences cannot dequeue it again later in further epochs if they become the occurrence of $u$ that is associated with the $\isTokenHolder_u$ variable being set to $\mathsf{True}$.

The second subtlety is that each $\DATA$ pulse begins its traversal over $C$ only after the previous one is received back at occurrence $0$ of $u$. The latter does not harm the proof as it is only a stronger requirement compared to \ALGone.

Similarly, for every node $u$ whose $\isTokenHolder_u$ variable is set to $\mathsf{False}$, each of its occurrences according to their order on $C$ receives and sends the clockwise $\DATA$ pulse (in \Crefrange{line:g-notToken}{line:g-getEnd}) and the counterclockwise $\END$ pulse (in \Crefrange{line:g-start forward end}{line:g-end-notToken}), for $d$ times. This corresponds to \Cref{line:cycle data phase - forward DATA,line:cycle data phase - forward END} in \ALGone. 

Finally, Lines~\ref{line:g-count}-\ref{line:g-endDecode} are also invoked by any node $u$ only once, in order to avoid delivering to $\pi$ duplicates of the received message, corresponding to Lines~\ref{line:cycle data phase - record message}-\ref{line:cycle data phase - complete process message} in \ALGone. Note that this also means that every node $u$ moves to the token phase of the next epoch in \Cref{alg:general compiler token phase} only after all of its occurrences finish the current data phase in \Cref{alg:general compiler data phase}.

\medskip

The above establishes that we can now repeat the proof of \Cref{lemma:cycle} for obtaining a proof of \Cref{lemma:cycle-general}.
\end{proof}

\Cref{lemma:cycle-general} allows proving the correctness of our simulation, %
as follows.

\begin{theorem}
\label{thm:main-general}
Let $G=(V,E)$ be some graph and let $C$ be a Robbins cycle in it.
Given any asynchronous protocol $\pi$, let $\hat\pi$ be the \ALGthree given the input~$\pi$.
Then, executing~$\hat\pi$ on the fully-defective network~$G$ simulates an execution of~$\pi$ on the noiseless~$G$.
\end{theorem}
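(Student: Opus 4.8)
The plan is to follow the proof of \Cref{thm:cycle-correctness} essentially verbatim, replacing \Cref{lemma:cycle} by \Cref{lemma:cycle-general}. The point is that \Cref{lemma:cycle-general} hands us exactly the same three structural properties (progress, single token holder, global consistency) as \Cref{lemma:cycle}; in particular, Property~(3) again says that in each epoch~$k$ exactly one message is broadcast over~$C$---the one dequeued by~$T_k$ in \Cref{line:general data phase - dequeue}---and that every node decodes and, if it is the destination, delivers precisely that message. All the extra complications coming from multiple occurrences of a node on~$C$ have already been absorbed into \Cref{lemma:cycle-general}, so from here on the argument need not mention occurrences at all.

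Concretely, given an execution $\mathcal{E}_{\hat\pi}$ of \ALGthree over the fully-defective~$G$, I would first derive a transcript $\tau=\tau_1\tau_2\cdots$ exactly as in the proof of \Cref{thm:cycle-correctness}: append a send event $(\langle\mathsf{send}\rangle,u,v,m)$ whenever the simulated~$\pi$ enqueues $(m,u,v)$ to~$Q_u$, and append a receive event $(\langle\mathsf{receive}\rangle,v,u,m)$ whenever a node~$v$ delivers $(m,u,v)$ to the simulated~$\pi$ in \Cref{line:g-cycle data phase - deliver}. Then I would define the scheduler for $\pi$ over the noiseless~$G$ that ``follows'' $\mathcal{E}_{\hat\pi}$: it controls node execution speeds so that the send events of $\mathcal{E}_\pi$ occur in the same order as in~$\tau$, and it delays every message on its channel until the moment the corresponding receive event is registered in~$\tau$, at which point it delivers it.

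The core is an induction on the event index~$j$ establishing that (1) the scheduler is valid---each message it is asked to deliver has been issued to the channel and not yet delivered---and (2) $\tau_1\cdots\tau_j$ is a prefix of the transcript of~$\mathcal{E}_\pi$. The base case is vacuous. For the step, a send event $\tau_j=(\langle\mathsf{send}\rangle,u,v,m)$ is handled by the induction hypothesis: at $\mytime(j-1)$ all node states in $\mathcal{E}_\pi$ agree with those of the simulated~$\pi$, so if the simulated~$\pi$ instructs~$u$ to send~$m$ to~$v$, the same eventually happens in $\mathcal{E}_\pi$, and the scheduler stalls the other nodes so the orders match. For a receive event $\tau_j=(\langle\mathsf{receive}\rangle,v,u,m)$, Property~(3) of \Cref{lemma:cycle-general} identifies the epoch~$k$ in which~$m$ was broadcast, shows it was dequeued from~$Q_u$ by~$T_k$ and never re-enqueued, hence was enqueued at some earlier event $\tau_i$ with $i<j$; by the induction hypothesis that message is exactly the one currently delayed on the channel from~$u$ to~$v$ in~$\mathcal{E}_\pi$, so delivering it is a valid action that keeps $\tau$ a prefix of~$\mathcal{E}_\pi$'s transcript.

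Liveness then follows from Properties~(1) and~(3) of \Cref{lemma:cycle-general}: as long as some~$Q_v$ is, or eventually becomes, non-empty, the next epoch starts and completes, delivering one queued message; outputs produced by the simulated~$\pi$ are reproduced verbatim and are valid because $\tau$ is always a prefix of~$\mathcal{E}_\pi$; and the only way the simulation quiesces is when all queues stay empty forever, in which case $\tau$ is the full transcript of an $\mathcal{E}_\pi$ that has itself reached quiescence. I expect no genuine obstacle here, since the heavy lifting is done in \Cref{lemma:cycle-general}; the only point that needs care is to invoke correctly the single-dequeue and single-deliver facts built into \Cref{line:general data phase - dequeue} and \Crefrange{line:g-count}{line:g-endDecode} of \ALGthree, so that no message is counted or delivered twice across a node's occurrences.
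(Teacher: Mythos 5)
Your proposal is correct and matches the paper's own proof: the paper proves Theorem~\ref{thm:main-general} exactly by repeating the argument of Theorem~\ref{thm:cycle-correctness} with \Cref{lemma:cycle-general} in place of \Cref{lemma:cycle}, referring to node-occurrences along~$C$ instead of nodes, and adjusting the line reference for delivery (\Cref{line:g-cycle data phase - deliver}). Your explicit walk-through of the transcript construction, scheduler, and induction is a faithful unpacking of that same route.
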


\begin{proof}
The proof of \Cref{thm:main-general} is exactly the same as the proof of \Cref{thm:cycle-correctness}, with the modifications that (i) it uses \Cref{lemma:cycle-general} instead of \Cref{lemma:cycle}, (ii) instead of referring to a node in the network, it refers to its occurences along $C$, and (iii) it adjusts the line numbers that reflect the delivery of a message to the protocol~$\pi$ 
(\Cref{line:g-cycle data phase - deliver}
in \Cref{alg:general compiler data phase} instead of Line~\ref{line:cycle data phase - deliver}
in \Cref{alg:cycle compiler data phase}).
\end{proof}

Finally, the following lemma states the message overhead of our simulator. Its proof is identical to that of~\Cref{overhead-lemma}, except that we consider pulses sent by each of the \emph{occurrences of nodes} on $C$, whose length~$|C|$ can be greater than the number of nodes~$n$.
\begin{lemma}
\label{lemma:g-cost}
Given a Robbins cycle~$C$,
the overhead of simulating a single message $m$ in \ALGthree is $\CCm(m)=O(|C|\cdot2^{|m|+O(\log n)})$.
\end{lemma}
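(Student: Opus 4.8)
The plan is to replay the proof of \Cref{overhead-lemma} essentially verbatim, with the single change that the accounting unit is now a \emph{node-occurrence} on~$C$ rather than a node. The one fact we need is Property~(3) (\emph{Global consistency}) of \Cref{lemma:cycle-general}: when any node~$w$ completes its $k$-th epoch, the sequence of pulses sent by \emph{each of its occurrences} on~$C$ is exactly $P_{w,k}=10^{b^w_1}1^{d_1}0\cdot10^{b^w_2}1^{d_2}0\cdots10^{b^w_k}1^{d_k}0$, and the block $1^{d_k}$ is the unary encoding of the message~$M$ dequeued by~$T_k$ in epoch~$k$. So it suffices to isolate the contribution of a single epoch to this transcript and multiply by the number of occurrences on~$C$.

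Concretely, I would (i) fix the epoch~$k$ in which the message~$m$ is delivered and let $M=(m,u,v)$ be the corresponding dequeued message, so that $|M|=|m|+O(\log n)$; (ii) read off the block $10^{b^w_k}1^{d_k}0$ that epoch~$k$ contributes to $P_{w,k}$, which shows that during that epoch each occurrence of each node sends exactly one clockwise $\REQ$ pulse, at most one counterclockwise $\TOKEN$ pulse, exactly $d_k$ clockwise $\DATA$ pulses---with $1^{d_k}$ the unary encoding of~$M$, hence $d_k\le 2^{|M|}=2^{|m|+O(\log n)}$---and exactly one counterclockwise $\END$ pulse, for a total of at most $3+2^{|M|}$ content-less pulses per occurrence; and (iii) sum over all occurrences. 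Since $\sum_{w\in V}k_w=|C|$, this gives $\CCm(m)=O\big(|C|\cdot(3+2^{|M|})\big)=O\big(|C|\cdot 2^{|m|+O(\log n)}\big)$.

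The only point that needs care---and the reason this is stated as a separate lemma rather than a literal corollary of \Cref{overhead-lemma}---is justifying that ``one block $10^{b}1^{d_k}0$ per occurrence'' is genuinely the correct per-occurrence count for \ALGthree, even though a node physically emits several pulses at once (the $\REQ$ dissemination of \Cref{line:general token:send reqs} fires on every outgoing segment, and the token holder sends $d_k$ $\DATA$ pulses on \emph{each} of its outgoing segments while dequeuing~$M$ only once). But this is precisely what \Cref{lemma:cycle-general} already establishes---each occurrence behaves like a node of a simple cycle---so I would simply invoke it, and no new difficulty arises here. I would also note, as the surrounding text does, that $|C|$ can be as large as $\Theta(n^3)$, so the bound is genuinely in terms of~$|C|$; the exponential dependence on~$|m|$ is what the binary encoding of \Cref{sec:binary} removes (\Cref{lemma:g-cost-binary}).
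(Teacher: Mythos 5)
Your proof is correct and takes the same route the paper does: the paper explicitly says the proof of \Cref{lemma:g-cost} ``is identical to that of \Cref{overhead-lemma}, except that we consider pulses sent by each of the occurrences of nodes on $C$,'' and you do exactly that, invoking Property~(3) of \Cref{lemma:cycle-general} to get the per-occurrence transcript and summing $\sum_w k_w = |C|$. The extra paragraph justifying why the per-occurrence count is the right accounting (e.g.\ the $\REQ$ dissemination and single dequeue) is a sound sanity check but not a new ingredient---it is already folded into \Cref{lemma:cycle-general}.
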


A direct application of the binary encoding described in Section~\ref{sec:binary} yields the following optimization.
\begin{lemma}
\label{lemma:g-cost-binary}
Given a Robbins cycle~$C$,
the overhead of simulating a single message $m$ in \ALGthree, replacing the unary encoding with a binary encoding, is $\CCm(m)=O(|C|\cdot |m|+|C|\log n)$.
\end{lemma}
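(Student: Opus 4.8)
The plan is to derive \Cref{lemma:g-cost-binary} by \emph{composing} the two modifications that have already been analyzed in isolation: the per-occurrence simulation of \ALGthree established in \Cref{lemma:cycle-general}, and the binary encoding of \Cref{sec:binary} established in \Cref{lem:binary:adaptation of lemma 3}. Concretely, I would modify \Cref{alg:general compiler data phase} so that the token-holder occurrence, instead of issuing $1^d$ as $d$ clockwise \DATA pulses, encodes its dequeued message $M=(m,u,v)$ as $Z=1\cdot pad(M)\cdot 1\cdot 0^L$ for a fixed global constant $L\geq 2$ (exactly as in \Cref{alg:cycle compiler data phase-binary}), and then transmits the bits of $Z$ one at a time over~$C$: a bit $1$ is a single clockwise pulse routed through all $k_u$ segments via the loop over $\Next_{u,i}$ / $\Prev_{u,(i+1)\bmod k_u}$ already present in \Cref{alg:general compiler data phase}, while a bit $0$ is a single counterclockwise pulse routed through all segments by the mirror-image loop over $\Next_{u,i}$ / $\Prev_{u,i}$ (the same per-occurrence routing the unary version already uses for the \END pulse). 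The token holder advances to bit $j+1$ only after bit $j$ has traversed all of~$C$ and returned. A non-token-holder node $w$ forwards each incoming pulse in its received direction with the same per-occurrence routing, records each clockwise pulse as $1$ and each counterclockwise pulse as $0$, and stops as soon as it has recorded $L\cdot k_w$ consecutive $0$s.

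For correctness I would establish the analogue of the \textbf{Global consistency} property of \Cref{lem:binary:adaptation of lemma 3}, but stated per node-occurrence rather than per node, just as \Cref{lemma:cycle-general} does relative to \Cref{lemma:cycle}: there exist strings $z_1,\dots,z_k$ with $z_i=1m_i10^L$ and indicators $b^u_1,\dots,b^u_k\in\{0,1\}$ such that each occurrence of $u$ has sent transcript $10^{b^u_1}z_1\cdots10^{b^u_k}z_k$ upon completing epoch~$k$, and $u$ decodes $pad^{-1}(m_k)$ in epoch~$k$. The two ingredients compose cleanly: from the proof of \Cref{lemma:cycle-general}, \textbf{Progress} and \textbf{Single Token Holder} are insensitive to how messages are encoded, the $\REQ$-flooding still makes every occurrence send and receive exactly one $\REQ$ pulse per epoch, and the $\textsc{RotateEdges()}$ invariant still guarantees that every data pulse enters each node through its token segment and is thus attributed to the correct occurrence; from the proof of \Cref{lem:binary:adaptation of lemma 3}, the leading $1$ of $Z$ finishes its full traversal of~$C$ before any counterclockwise pulse is emitted, so no bit-$0$ pulse is confused with a \TOKEN pulse, and the padding of $pad(M)$ together with the trailing $1$ bounds the length of every non-terminal run of $0$s in $Z$ by $L-1$. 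Since each bit of $Z$ is received exactly $k_w$ times by $w$---once per occurrence, during that bit's single traversal of~$C$---and consecutive bits of $Z$ induce consecutive recordings at $w$, the record held by $w$ is $Z$ with every bit repeated $k_w$ times; hence a non-terminal $0$-run contributes at most $(L-1)k_w<Lk_w$ consecutive $0$s while the terminal $0^L$ contributes exactly $Lk_w$, so $w$'s stopping rule fires precisely at the end, and collapsing every block of $k_w$ equal recorded pulses (using the known value $k_w$) recovers $Z$, and hence $M$, mirroring the ``divided by $k_u$'' step on \Cref{line:g-count} of the unary version.

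It then remains to count. In the epoch in which $M=(m,u,v)$ is communicated, the modified \textbf{Global consistency} property says each occurrence of each node sends exactly one $\REQ$ pulse, at most one \TOKEN pulse, and exactly $|Z|$ pulses carrying $Z$. Writing $|M|=|m|+O(\log n)$ and using $|pad(M)|\leq\big(1+\frac{1}{L-1}\big)|M|$, we get $|Z|=2+L+|pad(M)|=O(|M|)=O(|m|+\log n)$ since $L\geq 2$ is a fixed constant. Summing over all $|C|$ node-occurrences on the cycle gives $\CCm(m)=|C|\cdot O(|m|+\log n)=O(|C|\cdot|m|+|C|\log n)$, as claimed.

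I expect the only genuinely delicate point to be the interaction of the two adaptations along the \emph{direction} axis. The unary \ALGthree sends \emph{all} \DATA pulses clockwise, and its segment-tracking argument (the ``$i$-th reception'' counting and the $\textsc{RotateEdges()}$ invariant) is phrased for that uniform direction, whereas the binary scheme mixes clockwise ($1$) and counterclockwise ($0$) data pulses. Because each bit pulse is fully received back at the token holder before the next is emitted, at any moment at most one data pulse is in transit and its direction is globally fixed by the bit currently being sent; hence a node can still unambiguously attribute an incoming pulse to the correct occurrence---clockwise pulses reach occurrences $0,1,\dots,k_w-1$ in order, counterclockwise pulses reach $k_w-1,\dots,1,0$ in order. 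Carefully verifying this, together with verifying that the stopping count $Lk_w$ is never triggered before the true terminator, is the crux, but it follows the templates of the two cited lemmas with no new idea.
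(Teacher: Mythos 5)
Your proposal is correct and takes essentially the same route the paper intends: the paper omits the proof, stating only that the lemma follows by a ``direct application'' of the binary encoding of Section~\ref{sec:binary} to \ALGthree, and your argument supplies the details that phrase glosses over---most notably that each node $w$ records each bit of $Z$ exactly $k_w$ consecutive times (once per occurrence), so the terminator-detection threshold must scale to $L\cdot k_w$, and that segment-tracking with mixed pulse directions remains unambiguous because each bit fully traverses $C$ before the next is emitted (clockwise pulses hit occurrences $0,\dots,k_w-1$ in order; counterclockwise ones hit $k_w-1,\dots,0$, mirroring the $\END$ loop of \Cref{alg:general compiler data phase}). The per-occurrence pulse count $|Z|=O(|m|+\log n)$ times $|C|$ occurrences then gives $\CCm(m)=O(|C|\cdot|m|+|C|\log n)$, exactly as in \Cref{lem:overhead-binary}.
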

We omit the details as they repeat the proofs in Section~\ref{sec:binary} for reducing the communication complexity in the simple cycle.

\section{Constructing a Robbins cycle in a fully-defective 2-edge connected network}
\label{section:robbins}
\label{app:robbinsConst}

The simulator of Section~\ref{sec:GeneralCycle}
assumes the nodes are given a Robbins cycle on which they communicate. 
In this section, we show how the nodes can construct such a cycle on any  2-edge-connected fully-defective network~$G$. 

Whitney~\cite{Whitney32} proved that any 2-edge-connected graph $G$ can be decomposed into 
\begin{equation*}
G=C_0 \cup E_0 \cup E_1 \cup \cdots \cup E_k,
\end{equation*}
where $C_0$ is a simple cycle, and for any $i\ge0$, $E_i$ is an \emph{ear}---a simple path or cycle whose endpoints belong to $C_0 \cup E_0 \cup \cdots \cup E_{i-1}$. 
Moreover, the process of decomposing $G$ into ears can be performed by starting from a single node, and constructing $C_0$ and $E_i$ in an increasing order $i=0,1,2,...$. 
See also~\cite{lovasz85,KR00,tsin04,schmidt13} for further details and several distributed ear-decomposition algorithms in noiseless settings.
Our Robbins cycle construction essentially performs a distributed
and content-oblivious 
version of Whitney's ear-decomposition process (see, e.g., Lemma~2.1 in~\cite{ramachandran93} for the centralized algorithm), where nodes form the cycle~$C_0$ and the ears~$E_0,E_1,\ldots$ sequentially. 
A newly constructed ear is incorporated with the previous constructions to form a non-simple cycle that includes them all.

We start at a designated root node and perform a content-oblivious DFS by sending a token over edges in a sequential manner;
see, e.g., \cite[Section~5.4]{peleg00}.
This process continues until the token returns to the root which signifies that a cycle is closed \emph{at the root}. 
We require the constructed cycle to be \emph{simple}. Indeed, if the token reaches some node $v\ne root$ twice, then $v$ sends the token back to where it came from, which is equivalent to \emph{backtracking} in the standard DFS algorithm. %
Backtracked edges do not participate in the constructed cycle, and they are left for future ears.

We denote the {simple} cycle constructed by the above procedure by~$C_0$. The order the DFS-token progresses along~$C_0$ defines  the clockwise direction on the cycle. 
Nodes on $C_0$ employ \ALGthree to communicate over~$C_0$ in a noise-resilient manner with the root being the first token holder.

Recall that a directed cycle can be represented by the nodes either locally, i.e., each node knows its clockwise and counterclockwise neighbor(s), or  globally, i.e., knowing the sequence of IDs that defines the cycle.
Our algorithm will use both representations, however, this is done only to simplify the analysis and reduce the length of the constructed Robbins cycle. 
In Remark~\ref{rem:localCycle} we sketch how to remove this assumption.

Before the nodes on $C_0$ continue with adding ears to~$C_0$, they first broadcast their IDs and achieve a global representation of the cycle. The root sends its ID to its neighbor, who appends its own ID and transfers the message to its next neighbor and so on.
When the message reaches the root again, it contains the sequence of IDs of the cycle~$C_0=(root, v_1, v_2, \ldots, root)$. The root broadcasts this information; it will be used towards continuing the Robbins cycle construction.

Next, the nodes on $C_0$ select a new root, denoted by $root_0$, to be one of the nodes on $C_0$ that still has \emph{unexplored edges}, which are edges that do not participate in $C_0$. 
The construction proceeds by constructing a new ear, $E_0$, starting from~$root_0$. Again, the nodes perform a sequential DFS by sending a DFS-token over unexplored edges, until the DFS-token reaches some node~$z_0$ that belongs to~$C_0$. As before, we require the path of the DFS-token to be simple, and backtrack whenever the token reaches twice the same node that does not lie on~$C_0$.

The simple path that the DFS-token has undergone from $root_0$ to~$z_0$, excluding edges that have backtracked in the DFS search, becomes the newly constructed ear~$E_0$.
A new ear can be a simple cycle if it is a closed ear with $z_0=root_0$, or it can be a simple path if it is an open ear with $z_0\ne root_0$.

Based on $C_0$ and the ear~$E_0$, we define a new cycle~$C_1$ that contains all the edges of~$C_0$ and of~$E_0$, possibly multiple times, so that $C_1$ is a closed (non-simple) cycle. Recall that in a Robbins cycle, each edge has a unique orientation and the cycle is not allowed to cross the same edge in both directions.
Thus, we let $C_1$ be the cycle 
\begin{equation*}
root_0
\xrightarrow[C_0]{} root_0 \xrightarrow[E_0]{} z_0 \xRightarrow[C_0]{} root_0\text{.}
\end{equation*}
The notation $a \xrightarrow[P]{} b$ here means that we take the complete path~$P$.
The notation $a \xRightarrow[P]{} b  $ means the
\emph{shortest path} from $a$ to~$b$ implied by the clockwise orientation of edges in~$P$. If multiple such paths exist, we take the first one by lexicographic order. 
Note that this path might not be a sub-path of~$P$,\footnote{For instance, let $P=(a\to b\to c \to b \to e)$, then $a \xRightarrow[P]{} e$ is the path $(a\to b\to e)$ which is not a sub-path of~$P$.} 
however, it is uniquely defined and can be retrieved by any node that holds the sequence of IDs that defines~$P$.

It follows that the paths $root_0
\xrightarrow[C_0]{} root_0$ and $z_0 \xRightarrow[C_0]{} root_0$ are  well defined and known by all nodes on~$C_0$, since all these nodes know the sequence of IDs that lie on~$C_0$, in their respective order.
However, the nodes still need to know 
the IDs on
$root_0 \xrightarrow[E_0]{} z_0$ in order to obtain the sequence of IDs in the new cycle~$C_1$. 
Towards this end we do the following. 

The nodes on $P_0 \triangleq (z_0 \xRightarrow[C_0]{} root_0)$  along with the nodes on~$E_0$ form a simple cycle~$E_0 \| P_0$ (recall that $\|$ denotes concatenation). 
This cycle is locally defined: the nodes on~$E_0$ define their neighbors when they first obtain the DFS-token. Each node on~$P_0$ belongs to~$C_0$ and, as argued above, can locally define its neighbors on~$P_0$.
Then, the root starts communicating over this cycle %
using \ALGthree.
As before, the first thing the nodes do is communicating their IDs. In fact, only the new nodes that are on $E_0$ but not in~$C_0$ need to broadcast their IDs in their respective order, similarly to the way it was done after the completion of~$C_0$. After this part, $root_0$ can simply construct the string of IDs of the nodes in~$C_1$ and communicate it over~$E_0 \| P_0$. 

Then, the root communicates the sequence of IDs of~$C_1$ to all the nodes in cycle~$C_0$. 
That is, the root and the nodes on $P_0$ stop communicating on the cycle $P_0\|E_0$ and switch back to communicating over the cycle~$C_0$.
Next, the root sends a message to instruct all the nodes in $C_0$ to switch to the new cycle~$C_1$. Note that this message need not reach the nodes in~$E_0$, as they are already ``set'' to the correct~$C_1$. 
Since the other nodes are set to communicate over~$C_0$, the nodes in $E_0$ are excluded from this communication and these nodes remain idle until the first pulse arrives, which happens once the rest of the nodes switch to communicate over~$C_1$.

The process then repeats:
for any $i>0$,
$root_i$ is selected to be a node on $C_{i}$ that still has  edges that do not belong to~$C_i$.
The nodes construct a new ear~$E_i$ whose endpoints, $root_i$ and~$z_i$,
belong to~$C_i$. The nodes then locally define the non-simple cycle 
$C_{i+1} = root_i
\xrightarrow[C_i]{} root_i \xrightarrow[E_i]{} z_i \xRightarrow[C_i]{} root_i$, and start communicating over it. Next, the nodes globally learn the sequence of IDs included in $C_{i+1}$, which is required for the next iteration, and so on.  
This process ends when the cycle $C_{i+1}$ contains all the edges of~$G$.
See Figure~\ref{fig:robbins} for a demonstration.

\captionsetup[subfigure]{singlelinecheck=false}
\begin{figure}[htp]
    \centering
    \subfloat[An example of constructing a simple cycle,  $C_0=(v_1,v_2,v_3,v_4)$, starting from the root~$v_1$.  The clockwise direction corresponds to the pulse propagation in Algorithm~\ref{alg:ED_Cycle} and is marked with arrows. The nodes set $\Next_{v_i}=v_{i+1}$ and $\Prev_{v_i} = v_{i-1}$ for $i=1,2,3,4$, except for $\Next_{v_4}=v_1$ and $\Prev_{v_1}=v_4$. 
    Node~$v_5$ does not receive any pulse and is not on~$C_0$. Node~$v_5$ keeps executing Algorithm~\ref{alg:ED_Cycle}, while the rest of the nodes continue to Algorithm~\ref{alg:ED_Ear}.]
    {

\begin{tikzpicture}

\def \n {4}
\def \radius {1.5cm}
\def \margin {10} 

\useasboundingbox  (-\radius*1.5,-\radius*1.5) rectangle (\radius*1.5,\radius*1.5);

\foreach \s/\nxt in {1/2,2/3,3/4,4/1}
{
  \node[draw, circle] (v\s) at ({360/\n * (\s - 1)}:\radius) {$v_{\s}$};

}
\foreach \s/\nxt in {1/2,2/3,3/4,4/1}
{
  \draw[->, >=latex] (v\s) -- (v\nxt);
}

\node[draw,circle,black!30] (v5) at (0,0) {$v_5$};
\draw[-,black!30] (v1) -- (v5) -- (v3);
\end{tikzpicture}} 
\qquad
    \subfloat[Adding the ear $E_0=(v_1 \to v_5 \to v_3)$ to $C_0$. The nodes on $C_0$ execute $\Pi_{\mathsf{NextRoot}}$ to choose as a new root a node that has unexplored edges. Suppose $v_1$ is elected as the new root. 
    The thick blue arrows describe the pulse propagation during the construction of~$E_0$. 
    The nodes $v_1,v_5,v_3$ update their $\Next$ and $\Prev$ to indicate this path.
    Then, nodes on $P_0=(v_3 \to v_4 \to v_1)$, i.e., on ${v_3 \xRightarrow[C_0]{} v_1}$ update their $\Prev$ and $\Next$ to (locally) form the simple cycle $E_0 \| P_0$. 
    This cycle is used to learn the IDs of nodes in~$E_0$. The node~$v_1$ constructs a global representation of~$C_1$ and broadcasts it over~$C_0$ and over~$E_0\| P_0$.]
    {

\begin{tikzpicture}

\def \n {4}
\def \radius {1.5cm}
\def \margin {10} 

\useasboundingbox  (-\radius*1.5,-\radius*1.5) rectangle (\radius*1.5,\radius*1.5);

\foreach \s in {1,...,\n}
{
  \node[draw, circle] (v\s) at ({360/\n * (\s - 1)}:\radius) {$v_{\s}$};
}
\foreach \s/\nxt in {1/2,2/3,3/4,4/1}
{
  \draw[->, >=latex] (v\s) -- (v\nxt);
}

\node[draw,circle,blue] (v5) at (0,0) {$v_5$};
\draw[->,>=latex,blue,ultra thick] (v1) -- (v5);
\draw[->,>=latex,blue,ultra thick] (v5) -- (v3);

\node[blue] at (v5) [above,inner sep=15pt] {$E_0$};
\end{tikzpicture}}
\qquad
    \subfloat[%
    The resulting Robbins cycle~$C_{1}$, obtained after adding the ear $E_0$ described in part~(b) to~$C_0$. 
    The clockwise direction of~$C_1$ is marked with arrows.]
    {

\begin{tikzpicture}[every label/.style={font=\footnotesize,inner sep=15pt}]

\def \n {8}
\def \radius {1.65cm}
\def \innerfrac {0.75}
\def \margin {11} 

\useasboundingbox  (-\radius*1.25,-\radius*1.35) rectangle (\radius*1.25,\radius*1.3);

\foreach \s/\sname in {1/1 ,2/2 ,3/3 ,4/4 }
{
  \node[draw, circle] (v\s) at ({360/\n * (\s - 1)}:\radius) {$v_{\sname}$};
  \draw[->, >=latex] ({360/\n * (\s - 1)+\margin}:\radius) 
    arc ({360/\n * (\s - 1)+\margin}:{360/\n * (\s)-\margin}:\radius);
}

\foreach \s/\sname in {5/1 ,6/5 ,7/3, 8/4 }
{
  \node[draw, circle,blue] (v\s) at ({360/\n * (\s - 1)}:\radius) {$v_{\sname}$};
  \draw[->, >=latex,blue] ({360/\n * (\s - 1)+\margin}:\radius) 
    arc ({360/\n * (\s - 1)+\margin}:{360/\n * (\s)-\margin}:\radius);
}


\draw[->,dashed] (0:\innerfrac*\radius)  arc [start angle=0, end angle={360/ \n *(5-1)-\margin}, radius=\innerfrac*\radius] node [midway,below]{$C_0$};
\draw[->,dashed,blue] ({360/\n * (5-1)}:\innerfrac*\radius)  arc [start angle=360/\n * (5-1), end angle={360/ \n *(7-1)}, radius=\innerfrac*\radius] node [midway,above right]{$E_0$};
\draw[->,dashed,green!50!black] ({360/\n * (7-1)+\margin}:\innerfrac*\radius)  arc [start angle={360/\n * (7-1)+\margin}, end angle={360/ \n *(9-1)-\margin}, radius=\innerfrac*\radius] node [midway,above left]{$P_0$};

\end{tikzpicture}}
    
    \caption{Constructing a simple cycle by Algorithm~\ref{alg:ED_Cycle} and  extending an ear by Algorithm~\ref{alg:ED_Ear}.}
    \label{fig:robbins}
\end{figure}

\subsection{Formal description}
We now formally define our construction. Each node holds a variable named~$\Cycle$ that contains a global representation of the current~$C_i$. At the same time,
the simple cycle $root_i \xrightarrow[E_i]{} z_i \xRightarrow[C_i]{} root_i$ is represented locally, using the variables $\Next_v$ for the clockwise neighbor of~$v$ and $\Prev_v$ for its counterclockwise neighbor.

In our algorithms, the first ID in the variable~$\Cycle$ is the current root.
When the root node changes, each node~$v$ locally rotates the sequence of IDs in $\Cycle_v$ (say,  clockwise), so that the new root becomes the first ID in the string. 

The pseudo-code for our content-oblivious protocol for constructing a Robbins cycle appears in Algorithms~\ref{alg:ED_Cycle} and~\ref{alg:ED_Ear}. 
These use as sub-procedures the protocols~$\Pi_{\textsf{learnID}}$ and~$\Pi_{\textsf{NextRoot}}$, which are the content-oblivious versions of Algorithms \ref{alg:learn_ids} and~\ref{alg:next}, obtained by simulating them through Theorem~\ref{thm:main-general}.
Note that all these algorithms share the same variables, i.e., $\Cycle_v$, $\Prev_v$, and $\Next_v$ of node~$v$.

Our protocols use the ability to broadcast a message on a cycle defined either locally or globally. 
To be more accurate, the instruction ``broadcast M'' and ``wait for message M'' are to be understood as sending the message~$M$ with destination~$*$ and receiving any message with destination~$*$, respectively, using the method of Remark~\ref{rem:broadcast}. 
The sender also receives the broadcast message after all other nodes receive it and acts upon the pseudo-code for processing it. This guarantees synchronization, i.e., that the sender does not continue before all other nodes receive the broadcast message, which is crucial, for example, when we switch the underlying cycle we communicate over.
Indeed, in the noise-resilient protocol, the sender holds the token and does not release it before it gets the $\END$ pulse for that message, and by this time all other nodes receive that message as well. If now all nodes change their $\Cycle_v$, then the next pulse sent by the root goes through the new cycle.

\renewcommand\thealgorithm{\the\numexpr\value{algorithm}-2(a)}

\begin{algorithm}[htp]
\caption{Content-oblivious Ear-Decomposition: Closing an ear for the first time}\label{alg:ED_Cycle}
\begin{algorithmic}[1]

\State \textbf{Init:} Set $\Pi_{\textsf{learnID}}$ and $\Pi_{\textsf{NextRoot}}$ to be the content-oblivious versions of Algorithms~\ref{alg:learn_ids} and~\ref{alg:next}, respectively, obtained via Theorem~\ref{thm:main-general}.

\Statex
\Statex \textbf{node $v$, upon initialization}:
\State $\mathit{state}_v \gets \mathsf{init}$, $\Next_v\gets \bot$, $\Prev_v\gets \bot$, $\Cycle_v\gets\epsilon$. 
All edges unmarked. 
\If {$v$ is the root}
    \State choose an arbitrary edge~$(v,u)$ 
    \State send a pulse to~$u$ and mark the edge $(v,u)$ as \textsf{used}.
    \State $\Next_v \gets u$, $\mathit{state}_v \gets \mathsf{DFSroot}$ 
\EndIf

\Statex
\Statex \textbf{node $v$, upon receiving a pulse from~$w$}:
\If {$\mathit{state}_v = \mathsf{init}$}
	\State  $\Prev_v \gets w$, mark $(w,v)$ as \textsf{used} \label{line:set-in}
	\State choose an arbitrary neighbor~$u \ne w$ where $(v,u)$ is unmarked
	\State send a pulse to $u$ and mark  $(v,u)$ as \textsf{used}
    \State $\Next_v\gets u$, $\mathit{state}_v \gets \mathsf{DFS}$ \label{l:set_out}
\ElsIf {$\mathit{state}_v = \mathsf{DFS}$}  
    \If {$w = \Next_v$} \Comment{This is a cancellation pulse} \label{l:cancel_pulse_init}
    	\State choose an arbitrary neighbor~$u'$ where $(v,u')$ is unmarked:
    	\IndState send a pulse to $u'$, set $\Next_v \gets u'$ and mark $(v,u')$ as \textsf{used}
    	\label{line:DFS next edge}
    	
    	\If {no such $u'$ exists} 
    	    \State send a pulse to $\Prev_v$ \Comment{Send a cancellation pulse to parent}
    	    \State $\mathit{state}_v \gets \mathsf{init}$, $\Prev_v\gets \bot$, $\Next_v \gets \bot$, unmark all edges %
    	    \label{l:reset_after_cancel}
    	\EndIf

	\ElsIf {$w \ne \Prev_v$}  \Comment{A cycle is closed at $v$, but $v$ is not the root} \label{l:cancel_init}
		\State send a pulse to $w$
		and  mark $(v,w)$ as \textsf{used}.

	\Else ~($w=\Prev_v$)	\Comment{This is a second pulse---node is on a cycle}
		\State send a pulse to~$\Next_v$		%
		\State 
		$\Cycle_v\gets \Pi_{\textsf{learnID}}$, 
		executed  
		over the cycle locally defined by $\Prev_v,\Next_v$; initialize as non token holder.
		\label{line:EDcycle:learnID-node}
 		\label{line:Ei-learns-Ci}
		\State execute $\Pi_{\textsf{NextRoot}}$ over $\Cycle_v$; initialize as non token-holder.

	\EndIf
\ElsIf {$\mathit{state}_v = \mathsf{DFSroot}$} \label{line:ED_C_root}
	\State $\Prev_v \gets w$ 
	\label{line:EdCycle:root-closes-cycle}
	
	\State send a pulse to~$\Next_v$	\Comment{A cycle is closed, start communicating on it}
	
	\State wait until a pulse is received from $\Prev_v$ 
		\label{line:EDcycle:wait2ndPulse}
	\State $\Cycle_v \gets \Pi_{\text{learnID}}$, executed  over the simple cycle locally defined by $\Prev_v,\Next_v$; initialize as token holder.
	\label{line:EDcycle:learnID-root}
	
	\State execute $\Pi_{\textsf{NextRoot}}$ over $\Cycle_v$; initialize as token holder. 
	\label{l:start_iter}

\EndIf \label{line:EDCycle:END}

\end{algorithmic}
\end{algorithm}

\renewcommand\thealgorithm{\the\numexpr\value{algorithm}-3(b)}   %

\begin{algorithm}[htp]
\caption{Content-oblivious Ear-Decomposition: Ear extension}\label{alg:ED_Ear}
\begin{algorithmic}[1]

\setcounter{ALG@line}{\getrefnumber{line:EDCycle:END}}

\Statex \textbf{node $v$ marked as $root$, upon initialization}:
\State choose an edge $(v,u)\notin \Cycle_v$ and send a pulse to~$u$ 
\label{line:EDear:startDFS}
\State $\Next_v \gets u$

\Statex
\Statex \textbf{node $v$, upon receiving a pulse on $(v,u) \notin \Cycle_v$}: 
\State $\Prev_v \gets u$
\label{line:EDear:DFSends}
\State broadcast ``$\langle \mathsf{EarClosedAt}\rangle, v$'' over $\Cycle_v$

\Statex
\Statex \Comment{In parallel to the above, pulses from $\Cycle_v$ are interpreted as messages of a noise-resilient protocol}
\Statex \textbf{node $v$, upon receiving ``$\langle \mathsf{EarClosedAt}\rangle, w$'' on $\Cycle_v$}:
\State $P_i \gets $ the simple path $w \xRightarrow[\Cycle_v]{} root$  \label{l:define_pi} \Comment{$P_i=\emptyset$ if $w=root$}
\If {$v\in P_i$}
    \State set $\Prev_v$, $\Next_v$ according to~$P_i$  
    \label{line:EDear:Pi}
    \Comment{\smash{\parbox[t]{6cm}{The root sets $\Prev$ and $w$ sets $\Next$ (unless $root=w$); inner nodes set both}}}
\EndIf
\If {$v$ is the root} 
    \State send a pulse to~$\Next_v$
    \label{line:EDear:secnodPulse}
    \If {$root = w$}
        \State wait to receive a pulse from~$\Prev_v$ \Comment{A closed ear, the pulse will reach back the root}
        \State broadcast $\langle \mathsf{ready} \rangle$ on $\Cycle_v$
    \EndIf

\ElsIf {$v = w$} \Comment{ $w\ne root$}
    \State wait to receive a pulse from~$\Prev_v$ 
    \State broadcast $\langle \mathsf{ready} \rangle$ on $\Cycle_v$
    \label{line:EDear:sendREADY}
\EndIf
\State wait to receive $\langle \mathsf{ready} \rangle$ on $\Cycle_v$ 
 
\If {$\Prev_v,\Next_v \ne \bot$} \Comment{$v$ is on $P_i$}  	 \label{line:recAddEar}
        \State execute $\Pi_{\textsf{learnID}}$  over the simple cycle locally defined by $\Prev_v,\Next_v$; root is token holder. \label{line:Pi-learns-Ci}
        \State $\Prev_v\gets \bot$, $\Next_v \gets \bot$ 
        \label{line:EDear:afterLearnID}
\EndIf

\If {$v$ is the root}
    \State  broadcast ``$\langle \mathsf{NewCycle}\rangle, C_{i+1}$" over $\Cycle_v$, where $C_{i+1}$
        is the output of $\Pi_{\textsf{learnID}}$.
    	\label{line:AddEar}
\Else
    \State wait to receive the message ``$\langle \mathsf{NewCycle}\rangle, C_{i+1}$" over $\Cycle_v$.
\EndIf
\State $\Cycle_v \gets C_{i+1}$ \Comment{All nodes in $C_i$ switch to $C_{i+1}$; nodes on $E_i$ were set at line~\ref{line:Ei-learns-Ci}}
\State execute $\Pi_{\textsf{NextRoot}}$ over $\Cycle_v$; The root initializes as the token holder
\label{line:Edear:next}
	    
\end{algorithmic}
\end{algorithm}

\renewcommand\thealgorithm{\the\numexpr\value{algorithm}-3}

\begin{algorithm}[htp]
\caption{$\pi_{\textsf{learnID}}$, learning the IDs on a newly constructed ear (noiseless setting)}\label{alg:learn_ids}
\begin{algorithmic}[1]

\Statex \textbf{node $v$, upon initialization:}
\If{$v$ is the root}
    \State send $id(v)$ to $\Next_v$ 
\EndIf

\Statex
\Statex \textbf{node $v$, upon receiving} $m= (id_1,id_2,\ldots)$:

    \If { $id_1 \ne id(v)$}
    \Comment {$\{\Next_v\}_{v\in V}$ is guaranteed to induce a simple cycle}
        \State  $m'\gets m \| id(v)$
        \State send $m'$ to $\Next_v$

        \Else \Comment {Back to root, $m$ contains all the nodes on $\{\Next_v\}_{v\in V}$}

            \State $\mathit{new\_cycle}\gets \Cycle_v \| m$

            \State broadcast ``$\langle \text{\textsf{done}}\rangle,   \mathit{new\_cycle}$''

        \EndIf

\Statex
\Statex \textbf{node $v$, upon receiving} ``$\langle \text{\textsf{done}}\rangle,  C$'':
    \State return  $C$ 
\end{algorithmic}
\end{algorithm}

\begin{algorithm}[htp]
\caption{$\pi_{\textsf{NextRoot}}$,  choosing a new root (noiseless setting)}\label{alg:next}
\begin{algorithmic}[1]

\Statex \textbf{node $v$, upon initialization:}
\If{$v$ is the root}
    \State broadcast ``$\langle \textsf{check edges}\rangle$''
    \State wait to receive $|\{id(v') \mid v'\in \Cycle_v\}|$ many replies
    \If{received ``$\langle \textsf{has unexplored edges}\rangle$, $id(u)$''}
    \label{line:next:chooseNew}
    \Comment{Choose arbitrarily, if non unique}
        \State broadcast ``$\langle \textsf{new root}\rangle$, $id(u)$''
    \Else \Comment{All edges are explored}
        \State broadcast ``$\langle \textsf{completed}\rangle$''
    \EndIf
\EndIf

\Statex
\Statex \textbf{node $v$, upon receiving} $\langle \textsf{check edges}\rangle$:
\If{$v$ has unexplored edges} 

    \State broadcast ``\emph{$\langle \textsf{has unexplored edges}\rangle$, $id(v)$}''

\Else
\State broadcast ``$\langle \textsf{no unexplored edges}\rangle$, $id(v)$''

\EndIf

\Statex 
\Statex \textbf{node $v$, upon receiving }``$\langle \textsf{new root}\rangle$, $id(u)$'':
\Comment{Broadcast message is received also by its originator}

\State rotate $\Cycle_v$ clockwise until it starts with an occurrence of~$u$. The node $u$ is now marked root
\State execute Algorithm~\ref{alg:ED_Ear} \label{line:next:again}

\Statex
\Statex \textbf{node $v$, upon receiving} $\langle \textsf{completed}\rangle$:
\State terminate  \Comment{A Robbins cycle is constructed}
\label{alg:next:terminate}
\end{algorithmic}
\end{algorithm}

\subsection{Analysis}
Our main theorem in this section shows that \ALGfour constructs a Robbins cycle that includes all the edges in~$G$ despite a fully-defective environment. 
\begin{theorem}\label{thm:ED}
For any 2-edge-connected graph~$G$, 
\ALGfour
constructs a sequence of cycles $C_0,\ldots, C_k$, where $C_0$ is a simple cycle that includes the root, and $C_k$ is a Robbins cycle that contains all the edges~$E$ of~$G$.  
\end{theorem}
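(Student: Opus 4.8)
The plan is to prove \Cref{thm:ED} by induction on the ear index $i$, maintaining after the construction of each $C_i$ the following invariant: (i) $C_i$ is a closed walk in $G$ that uses exactly the edges of $C_0\cup E_0\cup\cdots\cup E_{i-1}$ and uses no edge in both directions (the Robbins property); (ii) every node $v$ lying on $C_i$ holds the correct global representation $\Cycle_v=C_i$ (up to rotation) together with $\Prev,\Next$ values consistent with the phase it is currently in, while every node not on $C_i$ is still in its initial state; and (iii) the nodes on $C_i$ communicate over $C_i$ through the noise-resilient simulator of \Cref{sec:GeneralCycle}, which is correct by \Cref{thm:main-general} since $C_i$ is a valid Robbins cycle and, in each sub-protocol invocation, the relevant local cycle is simple. \Cref{thm:ED} then follows: the termination clause of $\pi_{\textsf{NextRoot}}$ (\Cref{alg:next}) fires precisely when $C_i$ already contains all of $E$, and a closed walk covering every edge of a connected graph of minimum degree $\ge 2$ passes through every vertex, so $C_i=C_k$ is a Robbins cycle.

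For the base case I would analyze the content-oblivious DFS of \Cref{alg:ED_Cycle}. First, at this stage there is no other communication in the network, so DFS pulses cannot be confused with anything else. The token's walk terminates because every edge is crossed at most twice (a forward pulse and a backtracking/cancellation pulse), and the only way it ends is by a token reaching the root in state $\mathsf{DFSroot}$. Crucially, it \emph{does} return to the root: since $G$ is 2-edge-connected the first edge $(root,u)$ chosen by the root is not a bridge, so the root is reachable from $u$ in $G-(root,u)$; as the token performs a DFS from $u$ that never re-traverses the marked edge $(root,u)$, and as the root halts the search the first time a token reaches it (rather than continuing), the walk necessarily reaches the root. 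The surviving, non-backtracked path is a \emph{simple} cycle through the root because any non-root node revisited by the token immediately sends a cancellation pulse and resets, so it never appears twice. Finally, the confirmation pulse emitted by the root traverses $C_0$, synchronizing all its nodes into $\pi_{\textsf{learnID}}$, which — run over the simple cycle $C_0$ via \Cref{thm:main-general} — establishes the global representation, after which $\pi_{\textsf{NextRoot}}$ is launched over $C_0$.

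For the inductive step I would assume the invariant for $C_i$ and follow \Cref{alg:ED_Ear}. Either $\pi_{\textsf{NextRoot}}$ detects that no node has an unexplored edge, in which case $C_i=C_k$ is as claimed, or it elects a $root_i$ on $C_i$ with an unexplored edge. Starting from $root_i$, a DFS over unexplored edges (new nodes using the $\mathsf{init}/\mathsf{DFS}$ handlers of \Cref{alg:ED_Cycle}, nodes of $C_i$ using the $\langle\mathsf{EarClosedAt}\rangle$ handler of \Cref{alg:ED_Ear}, while pulses on $\Cycle_v$-edges are still interpreted by the running simulator since the two pulse types travel on disjoint edge sets) traces a simple path or cycle $E_i$ from $root_i$ to some $z_i\in V(C_i)$, built only from previously unexplored edges. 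I would show $E_i$ exists and is found by invoking Whitney's structural fact: the chosen unexplored edge incident to $root_i$ lies on a cycle of $G$, and following that cycle from $root_i$ until the first return to $V(C_i)$ yields a path whose internal vertices avoid $V(C_i)$ and whose edges are therefore all unexplored; a DFS over the unexplored subgraph reachable from $root_i$ thus reaches $V(C_i)$ — where the reached node closes the ear — before backtracking everything, and revisits of non-$C_i$ vertices are pruned as in the base case, so $E_i$ is simple. Then $P_i=z_i\xRightarrow[C_i]{}root_i$ is the well-defined lexicographically-first shortest path induced by the clockwise orientation of $C_i$, computable locally by every node of $C_i$ from $\Cycle_v$; since the internal vertices of $E_i$ are new, $E_i\|P_i$ is a simple cycle, so $\pi_{\textsf{learnID}}$ over it (again via \Cref{thm:main-general}) lets $root_i$ learn the IDs on $E_i$ and assemble $C_{i+1}=root_i\xrightarrow[C_i]{}root_i\xrightarrow[E_i]{}z_i\xRightarrow[C_i]{}root_i$. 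This is a closed walk using exactly the edges of $C_i$ plus $E_i$, and it keeps the Robbins property, since the edges of $C_i$ (including the doubly-traversed segment $z_i\xRightarrow[C_i]{}root_i$) retain their clockwise orientation and the edges of $E_i$ receive a fresh, consistent one. It remains to verify timing: the chain of broadcasts ($\langle\mathsf{EarClosedAt}\rangle$, $\langle\mathsf{ready}\rangle$, $\langle\mathsf{NewCycle}\rangle$), each delivered to its own sender only after every other node, together with the explicit waits, guarantees that all nodes of $C_i$ update $\Cycle_v$ to $C_{i+1}$ and refresh $\Prev,\Next$ before any pulse on $C_{i+1}$ is sent, that the new nodes of $E_i$ — already set at \Cref{line:Ei-learns-Ci} — stay idle until then, and that each sub-protocol runs on a cycle for which \Cref{thm:main-general} applies; hence the invariant is restored for $C_{i+1}$. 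Termination follows because each ear adds at least one new edge and $|E|$ is finite.

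The step I expect to be the main obstacle is showing that the content-oblivious DFS for the ears genuinely behaves like an ordinary DFS that locates a Whitney ear: one must (a) confirm that DFS pulses and noise-resilient-protocol pulses never interfere, given that they share nodes but not edges; (b) reconcile the backtracking/reset mechanism with the claim that the search reaches $V(C_i)$ before exhausting $root_i$'s edges; and (c) check that a node already on $C_i$ reacts to an incoming DFS pulse with the ear-closing handler rather than any other. A closely related nuisance is the bookkeeping of the cycle switch — ensuring that no node ever sends or interprets a pulse on the stale cycle $C_i$ after another node has already moved to $C_{i+1}$ — which is exactly what the ``sender receives its own broadcast last'' property of \Cref{rem:broadcast} is designed to provide.
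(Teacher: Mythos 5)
Your proposal follows essentially the same route as the paper: the paper proves the statement via Lemmas~\ref{lem:ED_C0} (the content-oblivious DFS closes a simple cycle $C_0$ at the root, relying on 2-edge-connectivity to guarantee the token returns), a companion lemma (the nodes on $C_0$ learn its global representation via $\Pi_{\textsf{learnID}}$), and Lemma~\ref{lem:nextCi} (each iteration of \Cref{alg:ED_Ear} strictly grows $C_i$ to $C_{i+1}$, preserving the consistent-orientation property, with the ear argument also reducing to a bridge/2-edge-connectivity contradiction), combining them by induction in exactly the way your base case and inductive step do. You package the helper lemmas into a single invariant carried through the induction, and you invoke the non-bridge property a bit differently (``the chosen edge lies on a cycle'' vs.\ the paper's direct bridge contradiction), but these are presentational differences, not a genuinely different argument; the items you flag as the main obstacles (pulse-type disambiguation, DFS backtracking reaching $V(C_i)$, and the timing of the cycle switch) are precisely the points the paper's Lemma~\ref{lem:nextCi} proof spends its effort on.
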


For the ease of the analysis, we define iterations of \ALGfour. %
We say that iteration $i+1$ begins 
when the $\Pi_{\textsf{NextRoot}}$ is being executed for the $i$-th time \emph{by a node which is currently marked as root}, i.e., when such a node 
reaches either Line~\ref{l:start_iter} or~\ref{line:Edear:next}.
Note that by the code, there can only be one root for each iteration.
We start with some helping lemmas.

\begin{lemma}\label{lem:ED_C0}
Suppose Algorithm~\ref{alg:ED_Cycle} is executed by all nodes in a 2-edge-connected graph~$G$, where a single node is marked as a root.
Then, the root node eventually 
reaches \Cref{line:EdCycle:root-closes-cycle},
and at that time,
there exists a single simple cycle~$C_0$, locally represented by the nodes on it. Furthermore, $root\in C_0$. 
\end{lemma}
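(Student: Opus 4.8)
The strategy is to analyze the content-oblivious DFS in \Cref{alg:ED_Cycle} as a standard (noiseless) DFS whose traversal behavior is encoded purely in the pulse pattern, and argue that it terminates with a simple cycle through the root. I would first set up the invariant that at every point in time there is a unique ``active'' pulse traveling through the graph (this follows from the code: a node sends exactly one pulse in response to each pulse it receives, and the root sends exactly one initiating pulse), so the DFS is sequential and well-defined. Then I would track the path $\mathcal{P}$ of \textsf{used}-marked edges currently held by nodes in state $\mathsf{DFS}$ or $\mathsf{DFSroot}$: the claim is that at all times $\mathcal{P}$ is a simple path starting at the root (the root's $\Next$ edge, then each subsequent node's $\Prev\to\Next$ chain), because whenever the pulse would close a non-root cycle (the case $w\ne\Prev_v$ at a node in state $\mathsf{DFS}$, handled by bouncing the pulse straight back on $(v,w)$) or would revisit an exhausted node, the offending edge is either not incorporated or is backtracked via the cancellation pulse to $\Prev_v$, which resets that node to $\mathsf{init}$ and restores the path.

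The core progress argument is the standard DFS termination/connectivity argument, adapted: because $G$ is 2-edge-connected, it has no bridges, so in particular every node has degree $\ge 2$; I would use this to show the DFS never ``gets stuck'' at a non-root node $v\ne root$ in a way that would force cancellation of the edge on which it first entered $v$ — when $v$ receives its second pulse on $w\ne\Prev_v$ it bounces it back rather than backtracking, so $v$'s incoming edge from its true parent is never the one canceled; and when $v$ exhausts all unmarked edges it backtracks to its parent, which is finite progress. Since the number of (node, unmarked-edge) states strictly decreases along the execution except when the path grows, and $G$ is finite, the DFS-token must eventually return to the root along an edge $w$ with $w\ne\Next_{root}$ (the closing pulse), at which point $root$ executes \Cref{line:EdCycle:root-closes-cycle}. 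At that moment, the \textsf{used}-marked edges forming the current path $\mathcal{P}$, together with the closing edge $(w,root)$, constitute a cycle; simplicity follows because any attempt to close a sub-cycle at a non-root node was bounced back and never incorporated into $\mathcal{P}$, so no node other than the root appears twice. Finally, ``locally represented'' is immediate: every node on this cycle has set both $\Prev_v$ and $\Next_v$ (a non-root node sets $\Prev_v$ on line~\ref{line:set-in} and $\Next_v$ on line~\ref{l:set_out} or when forwarding; the root sets $\Next_v$ at initialization and $\Prev_v$ at line~\ref{line:EdCycle:root-closes-cycle}), and these are mutually consistent by construction, giving a consistent directed simple cycle containing the root.

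The main obstacle I anticipate is making the ``path invariant'' genuinely precise in the presence of the cancellation mechanism: I need a clean inductive statement, maintained across each pulse event, that simultaneously captures (i) uniqueness of the traveling pulse, (ii) the set of \textsf{used} edges held in states $\mathsf{DFS}/\mathsf{DFSroot}$ forms a simple path rooted at $root$, (iii) nodes in state $\mathsf{init}$ have no \textsf{used} edges, and (iv) when a node is revisited on a non-$\Prev$ edge it is interior to the current path — and then verify that every branch of the $\mathsf{DFS}$ case (cancellation pulse $w=\Next_v$; sub-cycle closure $w\ne\Prev_v$; second pulse $w=\Prev_v$, which only occurs once the prefix up to $v$ has stabilized into part of $C_0$) preserves it. The remaining subtlety is termination: because cancellation can reset nodes and re-explore edges, I would use as a potential function the pair (number of unmarked edges, measured globally) with the length of the current stabilized path, and argue lexicographic decrease, or alternatively invoke the fact that the content-oblivious simulation faithfully mimics the noiseless Algorithm~\ref{alg:learn_ids}/DFS whose termination is classical. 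I would lean on the latter where possible, keeping the fully-defective channel issues isolated to the observation (already established via \Cref{thm:RobbinsCycle} for the sub-cycles that form during the run) that pulses are uncorruptible in existence, so only their direction and order matter — which is exactly the information the DFS logic uses.
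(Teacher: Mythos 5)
There is a genuine gap in the core progress argument, specifically in how you invoke 2-edge-connectivity. You claim the DFS ``never gets stuck at a non-root node $v\ne root$ in a way that would force cancellation of the edge on which it first entered~$v$,'' arguing that bouncing handles revisits. But this is not what the algorithm does, and your own next clause (``when $v$ exhausts all unmarked edges it backtracks to its parent'') contradicts it: \Cref{l:reset_after_cancel} \emph{does} cancel the edge from $\Prev_v$ when $v$ runs out of unmarked edges, resetting $v$ to $\mathsf{init}$. This cascade can propagate arbitrarily far up the current path, and the scenario that must be ruled out is precisely that the cascade reaches $\Next_{root}$, which would then send a cancellation pulse to the root; the root, being in state $\mathsf{DFSroot}$, cannot distinguish this from a genuine cycle closure and would set $\Prev_{root}=\Next_{root}$, producing a degenerate two-step ``cycle'' that reuses the edge $(root,\Next_{root})$ in both directions. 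Your reduction of 2-edge-connectedness to ``every node has degree $\ge 2$'' is far too weak to exclude this (degree $\ge 2$ does not preclude bridges); and your concluding sentence that ``the DFS-token must eventually return to the root along an edge $w$ with $w\ne\Next_{root}$'' is exactly the statement that needs to be proved, not asserted. The missing argument is global, not local: if the cascade were to reach $\Next_{root}=u$, then by induction every node ever visited would have had all its edges marked before resetting, so the set $V^*$ of visited nodes (minus the root) is closed under adjacency in $G\setminus\{root\}$ and equals $u$'s component there, and the only edge from $V^*$ to $root$ that was never used for a DFS advance is $(u,root)$ itself — any other edge $(v,root)$ with $v\in V^*$, $v\ne u$, would have been traversed before $v$ could reset, reaching the root. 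Hence $(u,root)$ would be a bridge, contradicting 2-edge-connectivity. This is the argument the paper's proof encodes in its brief appeal to ``a DFS search, once completed, explores all the edges,'' and it is absent from your proposal.

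Two smaller issues. First, your potential-function sketch (``number of unmarked edges'' lexicographically with path length) does not immediately decrease, because a reset at line~\ref{l:reset_after_cancel} \emph{unmarks} all of a node's edges and thus increases the first coordinate; termination actually rests on the observation (made in the paper only later, when bounding communication) that a reset node never receives the DFS-token again because each of its incident edges remains marked at the other endpoint. Second, your plan to ``invoke \Cref{thm:RobbinsCycle} for the sub-cycles that form during the run'' is misplaced for this lemma: during the construction of~$C_0$ no cycle exists yet, the protocol is raw sequential pulse-passing rather than a simulation over a cycle, and the only noise property you need is that pulses cannot be deleted or injected, which is a modeling assumption independent of that theorem.
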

\begin{proof}
It is immediate from the pseudo-code that Algorithm~\ref{alg:ED_Cycle} performs a sequential depth first traversal starting from the root and using marked edges to avoid repeating already visited edges.
We can think of the DFS as sending a DFS-token that progresses over non-visited edges until reaching a visited node~$v$. The DFS-token advances by sending a single pulse. 

Suppose the DFS-token reaches an already-visited node~$v$, this node is either the root, in which case we are done, 
or it is not the root. In the latter case,
the node $v$ sends the DFS-token back to where it came from, causing the DFS to backtrack that edge and continue with the DFS from the parent of~$v$ in the induced DFS tree.
Since the graph is 2-edge-connected, there exists a simple cycle that begins and ends at the root. 
A DFS search, once completed, explores all the edges in~$G$. Therefore, the DFS must eventually reach the root again and close a \emph{simple} cycle, defined by the progress of the DFS-token while ignoring any backtracked edges. Indeed, each node sets its $\Prev_v$ variable to the first node from which the DFS-token is received and sets its $\Next_v$ variable to be the node to which the DFS-token progresses. Backtracking an edge resets $\Prev_v,\Next_v$, accordingly in  Lines \ref{line:DFS next edge} or \ref{l:reset_after_cancel}.

Denote the above constructed cycle as~$C_0$.
We note that nodes that are not on $C_0$ are either never reached by the DFS or the DFS reaches them and backtracks since it does not reach the root from that path. In either case, their status at the time when the root reaches \Cref{line:EdCycle:root-closes-cycle}, and also at the end of Algorithm~\ref{alg:ED_Cycle}, is $init$ with no marked edges, and with $\Prev=\Next=\bot$. 
Therefore, $C_0$ is the only cycle  defined at this point. 
\end{proof}

Next, we observe that the nodes on $C_0$ switch to a global representation of their cycle.
\begin{lemma}
Once the root completes \Cref{line:EDcycle:learnID-root}, all the nodes on $C_0$ hold a global representation string of~$C_0$. 
\end{lemma}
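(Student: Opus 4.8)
The plan is to derive this from the correctness of the \emph{noiseless} subroutine $\pi_{\textsf{learnID}}$ (Algorithm~\ref{alg:learn_ids}) together with the faithfulness of its content-oblivious simulation $\Pi_{\textsf{learnID}}$. First I would invoke Lemma~\ref{lem:ED_C0}: at the moment the root reaches Line~\ref{line:EdCycle:root-closes-cycle}, there is a simple cycle $C_0$ through the root, locally represented by the $\Prev$ and $\Next$ variables of the nodes on it, with the DFS-token's direction being the clockwise one. Writing $C_0=(r=v_0,v_1,\dots,v_{n-1})$ in clockwise order (so $\Next_{v_j}=v_{j+1}$, $\Prev_{v_j}=v_{j-1}$ modulo $n$), I would then trace the ``second pulse'' wave: right after Line~\ref{line:EdCycle:root-closes-cycle} the root sends a pulse to $\Next_r=v_1$; each $v_j$ ($1\le j\le n-1$), being in state $\mathsf{DFS}$ with $\Prev_{v_j}=v_{j-1}$, receives this pulse from its $\Prev$-neighbour and thus takes the ``$w=\Prev_v$'' branch of Algorithm~\ref{alg:ED_Cycle}, forwarding a pulse to $\Next_{v_j}$ and then invoking $\Pi_{\textsf{learnID}}$ as a non-token-holder (Line~\ref{line:EDcycle:learnID-node}); when the wave returns to $r$ along $\Prev_r=v_{n-1}$, the root passes Line~\ref{line:EDcycle:wait2ndPulse} and invokes $\Pi_{\textsf{learnID}}$ as the token holder (Line~\ref{line:EDcycle:learnID-root}). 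Hence all $n$ nodes of $C_0$ jointly run $\Pi_{\textsf{learnID}}$ over the simple cycle $C_0$ with $r$ the initial token holder.

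Next I would appeal to Theorem~\ref{thm:main-general} (a simple cycle being a special case of a Robbins cycle): the fully-defective execution of $\Pi_{\textsf{learnID}}$ over $C_0$ simulates some noiseless execution of $\pi_{\textsf{learnID}}$ over $C_0$ with the same initial token holder $r$, so it suffices to inspect that noiseless execution. There, $r$ launches the message $(id(r))$ along $\Next$; each node $v_j$ ($j\ge 1$) sees $id_1=id(r)\ne id(v_j)$, appends its own ID, and forwards along $\Next$; when the message returns to $r$ it equals $s\triangleq(id(v_0),id(v_1),\dots,id(v_{n-1}))$, the clockwise ID-sequence of $C_0$, and since $C_0$ is simple, $r$ is the unique node that ever sees its own ID at the head, so it alone takes the other branch. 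Using that $\Cycle_v$ was initialised to $\epsilon$ at the start of Algorithm~\ref{alg:ED_Cycle} and never modified since, $r$ sets $\mathit{new\_cycle}\gets \Cycle_r\,\|\,s=s$ and broadcasts $\langle\textsf{done}\rangle,s$. By the broadcast functionality of the simulator (Remark~\ref{rem:broadcast}) every node of $C_0$ receives this message and hence returns $s$ from $\Pi_{\textsf{learnID}}$, i.e.\ sets $\Cycle_v\gets s$; and by the broadcast synchronization property — the broadcasting node, here $r$ itself, receives its own broadcast only after all other nodes do — the root completes Line~\ref{line:EDcycle:learnID-root} only after every node on $C_0$ has already set $\Cycle_v\gets s$. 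As $s$ is exactly a global representation string of $C_0$, the claim follows.

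I do not expect a deep obstacle here; the main thing to be careful about is the state bookkeeping behind the ``second pulse'' wave: I must verify that, at the time $C_0$ is closed, every node of $C_0$ is genuinely in state $\mathsf{DFS}$ with $\Prev,\Next$ pointing exactly along $C_0$, so that stray DFS pulses on off-cycle edges (handled by the ``$w\ne\Prev_v$ and $w\ne\Next_v$'' branch, which touches only edge marks) and intermediate backtrack updates to $\Next$ have left these pointers in their final, consistent configuration — this is precisely the content guaranteed by Lemma~\ref{lem:ED_C0}, so I would cite it rather than re-prove it. A secondary point is to check that $\pi_{\textsf{learnID}}$ meets the hypotheses of Theorem~\ref{thm:main-general} — a single initial token holder ($r$, which sends the first message) and use of the broadcast primitive of Remark~\ref{rem:broadcast} — so that the simulation theorem and the broadcast/synchronization guarantees apply verbatim.
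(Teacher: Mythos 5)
Your proposal is correct and follows essentially the same route as the paper: cite Lemma~\ref{lem:ED_C0} for the local representation of $C_0$, trace the second pulse that triggers all nodes on $C_0$ to enter $\Pi_{\textsf{learnID}}$ with the root as sole token holder, invoke the simulation guarantee to reduce to the noiseless $\pi_{\textsf{learnID}}$, observe it produces the clockwise ID-string of $C_0$ and broadcasts it, and conclude via the broadcast synchronization that all nodes hold the global representation before the root finishes. Your version is if anything slightly more explicit than the paper's (e.g.\ you note $\Cycle_v=\epsilon$ so the concatenation is just the ID-string, and you spell out why the root's completion of the line implies everyone else has already returned), but there is no substantive difference in approach.
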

\begin{proof}
Lemma~\ref{lem:ED_C0} establishes that once the root reaches \Cref{line:EdCycle:root-closes-cycle}, then $C_0$ is locally well-defined, i.e., every node that belongs to $C_0$ knows the previous and subsequent nodes in the cycle.
The root then sends a second pulse which progresses over $C_0$ and causes all the nodes on~$C_0$
to execute $\Pi_{\textsf{learnID}}$, where the root is the token holder (\Cref{line:EDcycle:learnID-root}) and other nodes are non token holders (\Cref{line:EDcycle:learnID-node}).
Note that the root awaits until the second pulse reaches it back (\Cref{line:EDcycle:wait2ndPulse}). 
By that time, all the other nodes on~$C_0$ start executing $\Pi_{\textsf{learnID}}$, but they are not token holders, so they remain idle. Only once the root starts executing $\Pi_{\textsf{learnID}}$, pulses are sent over~$C_0$ and the content-oblivious computation of Algorithm~\ref{alg:learn_ids} initiates.

The execution of Algorithm~\ref{alg:learn_ids} produces the sequence of IDs in $C_0$ according to the clockwise direction of the cycle: the root begins by sending its ID to its $\Next$ (clockwise) neighbor, which concatenates its ID, and so on. 
Once the message reaches the root again, it contains all the IDs of the nodes in~$C_0$ \emph{according to the clockwise direction of the cycle}. This string is then broadcast to all~$C_0$, so all the nodes now possess the global representation of~$C_0$ as required.
\end{proof}

Note that after the construction of~$C_0$ completes, the nodes that belong to~$C_0$ continue to execute Algorithm~\ref{alg:ED_Ear}, while the rest of the nodes are still executing Algorithm~\ref{alg:ED_Cycle}.
We now argue that the algorithm keeps adding edges to the currently-constructed cycle.

For a cycle~$C$, let us denote by $Edge(C)$ the set of edges in~$C$. 
We prove that each iteration of \ALGfour
constructs a larger cycle. That is, assuming the nodes on~$C$ execute Algorithm~\ref{alg:ED_Ear} while the rest of the nodes execute Algorithm~\ref{alg:ED_Cycle}, then at the end of that iteration, there is a globally defined cycle $C'$ such that all the nodes on $C'$ know this cycle (the other nodes keep executing Algorithm~\ref{alg:ED_Cycle}), and $C'$ is strictly larger than~$C$, that is,
$Edge(C) \subsetneq Edge(C')$.

\begin{lemma}\label{lem:nextCi}
Let $G$ be a 2-edge-connected graph and let $C_i$ be a cycle, such that $E\setminus Edge(C_i) \ne \emptyset$.
Let \emph{the root} be a single marked node on $C_i$ that is adjacent to an edge in~$E\setminus Edge(C_i)$.
Suppose nodes on $C_i$ all start executing Algorithm~\ref{alg:ED_Ear} while other nodes in~$G$ run Algorithm~\ref{alg:ED_Cycle} and their state is~$init$.
At the end of this iteration, 
there exists a
cycle~$C_{i+1}$ with $Edge(C_i) \subsetneq Edge(C_{i+1})$, all the nodes on $C_{i+1}$ know its global representation, and all the other nodes continue executing Algorithm~\ref{alg:ED_Cycle} and their state is~$init$. 
Further, if all the occurrences of any edge in~$Edge(C_i)$ have the same orientation, the same holds for~$C_{i+1}$.
\end{lemma}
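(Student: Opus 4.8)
The plan is to retrace the argument of \Cref{lem:ED_C0}, but one level ``above'' the noise-resilient layer that the nodes of $C_i$ are already running over $C_i$. First I would analyze the DFS-token that the root launches along the off-cycle edge $(root,u)\notin Edge(C_i)$ (\Cref{line:EDear:startDFS}). Every vertex it reaches that is \emph{not} on $C_i$ is, by hypothesis, still executing \Cref{alg:ED_Cycle} in state $\mathsf{init}$, and hence reacts exactly as in the proof of \Cref{lem:ED_C0}: it records $\Prev$, forwards the token over an unmarked edge, sets $\Next$ and state $\mathsf{DFS}$, answers a second pulse on a fresh edge by returning it, and upon a cancellation pulse either advances to another unmarked edge or backtracks to its parent. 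Moreover every edge the token can traverse is off $C_i$ (an off-$C_i$ vertex has no edge in $C_i$, since an edge of $C_i$ has both endpoints in $C_i$), so when the token first reaches a vertex $z_i\in C_i$ it arrives on an edge $\notin\Cycle_{z_i}$ and triggers the ``upon receiving a pulse on $(v,u)\notin\Cycle_v$'' handler of \Cref{alg:ED_Ear}, which merely records $\Prev$ and broadcasts $\langle\mathsf{EarClosedAt}\rangle,z_i$ --- it does \emph{not} propagate the token. Thus the token halts at the first such $z_i$, and the non-backtracked edges from $root$ to $z_i$ form a simple path $E_i$ with $Edge(E_i)\cap Edge(C_i)=\emptyset$ and $(root,u)\in Edge(E_i)$. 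Existence of $z_i$ follows from 2-edge-connectivity: $(root,u)$ is not a bridge, so there is a walk from $u$ to $root$ in $G-(root,u)$; taking its shortest prefix reaching $C_i$, all its edges have an endpoint off $C_i$ and hence lie outside $Edge(C_i)$ (the trivial walk if $u\in C_i$), so the off-$C_i$ component of $u$ contains a vertex of $C_i$; a depth-first traversal from $root$ explores that whole component unless stopped earlier, so the token must reach $C_i$.

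Next I would follow the formation of $C_{i+1}$. When $z_i$ broadcasts $\langle\mathsf{EarClosedAt}\rangle,z_i$ over $C_i$, \Cref{thm:main-general} guarantees every vertex of $C_i$ receives it; each of them locally computes $P_i=(z_i\xRightarrow[C_i]{}root)$ from its global copy $\Cycle_v=C_i$, and the vertices on $P_i$ set their $\Prev,\Next$ along $P_i$ (\Cref{line:EDear:Pi}). Together with the $\Prev,\Next$ already installed along $E_i$, this turns $E_i\|P_i$ into a simple cycle represented locally by exactly the vertices of $E_i\cup P_i$. The root now emits the ``second pulse'' along $E_i$ (\Cref{line:EDear:secnodPulse}); this is the pulse that makes each internal vertex of $E_i$ recognize, from $w=\Prev_v$, that it lies on a cycle, forward the pulse, and join $\Pi_{\textsf{learnID}}$ over $E_i\|P_i$ as a non-token-holder (\Cref{line:EDcycle:learnID-node}, setting $\Cycle_v$ to the output there). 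When the pulse returns, a $\langle\mathsf{ready}\rangle$ broadcast over $C_i$ synchronizes the vertices of $P_i$, which then join the same $\Pi_{\textsf{learnID}}$ with the root as token holder (\Cref{line:Pi-learns-Ci}). Invoking \Cref{thm:main-general} on \Cref{alg:learn_ids} over the simple cycle $E_i\|P_i$, the root learns the ID sequence of $E_i\|P_i$ and therefore of $C_{i+1}=root\xrightarrow[C_i]{}root\xrightarrow[E_i]{}z_i\xRightarrow[C_i]{}root$, which it then broadcasts as $\langle\mathsf{NewCycle}\rangle,C_{i+1}$ over $C_i$; the vertices on $E_i\setminus C_i$ already hold $C_{i+1}$ from the $\Pi_{\textsf{learnID}}$ over $E_i\|P_i$, and all vertices of $C_i$ set $\Cycle_v\gets C_{i+1}$, so every vertex on $C_{i+1}$ ends the iteration holding its global representation. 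Vertices that are on neither $C_i$ nor $E_i$ were either never touched by the DFS or were fully backtracked, so by the same bookkeeping as in \Cref{lem:ED_C0} they are back in state $\mathsf{init}$ in \Cref{alg:ED_Cycle} with $\Prev=\Next=\bot$ and no marked edges.

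It then remains to check the two structural claims. Strict containment $Edge(C_i)\subsetneq Edge(C_{i+1})$ is immediate, since $C_{i+1}$ contains a full traversal of $C_i$ together with $E_i$, and $E_i$ contains the edge $(root,u)\notin Edge(C_i)$. For the orientation claim, assume every edge of $Edge(C_i)$ has a single orientation in $C_i$. In $C_{i+1}$, the edges contributed by the full traversal of $C_i$ and by $P_i\subseteq C_i$ are all traversed in the clockwise direction of $C_i$, hence consistently with their orientation in $C_i$; the edges of $E_i$ are new and each occurs only inside the $E_i$ portion, traversed once in the DFS direction. So no edge of $C_{i+1}$ is ever traversed in both directions, which is precisely the stated property. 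I expect Step~1 to be the crux: one must argue, against the asynchronous content-oblivious semantics, that the DFS-token really behaves as a depth-first traversal and provably reaches $C_i$ and stops there, and that the DFS pulses (carried on off-$C_i$ edges) never interfere with the noise-resilient-simulator pulses that are simultaneously circulating on the edges of $C_i$ --- the separation holding exactly because a node reacts to an incoming pulse according to whether the edge it arrived on belongs to $\Cycle_v$.
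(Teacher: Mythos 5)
Your proposal follows essentially the same route as the paper's proof: the DFS-token launched over $(root,u)\notin Edge(C_i)$ behaves like the DFS of \Cref{lem:ED_C0} on the subgraph of off-$C_i$ edges; 2-edge-connectivity guarantees a walk from $u$ to a $C_i$-vertex that avoids both $(root,u)$ and $Edge(C_i)$, so the token reaches some $z_i\in C_i$ before backtracking $(root,u)$; the ear $E_i$ and the path $P_i=z_i\xRightarrow[C_i]{}root$ together give the locally defined simple cycle $E_i\|P_i$ on which $\Pi_{\textsf{learnID}}$ runs, and the root's $\langle\mathsf{NewCycle}\rangle,C_{i+1}$ broadcast over $C_i$ (plus the $\Pi_{\textsf{learnID}}$ output held by the inner $E_i$-nodes) gives every vertex of $C_{i+1}$ its global representation; off-$C_{i+1}$ vertices reset to $\mathsf{init}$ exactly as in \Cref{lem:ED_C0}; strict edge containment and single-orientation are immediate from $C_{i+1}=C_i\|E_i\|P_i$. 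This matches the paper's argument step for step, and your closing observation about non-interference between off-$C_i$ DFS pulses and the simulator pulses on $C_i$ (a node dispatches by whether the incoming edge is in $\Cycle_v$) is the same implicit separation the paper relies on.
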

\begin{proof}
Note that the nodes basically perform a DFS search over the unused edges, i.e., over all the edges except edges that belong to~$C_i$. The $root$ initiates the DFS search (\Cref{line:EDear:startDFS}). 
Since the root has at least one edge which does not belong to~$C_i$, denote the edge to which the root sends a pulse in \Cref{line:EDear:startDFS} by~$(root,v)$.

We argue that the DFS, after passing the DFS-token over $(root,v)$, must reach a node that belongs to~$C_i$ before it backtracks the edge~$(root,v)$. Suppose not, then there is no path between $v$ and any node in~$C_i$ that does not go through $(root,v)$. 
Hence, $(root,v)$ is a bridge, yet this is a contradiction since~$G$ is 2-edge-connected.

Once the DFS reaches some node~$z$ on~$C_i$ in \Cref{line:EDear:DFSends}, the path $E_i$ is well defined: it is the new ear---the path the token has taken from $root$ to~$z$, disregarding any backtracked edge. Note that $E_i$ is not empty and  $Edge(E_i)\subseteq E\setminus Edge(C_i)$, i.e.,  $E_i$ contains at least one new edge that does not belong to~$C_i$. 
Additionally, the path~$P_i$ constructed in \Cref{l:define_pi} is well defined: 
it is the shortest path between~$z$ and $root$ that uses only the directed edges in~$Edge(C_i)$. We know at least one such path exists since $z$ and $root$ are both nodes on the cycle~$C_i$, and take the lexicographic-first such path if multiple shortest-paths exist. Since all nodes on~$C_i$ know $Edge(C_i)$ then $P_i$ is agreed upon all of them.
Hence $C_{i+1} = C_i \| E_i \| P_i$ is a well defined cycle from $root$ to $root$ for which $Edge(C_i) \subsetneq Edge(C_{i+1})$.
It is easy to verify that all the occurrences of any edge in~$Edge(C_{i+1})$ have the same orientation: edges in~$E_i$ appear only once in~$C_{i}$, and all the other edges obey their orientation in~$C_i$, which is unique by assumption.

We now show that at the end of the iteration, all the nodes on~$C_{i+1}$ hold a global representation of~$C_{i+1}$ while the rest of the nodes remain in state $init$, executing Algorithm~\ref{alg:ED_Cycle}.
Note that as the DFS progresses through~$E_i$, all the nodes on~$E_i$ define their $\Next$ and $\Prev$ variables according to the progress of the DFS-token, so that the path $E_i$ is locally defined. 
After the DFS-token reaches $z$ in  \Cref{line:EDear:DFSends}, this node communicates over $C_i$ to let all the nodes of $C_i$ know that an ear is closed and its endpoints are $root$ and~$z$. 
With this information, each node on~$C_i$
can tell whether it belongs to~$P_i$, 
and if it is on~$P_i$, it can tell its successor and predecessor nodes on~$P_i$. 
Thus, each such node locally sets its $\Next$ and $\Prev$ variables according to the path~$P_i$ in \Cref{line:EDear:Pi}. Note that the concatenation of the two paths, $E_i \| P_i$, yields a simple cycle, 
locally defined by all the nodes on it. 
Also note that if $E_i$ is a closed ear, when $z=root$, then $P_i=\emptyset$, yet $E_i \| P_i$ is still a simple cycle.

Next, the root sends a second pulse in  \Cref{line:EDear:secnodPulse} which propagates along~$E_i$ and 
triggers the nodes on~$E_i$, except for $root$ and $z$, to start executing $\Pi_{\textsf{learnID}}$ on the cycle locally defined by their $\Next$ and $\Prev$ variables (\Cref{line:EDcycle:learnID-node}). However, none of the (inner) nodes on~$E_i$ is the token holders in the execution of~$\Pi_{\textsf{learnID}}$, so they remain idle, in the sense that they do not request the token.

Once this second pulse reaches~$z$ in  \Cref{line:EDear:sendREADY}, it informs the nodes in~$C_i$ about this event by broadcasting $\langle \mathsf{ready}\rangle$ on~$C_i$.  
Note that at this point, the nodes on~$C_i$ are all idle. Specifically, no node wishes to obtain the token, so no pulses are being sent over $C_i$. It is safe to switch to communicating over the locally defined simple cycle $E_i \| P_i$. 
The nodes on that cycle now execute $\Pi_{\textsf{learnID}}$, after which all of them learn the global string representing~$C_{i+1}=C_i \| E_i \| P_i$. 
At this point, the nodes in $E_i$ except $root$ and $z$ switch to communicate over $C_{i+1}$. However, they are not the token holders so they keep being idle until the rest of the nodes switch to~$C_{i+1}$, without interfering with them.

After $\Pi_{\textsf{learnID}}$ terminates, all the nodes on $E_i \| P_i$ that were executing it know it has terminated. The root is the last to obtain the final message ``$\langle \mathsf{done} \rangle, C_{i+1}$'', so at the time when the root finishes 
$\Pi_{\textsf{learnID}}$, 
all other nodes on $C_i$ are set to communicate over $C_i$: the nodes on $P_i$ are done with $\Pi_{\textsf{learnID}}$, and set $\Next=\Prev=\bot$ in  \Cref{line:EDear:afterLearnID}, and now await the $\langle \mathsf{NewCycle}\rangle$ message \emph{on~$C_i$}. The rest of the nodes on~$C_i$ do not perform the \textbf{if} statement of \Cref{line:recAddEar} and thus are already awaiting the $\langle \mathsf{NewCycle}\rangle$ message.

Finally, the root broadcasts ``$\langle \mathsf{NewCycle}\rangle, C_{i+1}$'' over~$C_i$ which causes all the nodes in~$C_i$ to change their $\Cycle$ variable to~$C_{i+1}$. 
The root is the last to finish the procedure of the broadcast invocation, 
and by that time, all nodes of~$C_{i+1}$ are set to the cycle~$C_{i+1}$ and idle. The root is the token holder and is expected to send the next message on~$C_{i+1}$.
\end{proof}

The proof of Theorem~\ref{thm:ED} can now easily be obtained as a corollary of the above lemma. Multiple invocations of Algorithm~\ref{alg:ED_Ear} eventually yield a Robbins cycle $C_k$ with $Edge(C_k)=E$. 

\begin{proof}[Proof of Theorem~\ref{thm:ED}]
By Lemma~\ref{lem:ED_C0}, we know that after the first iteration of Algorithm~\ref{alg:ED_Cycle} we obtain a simple cycle~$C_0$. If $C_0$ consists of all the edges of~$G$, we are done---the nodes run $\Pi_{\textsf{NextRoot}}$ to find out that all edges are exhausted, and the algorithm terminates in \Cref{alg:next:terminate} of Algorithm~\ref{alg:next}.
Otherwise, we keep executing Algorithm~\ref{alg:ED_Ear} with a new root that has an adjacent unused edge. This is done by Algorithm~\ref{alg:next}: each node broadcasts whether or not it has unused edges adjacent to it, along with its ID. The current root arbitrarily picks one node with unused edges (\Cref{line:next:chooseNew}) and broadcasts this choice to all the nodes of~$C_i$. Since all the nodes possess a global representation of $C_i$, they can rotate it so that the new root becomes first in the global representation, which is consistent among all nodes and allows, for example, to determine $P_i$ in a consistent manner.
Then, Algorithm~\ref{alg:ED_Ear} is invoked again with this chosen node as the new root (\Cref{line:next:again}).
At this point, the statement of Lemma~\ref{lem:nextCi} holds: there is a cycle $C_i$ globally represented by all the nodes in it, there is a single root on~$C_i$ and it has adjacent unused edges, and all the nodes in $G \setminus C_i$ are in state $init$ in the execution of Algorithm~\ref{alg:ED_Cycle}.

By Lemma~\ref{lem:nextCi}, every iteration of the algorithm starting on~$C_i$ produces a cycle $C_{i+1}$ with at least one additional edge in~$E$ that does not appear in~$C_i$. 
It is easy to verify that, as long as some edge is still unused, at the end of constructing~$C_{i+1}$, i.e., after executing \Cref{line:Edear:next} but before the nodes re-iterate Algorithm~\ref{alg:ED_Ear} (\Cref{line:next:again} of Algorithm~\ref{alg:next}), the requirements for Lemma~\ref{lem:nextCi} hold with respect to the newly constructed cycle.
Thus, after at most $|E|-|Edge(C_0)|$ iterations of Algorithm~\ref{alg:ED_Ear}, the obtained cycle consists of all the edges~$E$ in~$G$. Since each edge has a single  orientation induced by the cycle (this clearly holds for the simple cycle~$C_0$, and inductively throughout the construction), and since all the nodes in~$G$ appear in the obtained cycle, it is a Robbins cycle.
\end{proof}

\begin{remark}
In order to communicate   over any intermediate (non-simple) cycle $C_i$ via \ALGthree, 
a single node-occurrence must be defined as the token holder. Furthermore, all other nodes must know the segment in~$C_i$ that contains that designated node-occurrence. Recall that in \ALGthree, each node maintains the invariant that the token resides in its segment~0 (see Section~\ref{sec:GeneralCycle}).
Our construction indeed provides the nodes with this information, 
which can be retrieved from the  global representation of~$C_i$. 
The first node-occurrence in~$C_i$ is defined to be the token holder, and each other node can re-number its occurrences along~$C_i$ in the natural manner, so it is consistent with having the token at its segment~0.
The above also holds also for the Robbins cycle~$C_k$ constructed in \Cref{thm:ED}. 
\end{remark}

\begin{remark}
\label{rem:localCycle}
\textbf{Avoiding Global Knowledge:}
In the above construction, the nodes obtain a global representation of the cycles $C_i$ they construct. We remark that this knowledge helps in simplifying the construction and reducing the length of the constructed cycle. 
However, it is not necessary, and a similar construction can be designed in which each node only holds local information about~$C_i$, i.e., only its clockwise and counterclockwise neighbors for each of its occurrences on~$C_i$. 
We provide here the main differences in such a construction.

\textbf{(1)} 
The global representation of $C_i$ is used to determine the path $P_i$ between the end points $(root,z)$ of the newly constructed ear~$E_i$. For the above construction to work, we need every node to know whether or not it belongs to~$P_i$; if it is part of~$P_i$, then it should appear one more time in $C_{i+1}$.
Now, suppose that every node~$v$ on~$C_i$  knows only a local representation of~$C_i$, namely, its $\Next$ and $\Prev$ neighbors for each occurrence of~$v$ on~$C_i$. 
The path $P_i$ can be determined in the following way. Once the endpoint~$z$ of the ear~$E_i$ broadcasts the message ``$\langle \mathsf{EarClosedAt}\rangle,z$'' over $C_i$, all the nodes in $C_i$ switch to a new state of ``detecting~$P_i$''. In this state, if a node-occurrence receives a clockwise pulse, it means that this occurrence belongs to~$P_i$. 
A counterclockwise pulse  signifies that the node-occurrence should quit this new state and continue executing \Cref{alg:ED_Ear}. In both cases, each pulse is propagated by the node-occurrence along the same direction it is received.

The nodes use the above mechanism as follows. Once the broadcast of ``$\langle \mathsf{EarClosedAt}\rangle,z$'' completes at~$z$, it sends a single clockwise pulse. This pulse propagates along~$C_i$ until it reaches a node-occurrence of the root; denote by~$P_i$ the path that this pulse has taken. The root \emph{does not} propagate the pulse, but instead sends a single counterclockwise pulse, which travels along the entire~$C_i$ until reaching that same root node-occurrence again. 
At this point, all the node-occurrences that belong to~$P_i$ have received a clockwise pulse, and all the node-occurrences on~$C_i$ have received a counterclockwise pulse, so all nodes can continue with the construction as above.
Note that this method also allows the nodes to track the segment in which the root lies, so that at the end of the construction they can infer the token segment at any step. 

\textbf{(2)}
The other place our construction uses the global representation is in $\pi_{\textsf{NextRoot}}$, where the root awaits to receive a message from every node on~$C_i$ to know whether the construction is done. However, without a global representation, the root does not know how many nodes are in~$C_i$ and thus it cannot know how many messages to expect. 
The remedy for this issue utilizes the token delivery method of \ALGthree. Namely, we replace \Cref{alg:next} with the following method. The root begins by broadcasting $\langle \textsf{check edges}\rangle$. 
Every node that still has an unexplored edge requests the token, and if it receives the token, it sends its ID. The first node to do so becomes the new root. 
If no such node exists, the token  propagates until it reaches the (old) root again. In this case, the root acquires the token and broadcasts $\langle \textsf{completed}\rangle$ to indicate that the Robbins construction is done.
\end{remark}

\begin{remark}\textbf{Coping with $KT_0$:}
\label{rem:KT0}
Algorithm~\ref{alg:learn_ids} and its noise-resilient form~$\Pi_{\textsf{learnID}}$ are $KT_1$ algorithms, in which each node knows the IDs of its neighbors. 
We remark that we can establish the learn-ID functionality, and thus the construction of the Robbins cycle, even in $KT_0$ networks, in which the IDs of the neighbors of a node are not known to it upon initialization. Note that Algorithm~\ref{alg:learn_ids} as stated cannot work in a $KT_0$ network since a node does not know which node comes immediately next to it in the cycle. In other words, after the root sends its ID as the first message, this message reaches \emph{all} other nodes and none of them knows they are the next one on~$C_0$.

We can solve this issue by relying on the order in which the token holder shifts in the underlying simulator. 
A $KT_0$ protocol for learning the IDs starts by instructing all the nodes to broadcast their ID. Thus, all nodes request to be token holders. Once the root sends its own ID and releases the token, its immediate \emph{counterclockwise} neighbor becomes the new token holder. Thus, the IDs are broadcast exactly in their counterclockwise order on~$C_0$. Once the root becomes a token holder again, this process is done. 

We also note that the simulator of \Cref{sec:GeneralCycle} only requires local knowledge of a Robbins cycle  and thus can run on $KT_0$~networks with the above pre-processing step.
Thus, Theorem~\ref{thm:main} holds for $KT_0$~networks as well.
\end{remark}

\subsection{The length of the obtained Robbins cycle}
\label{sec:robbins-cycle-length}
We complete this section with a crude analysis of the size of  Robbins cycle our construction obtains 
and the communication complexity of the construction.

\begin{lemma}
Let $G$ be a 2-edge-connected graph, and let $C$ be the Robbins cycle constructed by Theorem~\ref{thm:ED}. Then $|C|=O(n^3)$. Further, \ALGfour communicates $O(n^8\log n )$ pulses altogether.
\end{lemma}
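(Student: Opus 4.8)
The plan is to establish the two bounds separately, using the ear structure $C_0,C_1,\dots,C_k=C$ produced by \ALGfour. For the \emph{length} bound I would track how $|C_i|$ grows. By \Cref{lem:ED_C0}, $C_0$ is a simple cycle, so $|C_0|\le n$. For each $i\ge 0$ we have $C_{i+1}=C_i\,\|\,E_i\,\|\,P_i$, hence $|C_{i+1}|=|C_i|+|E_i|+|P_i|$. The ear $E_i$ is a simple path or simple cycle whose internal vertices do not lie on $C_i$ (the DFS backtracks the moment it revisits a non-$C_i$ vertex), so it touches at most $n$ distinct vertices and $|E_i|\le n$; and $P_i$, being a \emph{shortest} path in the digraph on $V$ induced by the orientation of $C_i$, is simple, so $|P_i|\le n-1$. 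Since each ear adds at least one edge of $E$ not already present and the added edge-sets are pairwise disjoint, the number of ears satisfies $k\le |E|-|Edge(C_0)|\le\binom{n}{2}=O(n^2)$. Summing, $|C|=|C_0|+\sum_{i=0}^{k-1}\bigl(|E_i|+|P_i|\bigr)\le n+\sum_i|Edge(E_i)|+k(n-1)\le n+|E|+O(n^2)\cdot n=O(n^3)$.

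For the \emph{communication} bound I would charge pulses to the $\le k=O(n^2)$ iterations of \ALGfour (the construction of $C_0$ itself is cheaper and subsumed). One iteration incurs: (i) the content-oblivious DFS that grows $E_i$, costing $O(|E|)=O(n^2)$ pulses, since each edge not on $C_i$ is traversed forward and backtracked $O(1)$ times; (ii) the $O(\log n)$-bit control broadcasts (\textsf{EarClosedAt}, \textsf{ready}) together with the invocation of $\Pi_{\textsf{NextRoot}}$, each simulated over $C_i$ (or $C_{i+1}$) via \Cref{thm:RobbinsCycle} at $\CCm(m)=O(|C_i|\,|m|+|C_i|\log n)=O(n^3\log n)$ pulses per short message; as $\Pi_{\textsf{NextRoot}}$ exchanges $O(n)$ such messages this totals $O(n^4\log n)$ pulses; (iii) the invocation of $\Pi_{\textsf{learnID}}$ over the \emph{simple} cycle $E_i\,\|\,P_i$ of length $\ell_i=|E_i|+|P_i|=O(n)$, whose messages are the ID-list (growing to $O(\ell_i\log n)$ bits over its $\ell_i$ hops) and the final broadcast of $C_{i+1}$ of $O(|C|\log n)=O(n^3\log n)$ bits, costing $\sum_{j\le\ell_i}O(\ell_i\cdot j\log n)+O(\ell_i\cdot n^3\log n)=O(\ell_i^3\log n+\ell_i n^3\log n)=O(n^4\log n)$ pulses; and (iv) the broadcast ``$\langle\mathsf{NewCycle}\rangle,C_{i+1}$'' of the $O(|C|\log n)=O(n^3\log n)$-bit cycle string over $C_i$, which by \Cref{thm:RobbinsCycle} costs $\CCm=O(|C_i|\cdot|C_{i+1}|\log n)=O(n^3\cdot n^3\log n)=O(n^6\log n)$ pulses. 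Step (iv) dominates, so each iteration spends $O(n^6\log n)$ pulses, and over $O(n^2)$ iterations the construction communicates $O(n^8\log n)$ pulses in total.

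The main obstacle — and essentially the only place care is needed — is bookkeeping in step (iv) and, more generally, keeping straight \emph{which} cycle each sub-protocol runs over and \emph{how many bits} each simulated message carries: the cycle string $C_{i+1}$ has $\Theta(|C|\log n)=\Theta(n^3\log n)$ bits and is broadcast over the length-$\Theta(n^3)$ cycle $C_i$, which is exactly what produces the dominant $\Theta(n^6\log n)$-pulse term per iteration, while everything else (the DFS, the $\Pi_{\textsf{NextRoot}}$ queries, and the ID-learning over the short cycle $E_i\,\|\,P_i$) is strictly lower order. I would also note, as a sanity check, that $|C|=O(n^3)$ is essentially tight for this construction, since each of the $\Theta(n^2)$ ears may force a detour $P_i$ of length $\Theta(n)$ through $C_i$; this matches the worst-case $|C|=O(|V|^3)$ asserted after \Cref{thm:main}.
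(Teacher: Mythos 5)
Your proof is correct and follows essentially the same approach as the paper: bound $|E_i|+|P_i|=O(n)$ and the number of iterations by $O(n^2)$ for the length, then charge $O(n^6\log n)$ pulses per iteration (DFS, learnID, NextRoot, NewCycle) and multiply by $O(n^2)$ iterations. The only cosmetic difference is which term you flag as dominant — you attribute the $O(n^6\log n)$ to the $\langle\mathsf{NewCycle}\rangle$ broadcast of the $\Theta(n^3\log n)$-bit cycle string over $C_i$, whereas the paper's displayed sum surfaces it through the coarser $|C_{i+1}|\cdot O(|C_{i+1}|\log n)$ accounting of $\Pi_{\mathsf{NextRoot}}$ — but both land on the same $O(n^8\log n)$ total.
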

\begin{proof}
Given some~$C_i$, it holds that $|C_{i+1}|= |C_i| + |E_i|+ |P_i|$. Since $P_i$ is a shortest (simple) path between two nodes, we have $|P_i| < n $, for all iterations~$i$. 
A bound on the worst-case length of the Robbins cycle is obtained by considering $O(n^2)$ iterations of \ALGfour, in each of which, adding only a single edge to the current~$C_i$. 
In this case, the cycle's length extends by~$O(n)$ in each of the~$O(n^2)$ iterations, yielding a total length of~$O(n^3)$.

Let us now bound the communication complexity. 
Consider the iteration where the nodes begin with~$C_i$ and construct~$C_{i+1}$.
The $\pi_{\mathsf{learnID}}$ algorithm communicates at most $\alpha_i=|E_{i}|+|P_i|$ messages, each of length at most~$O(\alpha_i\log n)$, except for the $\langle \mathsf{done} \rangle$ message whose length is~$O(|C_{i+1}|\log n)$.
The $\pi_{\mathsf{NextRoot}}$ algorithm communicates $|C_{i+1}|$ messages of length $O(\log n)$.
The rest of \Cref{alg:ED_Ear} makes $O(1)$ broadcasts of messages of length $O(\log n)$, and a single 
$\langle \mathsf{NewCycle} \rangle$ message whose length is $\alpha_i$.
Recall that by \Cref{lemma:g-cost-binary},
broadcasting a message of length $m$ over the cycle~$C_i$ takes 
$O( |C_i|(m+\log n))$ pulses.

Next, we argue that the DFS search within a single iteration of \ALGfour sends $O(n^2)$~pulses. To see that, recall that each edge is marked as used once the DFS-token passes through it. Additionally, the token might backtrack that edge, but no more pulses should be sent on that edge, leading to a total of at most $2|E|=O(n^2)$~pulses overall. 
The above does not hold for nodes that have backtracked all their edges and reset their state to~$init$, because they also unmark all their edges and might re-send pulses over edges that were already explored in this iteration.
We argue, however, that such nodes will never get the DFS-token again during that iteration. 
Indeed, assume towards contradiction that $u$ is a node that has reset its state during the current iteration and is \emph{the first} node that receives the DFS-token after resetting its state, say, over the edge $(u,v)$. 
Since $u$ has explored and backtracked all its edges, the DFS-token must have already passed through the edge~$(u,v)$ previously in this iteration. 
Therefore, it is marked $\mathsf{used}$ by~$v$, and it is impossible that $v$ sends a DFS-token over this edge, unless $v$ resets its state and unmarks all its edges. 
However, if $v$ reset its state and then sends a DFS-token over $(u,v)$, then $v$ must have received the DFS-token after resetting and before~$u$ did, contradicting our choice of~$u$.

We then conclude that the complexity of constructing the Robbins cycle in \ALGfour is bounded by
\[
\sum_{i}
\left [
\alpha_i \cdot O(\alpha_i \cdot \alpha_i\log n) + O(\alpha_i \cdot |C_i|\log n)
+ |C_{i+1}|\cdot O(|C_{i+1}|\log n)
+ O(|C_i| \log n)
+ O(n^2)
\right]
\]
pulses. Bounding $\alpha_i = O(n)$ and  $|C_i|,|C_{i+1}|=O(n^3)$, and the number of iterations $i \le |E| = O(n^2)$, we conclude that the complexity of constructing the Robbins cycle 
is $O(n^8\log n)$ pulses. 
\end{proof}
Note that the complexity can be reduced if we assume $KT_1$ networks and global representation of the constructed cycle.
Instead of terminating when all the adjacent edges of all the nodes were explored, we terminate when all nodes see that all their neighbors appear on the current~$C_i$. 
Each node can determine this information assuming $KT_1$ knowledge and a global representation of the cycle. 
This guarantees that at least one node is added at each iteration of \ALGfour, which reduces the number of iterations to~$i\le n$. This method leads to a Robbins cycle of total length~$O(n^2)$ and a communication complexity of $O(n^6\log n)$.

\section{Impossibility of resilient communication in fully-defective networks which are not 2-edge connected}
\label{sec:impossibility}

In this section we complement our simulator for 2-edge-connected graphs, with a proof showing that 2-edge connectivity is required for communication in fully-defective networks. The intuitive argument is that if the communication network is not 2-edge connected, then a bridge exists, and corrupting messages over that edge will lead to disconnecting the network, preventing the correct computation of any non-trivial function. Towards that goal we show the impossibility of asynchronous computation with \emph{two parties} in the presence of fully-defective channel noise. The two-party impossibility implies a general impossibility result for any network that contains a bridge since the two connected components over the two sides of the bridge can be reduced to the two parties case.

Formalizing the above intuition is slightly more subtle. For the impossibility to hold, we must require the protocol to give output (or explicitly terminate). To see why, 
consider the case of two parties (say, Alice and Bob) that hold the private inputs $x$ and~$y$, respectively, and need to compute some fixed known function~$f(x,y)$. 
Suppose that, instead of requiring the protocol to give a non-revocable output, we only require that there exists a time~$t$ after which both parties hold $f(x,y)$ and never change it again. 
Then, the following protocol succeeds in computing $f$ in the fully-defective two-party network
(stated for Alice; Bob's protocol is symmetric):
(a) Send  $x$ messages to Bob; (b) $\mathsf{count\gets 0}$; (c) Upon the reception of a message, $\mathsf{count \gets count + 1}$; update the output variable to $f(x,\mathsf{count})$.

\smallskip
Nevertheless, if we require the parties to terminate or to give an output,  no protocol for non-trivial functions~$f$ exists.

\begin{theorem}\label{thm:bridge-impossibility}
Consider a fully-defective network of two parties connected via a single noisy channel, and let $f(x,y)$ be any non-constant function.
Any two-party deterministic protocol that computes~$f$ and gives an output, is incorrect.
\end{theorem}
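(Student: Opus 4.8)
The plan is to argue by contradiction using an indistinguishability/adversary argument. Suppose some deterministic two-party protocol $\Pi$ computes the non-constant $f$ over the fully-defective channel and always gives an output. Since $f$ is non-constant, there exist inputs $(x_0,y_0)$ and $(x_1,y_1)$ on which $f$ differs; by changing one coordinate at a time we may assume the two inputs differ in only one party's coordinate, say Alice's, so that $f(x_0,y)\neq f(x_1,y)$ for some fixed $y$. The key observation is that in our noise model the adversary cannot delete or inject messages, but it can corrupt content arbitrarily; in particular, since the parties are content-oblivious-vulnerable, the adversary can make the message stream that Bob receives in the execution with input $x_0$ \emph{identical} to the stream he receives with input $x_1$, provided the two executions generate the same \emph{number and timing pattern} of messages on the channel in each direction. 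So the real content of the argument is to show the adversary can force this pattern-matching.

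First I would set up a scheduler/adversary that runs the two candidate executions $E_0$ (inputs $x_0,y$) and $E_1$ (inputs $x_1,y$) "in lockstep": whenever Alice sends her $i$-th message to Bob in $E_0$, the adversary delays it, and similarly in $E_1$; it then releases messages to Bob in $E_0$ and in $E_1$ in the same order, each carrying whatever (corrupted) content is needed so that Bob's received sequence is the same bitwise in both executions — this is possible because the adversary has full control of content and the channel never loses a message. Symmetrically it does the same for messages from Bob to Alice. The delicate point is that the number of messages each party sends may itself depend on what it receives, so one must argue that a consistent lockstep schedule exists: I would build it incrementally in event-time, at each step matching the next send event on a given side in $E_0$ with the next send event on that side in $E_1$, and if one execution runs out of messages while the other has not terminated, use the finiteness of the protocol together with the fact that a terminated party sends nothing to pad/align — this requires care and is the main obstacle (see below). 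Once the two executions are made indistinguishable to \emph{both} parties, each party must produce the same output in $E_0$ as in $E_1$ (its entire view — input plus received message sequence — is identical). But then Bob outputs the same value in $E_0$ and $E_1$, contradicting $f(x_0,y)\neq f(x_1,y)$ and the correctness of $\Pi$ in both (noiseless-equivalent) executions.

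The step I expect to be the main obstacle is establishing that a fully consistent lockstep schedule exists even though the number of messages sent can differ between $E_0$ and $E_1$ and can be affected by scheduling. The subtlety parallels why the theorem needs the termination/output requirement at all: the toy "count incoming messages" protocol shows that without termination the adversary's leverage is insufficient. I would handle this by exploiting exactly the output/termination hypothesis: once a party has written its (irrevocable) output, I have the adversary stop delivering to it and let it quiesce, so from that moment its message-sending behavior is frozen regardless of the other execution; and before that moment I match events one-to-one. Formally this is an induction on the merged event sequence of the two executions, showing that at every finite prefix the two views can be kept identical on both sides, and that the process terminates because $\Pi$ does. Finally I would note the reduction to the general statement: a non-2-edge-connected $G$ has a bridge $e=(a,b)$; corrupting traffic on $e$ reduces the two components to a two-party channel, and a non-constant $f(x,y)$ over $G$ restricts to a non-constant function of the inputs on the two sides, so Theorem~\ref{thm:bridge-impossibility} gives the impossibility claimed in Theorem~\ref{thm:main-inf-cycle}.
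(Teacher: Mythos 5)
Your high-level idea---that because the channel can corrupt every message arbitrarily, Bob's behavior can depend only on the \emph{count} of messages he receives and on his own input $y$, so he cannot distinguish $(x_0,y)$ from $(x_1,y)$---is exactly the idea the paper exploits. But your execution via a lockstep/indistinguishability adversary runs into the very obstacle you flag, and the fix you propose is not available in this model. You say: ``once a party has written its (irrevocable) output, I have the adversary stop delivering to it and let it quiesce.'' The fully-defective model allows the adversary to corrupt content and delay messages, but it explicitly \emph{cannot} delete them: every sent message is delivered after finite time. So you cannot suppress the extra messages on one side to force the two message counts to match, and if Alice sends (say) $3$ messages on $x_0$ but $5$ on $x_1$, no legal schedule makes Bob's views identical. (There is also a small slip when you say ``each party'' has an identical view: Alice's view includes her input, which differs between $E_0$ and $E_1$; only Bob's view can be made to coincide.)

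The paper's proof avoids the alignment problem entirely by a direct determinization of Bob. Normalizing every received message to the symbol $1$, Bob's protocol on fixed $y$ is just a sequence $\mathcal{B}_y=(\mathsf{action}_0,\mathsf{action}_1,\dots)$ indexed by how many messages he has received so far, independent of $x$ and of the schedule. Since the protocol must give an output, there is a first $\hat t<\infty$ at which $\mathsf{action}_{\hat t}$ commits to some value~$\hat r$. That $\hat t$ and $\hat r$ depend only on~$y$, so in \emph{any} execution with \emph{any} Alice input, Bob either never receives $\hat t$ messages (and never outputs) or outputs $\hat r$. With $f(x,y)\neq f(x',y)$, at least one of the two inputs is answered incorrectly. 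This sidesteps the need to synchronize two executions: there is no case analysis on whether the message counts match, because Bob's committed output is a single value determined by~$y$ alone. If you want to salvage your framing, you would have to add exactly this observation---that when the counts diverge, the execution with the shorter count already violates the ``must output'' requirement---at which point the lockstep scaffolding becomes unnecessary.
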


\begin{proof}
Let $f$ be some non-constant function and assume, without loss of generality, that its input and output domains are the natural numbers.
We can restrict the discussion to protocols in which each message sent by any of the parties contains a single `1' bit. This is without loss of generality, since we can equivalently consider the case where the adversary  corrupts the content of any message to be `1'.
Since the setting is asynchronous, a party can send zero or more messages as a function of its input and the number of messages it has received so far. A party is assumed to be idle between the time it sends a batch of messages until the time a new message arrives (which may trigger the transmission of new messages). In particular, once a new message arrives, the party immediately decides upon the number $k\ge 0$ of new messages to send, transmits them, and then goes back to being idle (or terminates).

Consider some inputs
$(x,y)$ and $(x',y)$ for which $f(x,y)\ne f(x',y)$, if no such inputs exist then a symmetric proof holds for a pair of inputs $(x,y)$ and $(x,y')$.
Fix Bob's input to~$y$. Note that once $y$ is fixed, Bob's actions  depend only on the number of messages he has received so far.
That is, we can completely describe Bob's protocol by the sequence %
$\mathcal{B}_y=(0, \mathsf{action_0}) (1, \mathsf{action_1})(2, \mathsf{action_2})\cdots$, where for any $t\ge0$, the item $(t, \mathsf{action_t})$ is to be interpreted as the action Bob performs after seeing 
$t$~messages from Alice. The value $\mathsf{action_t} \in \{ \mathsf{send}_k, \mathsf{SendAndOutput}_{k,r}\}_{k,r\ge0} $ describes the action Bob takes at that step of the protocol: $\mathsf{send}_k$ means that Bob transmits $k$~messages to Alice, and $\mathsf{SendAndOutput}_{k,r}$ means that Bob sends $k$ messages to Alice and sets its output register (irrevocably) to~$r$, i.e., Bob commits to the output $r$. Note that this is a complete characterization of Bob's protocol. 
We may assume that Bob continues to send and receive messages after setting its output, however, if in a later step Bob performs the action $\mathsf{SendAndOutput}_{k,r}$, then Bob will only send $k$ messages but the output register will not change.

Also note that Bob progresses sequentially. That is, Bob first performs $\mathsf{action_0}$, then $\mathsf{action_1}$, etc. 
Once Bob receives no further messages from Alice, he stops making any further progress. Thus, in order to give an  output, Bob must reach some $t\ge0$ where $\mathsf{action}_{t} = \mathsf{SendAndOutput}_{k,r}$.
Consider~$\mathcal{B}_y$ and set $\hat t=\argmin_t (\mathsf{action}_t \in \{\mathsf{SendAndOutput}_{k,r}\}_{k,r\ge0})$; we know that $\hat t<\infty$ and 
$\mathsf{action}_{\hat t}=\mathsf{SendAndOutput}_{\hat k,\hat r}$, with some $\hat k,\hat r\ge0$, or otherwise Bob never gives an output on input~$y$.
Finally, we note that Bob acts as described regardless of Alice's input: Bob advances sequentially until seeing $\hat t$ messages from Alice, after which it commits on the output~$\hat r$. 

Now consider an execution of the protocol on the input $(x,y)$. As described above, Bob commits on output when performing $\mathsf{action}_{\hat t}=\mathsf{SendAndOutput}_{\hat k,\hat r}$. If Bob does not give the correct output, we are done. 
Otherwise, $\hat r=f(x,y)$.
Next, consider the execution of the protocol on the input~$(x',y)$. If Bob receives less than $\hat t$ messages overall (and the protocol then reaches quiescence), Bob does not give an output. Otherwise, upon receiving the $\hat t$-th message, Bob outputs $\hat r=f(x,y)$. As both these options are incorrect for the input~$(x',y)$, we have reached a contradiction.
\end{proof}

\begin{acks}
This project has received funding from the European Union’s Horizon 2020 research and innovation programme under grant agreement no. 755839.
Ran Gelles is supported in part by the Israel Science Foundation (ISF) through Grant No.\@ 1078/17 and the United States-Israel Binational Science Foundation (BSF) through Grant No.\@ 2020277.
Gal Sela is supported in part by the Israel Science Foundation (ISF) through Grant No.\@ 1102/21.
\end{acks}

\newcommand{\etalchar}[1]{$^{#1}$}

\end{document}